\newif\iflongversion
\def\BibTeX{{\rm B\kern-.05em{\sc i\kern-.025em b}\kern-.08em
    T\kern-.1667em\lower.7ex\hbox{E}\kern-.125emX}}
\tikzset{
fnode/.style={rectangle, draw, minimum width=12mm, minimum height=8mm, anchor=center, align=center, node distance=7mm}, font=\sffamily
}
\theoremstyle{plain}
\newtheorem{thm}{Theorem}
\newtheorem{lem}[thm]{Lemma}
\newtheorem{prop}[thm]{Proposition}
\newtheorem{corollary}[thm]{Corollary}
\theoremstyle{definition}
\newtheorem{example}[thm]{Example}
\newtheorem{definition}[thm]{Definition}
\newtheorem{assumption}[thm]{Assumption}
\newtheorem{remark}[thm]{Remark}
\newcommand{\edefn}{\hfill$\triangle$}
\providecommand{\N}{\mathbb {N}}
\providecommand{\Z}{\mathbb {Z}}
\providecommand{\R}{\mathbb {R}}
\providecommand{\C}{\mathbb {C}}
\newcommand{\T}{\mathrm{T}}
\newcommand{\B}{\mathrm{B}}
\newcommand{\SPA}{\mathrm{SPA}}
\newcommand{\E}{\mathbb {E}}
\newcommand{\tr}{\operatorname{tr}}
\newcommand{\stat}{\operatornamewithlimits{stat}}
\newcommand{\RomanNumeralCaps}[1]{\MakeUppercase{\romannumeral #1}}
\newcommand{\sfn}{\mathsf{N}}
\newcommand{\fr}[1]{ f^{\mathrm{R}}_{ #1 } }
\newcommand{\fl}[1]{ f^{\mathrm{L}}_{ #1 } }
\newcommand{\fe}[1]{ f_{ #1 }}
\newcommand{\xr}[1]{ x^{\mathrm{R}}_{ #1 } }
\newcommand{\xl}[1]{ x^{\mathrm{L}}_{ #1 } }
\newcommand{\xlprime}[1]{ x^{\mathrm{L}'}_{ #1 } }
\newcommand{\tfr}[1]{\widetilde{f}^{\mathrm{R}}_{ #1 }}
\newcommand{\tfl}[1]{\widetilde{f}^{\mathrm{L}}_{ #1 }}
\newcommand{\tfe}[1]{\widetilde {f}_{ #1 }}
\newcommand{\txr}[1]{\widetilde{x}^{\mathrm{R}}_{ #1 }}
\newcommand{\txl}[1]{\widetilde{x}^{\mathrm{L}}_{ #1 }}
\newcommand{\hatfr}[1]{\widehat{f}^{\mathrm{R}}_{ #1 }}
\newcommand{\hatfl}[1]{\widehat{f}^{\mathrm{L}}_{ #1 }}
\newcommand{\hatx}[1]{\widehat{x}_{ #1 }}
\newcommand{\hatbfx}[1][]{\widehat{\mathbf{x}}^{ #1 }}
\newcommand{\bbeta}{\boldsymbol{\beta}}
\newcommand{\br}[1]{\bbeta^{\mathrm{R}}_{#1}}
\newcommand{\bl}[1]{\bbeta^{\mathrm{L}}_{#1}}
\newcommand{\be}[1]{\bbeta_{#1}}
\newcommand{\muleft}[2]{\overleftarrow{\boldsymbol{\mu}}^{\mathrm{#1}}_{#2}}
\newcommand{\muright}[2]{\overrightarrow{\boldsymbol{\mu}}^{\mathrm{#1}}_{#2}}
\newcommand{\perm}{\operatorname{perm}}
\newcommand{\btheta}{\boldsymbol{\theta}}
\newcommand{\defeq}{\triangleq}
\newcommand{\sfnDE}{\sfn_{\mathrm{DE}}}
\newcommand{\bepsilon}{\boldsymbol{\epsilon}}
\newcommand{\validc}{\mathcal {C}(\widehat{\mathcal N}_2)}
\begin{document}


\title{Complex-Valued-Matrix Permanents: SPA-based Approximations
 and Double-Cover Analysis
}


\author{\IEEEauthorblockN{Junda Zhou and Pascal O.~Vontobel}
 \IEEEauthorblockA{\textit{Department of Information Engineering} \\
  \textit{The Chinese University of Hong Kong}\\
  davidjdzhou@link.cuhk.edu.hk, pascal.vontobel@ieee.org}
}


\maketitle


\iflongversion
 \thispagestyle{plain}
 \pagestyle{plain}
\fi


\begin{abstract}

 Approximating the permanent of a complex-valued matrix is a fundamental
 problem with applications in Boson sampling and probabilistic inference. In
 this paper, we extend factor-graph-based methods for approximating the
 permanent of non-negative-real-valued matrices that are based on running the
 sum-product algorithm (SPA) on standard normal factor graphs, to
 factor-graph-based methods for approximating the permanent of complex-valued
 matrices that are based on running the SPA on double-edge normal factor
 graphs.

 On the algorithmic side, we investigate the behavior of the SPA, in particular
 how the SPA fixed points change when transitioning from real-valued to
 complex-valued matrix ensembles. On the analytical side, we use graph covers
 to analyze the Bethe approximation of the permanent, i.e., the approximation
 of the permanent that is obtained with the help of the SPA.

 This combined algorithmic and analytical perspective provides new insight into
 the structure of Bethe approximations in complex-valued problems and clarifies
 when such approximations remain meaningful beyond the non-negative-real-valued
 settings.
\end{abstract}


\begin{IEEEkeywords}
 Matrix permanent, normal factor graph, sum-product algorithm, graph cover.
\end{IEEEkeywords}


\section{Introduction}



\subsection{Permanents}


For any $n\in\Z_{\geq1}$, let $[n] \defeq \set{1, \dots, n}$ and let
$\mathcal{S}_n$ be the symmetric group of $[n]$, i.e., the set of all
permutations of $[n]$. For a complex-valued matrix
$\btheta=(\theta_{i,j})_{i,j}$ of size $n \times n$, the permanent of $\btheta$
is defined to be
\[\perm(\btheta) \defeq \sum_{\sigma\in \mathcal{S}_n}\prod_{i\in[n]}
 \theta_{i,\sigma(i)}.\] In this work, we study approximations of $\perm (\btheta)$ when the entries
   of $\btheta$ are i.i.d.~complex-valued random variables.


Note that already the calculation of the permanent of non-negative-real-valued
matrices is computationally intractable for large $n$, and also its efficient
approximation is challenging. (For more details, see, e.g., the discussion in
\cite{vontobelBethePermanentNonnegative2013}.) The approximate computation of
the permanent of complex-valued matrices is even more challenging, as the real
and imaginary parts of $\prod_{i\in[n]} \theta_{i,\sigma(i)}$ can be positive,
zero, or negative, and with that leading to ``constructive and destructive
interferences'' when approximately computing the real and imaginary parts of
$\sum_{\sigma\in \mathcal{S}_n}\prod_{i\in[n]} \theta_{i,\sigma(i)}$.


The permanent of a complex-valued matrix arises, for example in Boson sampling,
a candidate proposed \cite{aaronsonComputationalComplexityLinear2013} for
demonstrating quantum supremacy. In Boson sampling, the measurement outcome of
a linear optics system follows a distribution with an exponential-size support,
and each non-zero probability depends on the permanent of a complex matrix.
Based on conjectures on the complexity and distribution of permanents of random
complex matrices, it is proven \cite{aaronsonComputationalComplexityLinear2013}
that no classical algorithm can efficiently sample (exactly or approximately)
from the outcome distribution, unless certain complexity classes collapse.


Towards resolving the complexity conjectures, progress
\cite{aaronsonBosonSamplingFarUniform2013,eldarApproximatingPermanentRandom2018,
jiApproximatingPermanentRandom2021,nezamiPermanentRandomMatrices2021,chabaudResourcesBosonicQuantum2023}
has been made to identify the sources of computational difficulties and push
the limit of classical methods. In this light, we apply the techniques
developed for normal factor graphs
(NFGs)~\cite{forneyCodesGraphsNormal2001,loeligerIntroductionFactorGraphs2004}
for non-negative-real-valued matrices to complex-valued matrices.


\subsection{Factor graphs and the sum-product algorithm}


A factor graph
\cite{kschischangFactorGraphsSumproduct2001,loeligerIntroductionFactorGraphs2004}
represents the factorization structure of a multivariate function. More
precisely, for a so-called global function that can be written as the product
of so-called local functions, a factor graph shows the local functions and
their variable dependencies by drawing a function node for every local
function, a variable node for every variable, and an edge if a variable appears
as an argument of a local function. The partition sum (or partition function)
of a factor graph is the sum of the global function over all configurations,
i.e., all possible assignments of values to the variables.


For a factor graph whose partition sum is computationally intractable, various
techniques
\cite{wainwrightGraphicalModelsExponential2008,yedidiaConstructingFreeenergyApproximations2005}
were proposed for efficient approximations of the partition sum. In particular,
the so-called Bethe approximation is obtained by running the sum-product
algorithm (SPA), also known as loopy belief propagation, until convergence, and
doing certain calculations based on the SPA fixed-point messages. In the
context of a permanent for a non-negative-real-valued matrix $\btheta$, a
normal factor graph $\sfn(\btheta)$ can be formulated such that its partition
sum equals $\perm(\btheta)$ (see, e.g.,
\cite{vontobelBethePermanentNonnegative2013}). The resulting Bethe
approximation is known as the Bethe permanent $\perm_\B(\btheta)$, and has been
analyzed extensively through the lenses of variational analysis, the SPA, and
graph
covers~\cite{vontobelBethePermanentNonnegative2013,vontobelCountingGraphCovers2013,
ngDoublecoverbasedAnalysisBethe2022,huangDegreeMBetheSinkhorn2024,vontobelUnderstandingRatioPartition2025}.


Toward better understanding the Bethe approximation of a partition sum, it is
of particular interest to study the ratio $Z(\sfn) / Z_\B(\sfn)$, where
$Z(\sfn)$ represents the partition sum of a normal factor graph $\sfn$ and
where $Z_\B(\sfn)$ represents the Bethe approximation of $Z(\sfn)$. For various
classes of factor graphs it has been observed
that~\cite{vontobelUnderstandingRatioPartition2025}
\begin{align}
 \frac{Z(\sfn)}{Z_\B(\sfn)}
  & \approx
 \biggl( \frac{Z(\sfn)}{Z_{\B,2}(\sfn)} \biggr)^{\!\! 2},
 \label{eq:Bethe:approximation:ratio:1}
\end{align}
where $Z_{\B,2}(\sfn)$ is the so-called degree-$2$ Bethe approximation of
$Z(\sfn)$~\cite{vontobelCountingGraphCovers2013}. The expression
in~\eqref{eq:Bethe:approximation:ratio:1} is of interest because, while one
would like to characterize the ratio $Z(\sfn) / Z_\B(\sfn)$ on the left-hand
side of~\eqref{eq:Bethe:approximation:ratio:1}, this ratio is usually harder
to characterize than the ratio $Z(\sfn) / Z_{\B,2}(\sfn)$ appearing on the
right-hand side of~\eqref{eq:Bethe:approximation:ratio:1}.


\subsection{Double-edge normal factor graphs}


Double-edge normal factor graphs (DE-NFGs) were proposed in
\cite{caoDoubleedgeFactorGraphs2017} with the goal of having factor graphs
whose partition sum represents quantities of interest in quantum information
processing, and in particular toward computing Bethe approximation of the
partition sum with the help of the SPA. The latter was analyzed
in~\cite{huangCharacterizingBethePartition2020,
huangGraphcoverbasedCharacterizationBethe2025} for general setups.


Let $\btheta$ be some complex-valued matrix. In the present work, we consider
the DE-NFG $\sfnDE(\btheta)$ that is designed such that its partition sum
equals $|\perm(\btheta)|^2$. We study the SPA for $\sfnDE(\btheta)$ and analyze
the resulting Bethe approximations of $|\perm(\btheta)|^2$. In particular, we
want to investigate whether the expression
in~\eqref{eq:Bethe:approximation:ratio:1} also holds for this class of DE-NFGs.


\subsection{Notation and supplementary materials}


We use $\mathbf 0_n $ and $\mathbf 1_n$ to denote the all-zero and all-one
matrix of size $n\times n$, respectively. We use iota ``$\iota$'' for the
imaginary unit. The set of all non-negative integers is denoted as $\N$. For a
statement $P$, we use the Iverson bracket to indicate its truth, i.e.,
$[P]\defeq 1$ if $P$ is true, $[P]\defeq 0$ otherwise.

\iflongversion Supplementary materials and proofs are given in the appendices.
\else Supplementary materials and proofs can be found in the extended version
 \cite{zhou2026}.  \fi


\section{NFG and DE-NFG Representations}
\label{sec:preliminary}


In this paper we will use a variant of factor graphs called normal factor
graphs (NFGs). In NFGs, variables are associated with edges and half-edges.
While at first this might seem limiting, the global function of any factor
graph can be reformulated such that all variable nodes have degree two or one,
and so these variable nodes can effectively be omitted when drawing the factor
graph. Similarly, we will use a variant of double-edge factor graphs called
double-edge normal factor graphs (DE-NFGs).


\begin{figure}[t]
 \begin{minipage}{.23\textwidth}
\centering
\begin{tikzpicture}[on grid, auto]
\tikzstyle{state}=[shape=rectangle, draw, minimum size=4mm]
\pgfmathsetmacro{\hdis}{2.4} 
\pgfmathsetmacro{\vdis}{0.8} 
    \foreach \i in {1,2,3,4,5}{
        \node[state, label=left:$\fl{\i}$] (f\i) at (0,-\i*\vdis) {};
        \node[state, label=right:$\fr{\i}$] (g\i) at (\hdis,-\i*\vdis) {};
    }
    \foreach \i in {1,2,3,4,5}{
        \foreach \j in {1,2,3,4,5}{                
                 \draw[] (f\i) -- (g\j) ;
        }
    }
\end{tikzpicture}
\end{minipage}\hfill
\begin{minipage}{.23\textwidth}
\centering
\begin{tikzpicture}[on grid, auto]
\tikzstyle{state}=[shape=rectangle, draw, minimum size=4mm]
\pgfmathsetmacro{\hdis}{2.4} 
\pgfmathsetmacro{\vdis}{1} 
\node[state, label=left:$\fl{i}$] (f) at (0,-2*\vdis) {};
\node[state, label=right:$\fr{i}$] (g) at (\hdis,0) {};
\node[state, minimum size=3mm] (eq) at (0.5*\hdis, -1*\vdis) {$=$};
\node[state, minimum size=3mm, label=above:$w_{i,j}$] (w) at (0.5*\hdis, -.5*\vdis) {};
\draw[] (f) -- (eq) node [pos=0.5, above, sloped] {$\xl{i,j}$};
\draw[] (eq) -- (g) node [pos=0.5, below, sloped] {$\xr{i,j}$};
\draw[] (w) -- (eq);

\draw[magenta, thick, dashed] ($(w.north west)+(-0.2,0.5)$) rectangle ($(eq.south east)+(0.1,-0.2)$) node [anchor=north] {$\fe{i,j}$};

\foreach \i in {0.33, 0.66}{
    \draw[] (f) -- ++ (0.25*\hdis, -\i*\vdis/2);
    \draw[dashed] ($(f)+(0.25*\hdis, -\i*\vdis/2)$) -- ++ (0.25*\hdis, -\i*\vdis/2);
    \draw[] (g) -- ++ (-0.25*\hdis, \i*\vdis/2);
    \draw[dashed] ($(g)+(-0.25*\hdis, \i*\vdis/2)$) -- ++ (-0.25*\hdis, \i*\vdis/2);
}
\end{tikzpicture}
\end{minipage}
 \caption{NFG for $\perm(\btheta)$ when $n=5$.
  Left: $\sfn(\btheta)$ with edge weights $\fe{i,j}$ omitted.
  Right: Zoomed in on parts of $\sfn(\btheta)$.}\label{fig:NFGperm}
\end{figure}


We start by recalling the definition of the NFGs and DE-NFGs whose partition
sum are related to the matrix permanent
\cite{vontobelBethePermanentNonnegative2013,caoDoubleedgeFactorGraphs2017}. Fix
some positive integer $n$ and let $\btheta$ be a non-negative-real-valued
matrix or complex-valued matrix of size $n\times n$. The NFG $\sfn(\btheta)$
used in this paper (see Fig.~\ref{fig:NFGperm}) is a slightly modified version
of the NFG used in
\cite{vontobelBethePermanentNonnegative2013}:\footnote{In~\cite{vontobelBethePermanentNonnegative2013},
the matrix entries of $\btheta$ appeared in the definition of the ``left check
node'' function and the ``right check node'' functions, whereas here they
appear in the definition of the ``edge weight'' functions.}
\begin{itemize}

 \item The NFG $\sfn({\btheta})$ is based on a complete bipartite graph with
 $2n$ vertices.

 \item For every $(i,j) \in [n]^2$, let $\xl{i,j}$ be the variable associated
 with the edge connecting $\fl{i}$ with $\fe{i,j}$, and let $\xr{i,j}$ be the
 variable associated with the edge connecting $\fr{i}$ with $\fe{i,j}$. All
 variables take values in the set $\mathcal {X} \defeq \set{0,1}$.

 \item For every $i \in [n]$, let the ``left check node'' function be
 \begin{align*}
  \hspace{-1.5em}
  \fl{i}\bigl(\{\xl{i,j}\}_{j\in[n]}\bigr) \defeq
  \begin{cases}
  1 & \text{exactly one of } \{\xl{i,j}\}_{j\in[n]} \text{ equals } 1 \\
  0 & \text{otherwise}\end{cases}.
 \end{align*}
 For every $j \in [n]$, let the ``right check node'' function be
 \begin{align*}
  \hspace{-1.5em}
  \fr{j}\bigl(\{\xr{i,j}\}_{i\in[n]}\bigr) \defeq
  \begin{cases}
  1 & \text{exactly one of } \{\xr{i,j}\}_{i\in[n]} \text{ equals } 1 \\
  0 & \text{otherwise}\end{cases}.
 \end{align*}
 For every $(i,j) \in [n]^2$, let the ``edge weight'' function be
 \begin{align*}
  \fe{i,j}\bigl(\xl{i,j},\xr{i,j}\bigr)\defeq \left[\xl{i,j}=\xr{i,j}\right] \cdot w_{i,j} (\xl{i,j}),
 \end{align*}
 where $w_{i,j}(0) \defeq 1$ and $w_{i,j}(1) \defeq \theta_{i,j}$.

 \item The global function is defined to be
 \begin{align*}
  g & \bigl(\{\xl{i,j},\xr{i,j}\}_{i,j\in[n]^2}\bigr)
  \defeq
  \left(\prod_{i \in [n]}\fl{i}\bigl(\{\xl{i,j}\}_{j\in[n]}\bigr)\right) \\
    &
  \cdot\left(\prod_{i,j \in [n]^2}\fe{i,j}\bigl(\xl{i,j},\xr{i,j}\bigr)\right)
  \cdot\left(\prod_{j \in [n]}\fr{j}\bigl(\{\xr{i,j}\}_{i\in[n]}\bigr)\right).
 \end{align*}
 \item The partition sum (or partition function) is defined to be
 \begin{align*}
  Z(\sfn(\btheta)) & \defeq\sum_{\{\xl{i,j},\xr{i,j}\}_{i,j\in[n]^2}} g(\{\xl{i,j},\xr{i,j}\}_{i,j\in[n]^2}).
 \end{align*}
\end{itemize}
One can verify that $Z\bigl( \sfn(\btheta) \bigr) = \perm(\btheta)$.


Let $\btheta$ be as above and let $\bepsilon$ be a non-negative-real-valued
matrix of size $n\times n$. The DE-NFG $\sfnDE(\btheta,\boldsymbol\epsilon)$
representation \cite{caoDoubleedgeFactorGraphs2017} (see
Fig.~\ref{fig:DENFGperm}) extends $\sfn(\btheta)$ by replacing the binary
variables $\xl{i,j}$ and $\xr{i,j}$ with $\txl{i,j}$ and $\txr{i,j}$,
respectively, which take values in $\mathcal X^2$. We write
$\txl{i,j}=(\xl{i,j},\xlprime{i,j})$, and similarly for $\txr{i,j}$. The
function nodes in $\sfnDE(\btheta,\bepsilon)$ are defined as follows.
\begin{itemize}

 \item For every $i \in [n]$, let the ``left check node'' function be
 \begin{align*}
  \tfl{i}\bigl(\{\txl{i,j}\}_{j\in[n]}\bigr) \defeq \fl{i}\bigl(\{ \xl{i,j} \}_{j \in [n]}\bigr)\cdot \fl{i}\bigl(\{ x^{\mathrm{L}'}_{i,j} \}_{j \in [n]}\bigr).
 \end{align*}

 \item For every $j \in [n]$, let the ``right check node'' function be
 \begin{align*}
  \tfr{j}\bigl(\{\txr{i,j}\}_{i\in[n]}\bigr) \defeq \fr{j}\bigl(\{ \xr{i,j} \}_{i \in [n]}\bigr)\cdot \fr{j} \bigl(\{ x^{\mathrm{R}'}_{i,j} \}_{i \in [n]}\bigr).
 \end{align*}

 \item For every $(i,j) \in [n]^2$, let the ``edge weight'' function be
 \begin{align*}
  \tfe{i,j}\bigl(\txl{i,j},\txr{i,j}\bigr)\defeq  \left[\txl{i,j}=\txr{i,j}\right]\cdot \mathbf W_{i,j}\bigl(\txl{i,j}\bigr),
 \end{align*}
 where
 $\mathbf W_{i,j}
  \defeq
  \Bigl[
   \begin{smallmatrix}
    1            & \overline{\theta_{i,j}} \\
    \theta_{i,j} & |\theta_{i,j}|^2+\epsilon_{i,j}
   \end{smallmatrix}
   \Bigr]$,
 whose row and column entries are indexed by $\xl{i,j}$ and
 $\xlprime{i,j}$, respectively.
\end{itemize}


\begin{figure}[t]
 \begin{minipage}{.23\textwidth}
\centering
\begin{tikzpicture}[on grid, auto]
\tikzstyle{state}=[shape=rectangle, draw, minimum size=4mm]
\pgfmathsetmacro{\hdis}{2.4} 
\pgfmathsetmacro{\vdis}{0.8} 
    \foreach \i in {1,2,3,4,5}{
        \node[state, label=left:$\tfl{\i}$] (f\i) at (0,-\i*\vdis) {};
        \node[state, label=right:$\tfr{\i}$] (g\i) at (\hdis,-\i*\vdis) {};
    }
    \foreach \i in {1,2,3,4,5}{
        \foreach \j in {1,2,3,4,5}{                
                 \draw[double] (f\i) -- (g\j) ;
        }
    }
\end{tikzpicture}
\end{minipage}\hfill
\begin{minipage}{.23\textwidth}
\centering
\begin{tikzpicture}[on grid, auto]
\tikzstyle{state}=[shape=rectangle, draw, minimum size=4mm]
\pgfmathsetmacro{\hdis}{2.4} 
\pgfmathsetmacro{\vdis}{1} 
\node[state, label=left:$\tfl{i}$] (f) at (0,-2*\vdis) {};
\node[state, label=right:$\tfr{j}$] (g) at (\hdis,0) {};
\node[state, minimum size=3mm] (eq) at (0.5*\hdis, -1*\vdis) {$=$};
\node[state, minimum size=3mm, label=above:$\mathbf W_{i,j}$] (w) at (0.5*\hdis, -.5*\vdis) {};
\draw[double] (f) -- (eq) node [pos=0.5, above, sloped] {$\txl{i,j}$};
\draw[double] (eq) -- (g) node [pos=0.5, below, sloped] {$\txr{i,j}$};
\draw[double] (w) -- (eq);

\draw[magenta, thick, dashed] ($(w.north west)+(-0.3,0.55)$) rectangle ($(eq.south east)+(0.2,-0.2)$) node [anchor=north] {$\tfe{i,j}$};

\foreach \i in {0.33, 0.66}{
    \draw[double] (f) -- ++ (0.25*\hdis, -\i*\vdis/2);
    \draw[double, dashed] ($(f)+(0.25*\hdis, -\i*\vdis/2)$) -- ++ (0.25*\hdis, -\i*\vdis/2);
    \draw[double] (g) -- ++ (-0.25*\hdis, \i*\vdis/2);
    \draw[double, dashed] ($(g)+(-0.25*\hdis, \i*\vdis/2)$) -- ++ (-0.25*\hdis, \i*\vdis/2);
}
\end{tikzpicture}
\end{minipage}
 \caption{DE-NFG for $|\perm(\btheta)|^2$ when $n=5$.
  Left: $\sfnDE(\btheta)$ with edge weights $\tfe{i,j}$ omitted.
  Right: Zoomed in on parts of $\sfnDE(\btheta)$.}\label{fig:DENFGperm}
\end{figure}


The global function and partition sum for $\sfnDE(\btheta, \bepsilon)$ are
defined in the same way as for $\sfn(\btheta)$. Notice that when
$\bepsilon=\mathbf{0}_{n}$, the matrix $\mathbf W_{i,j}$ has rank one, and we
then denote $\sfnDE(\btheta, \bepsilon)$ by $\sfnDE(\btheta)$. One can verify
that
\begin{align*}
 Z\bigl( \sfnDE(\btheta) \bigr)
  & = \bigl| \perm(\btheta) \bigr|^2,
 \quad
 Z\bigl(\sfnDE(\mathbf{0}_{n},\bepsilon)\bigr)
 = \perm(\bepsilon).
\end{align*}


\section{Sum-Product Algorithm on DE-NFG}\label{sec:SPA}


Finding the log partition sum for a general NFG (with function nodes taking on
non-negative real values) is equivalent to minimizing the Gibbs free energy
function
\cite{yedidiaConstructingFreeenergyApproximations2005,wainwrightGraphicalModelsExponential2008}.
A computationally tractable relaxation is to replace the Gibbs free energy
function with the Bethe free energy function and enlarge the feasible region.
One efficient algorithm to minimize the Bethe free energy function is the
sum-product algorithm (SPA) \cite{kschischangFactorGraphsSumproduct2001}. This
line of method has been analyzed in detail for the specific NFG $\sfn(\btheta)$
in \cite{vontobelBethePermanentNonnegative2013} and shown to work very well.


Most of the relevant concepts involved in the SPA extend naturally from NFGs to
DE-NFGs (see \iflongversion Appendices \ref{appx:Bethe:details} \&
\ref{appx:spa:details} \else \cite{zhou2026} \fi for $\sfnDE(\btheta)$ and
\cite{huangGraphcoverbasedCharacterizationBethe2025} for general DE-NFGs).
However, generalizing the (primal) Bethe free energy function from NFGs to
DE-NFGs is challenging, as the latter involves complex-valued functions. In
\iflongversion Appendix \ref{appx:Bethe:details}, \else \cite{zhou2026}, \fi we
propose a generalization of the Bethe free energy function for
$\sfnDE(\btheta)$. Although its stationarity conditions can be obtained with
the help of Wirtinger calculus (see e.g.,
\cite{kreutz-delgadoComplexGradientOperator2009}), it does not have the desired
convexity properties. This is similar to the well-known stationary-action
principle from physics (sometimes imprecisely called the minimum-action
principle).


Various details of running the SPA on $\sfnDE(\btheta)$ and
$\sfnDE(\btheta,\boldsymbol\epsilon)$, such as message update rules, are
summarized in \iflongversion Appendix~\ref{appx:spa:details}. \else
\cite{zhou2026}. \fi We use a flooding schedule for message updates, and for
obtaining convergence of the SPA, a mid-point damping rule is applied, i.e.,
\[
 \boldsymbol{\mu}^\mathrm{outgoing}_{t}=\frac{1}{2}\bigl(f_{\mathrm{message\
    update}}(\set{\boldsymbol{\mu}^\mathrm{incoming}_{t-1}})+\boldsymbol{\mu}^\mathrm{outgoing}_{t-1}\bigr)
 .\] Since the edge weights $\mathbf W_{i,j}$ are rank-$1$ in $\sfnDE(\btheta)$,
   the DE-NFG $\sfnDE(\btheta)$ is essentially the disjoint union of two NFGs,
   $\sfn(\btheta)$ and $\sfn(\overline{\btheta})$. To counteract, the SPA
   messages are initialized to be randomly generated rank-$2$ positive
   semi-definite (PSD) matrices (see \iflongversion Appendix
   \ref{appx:spa:details}). \else \cite{zhou2026}). \fi


We study the behavior of the SPA for different ensembles of matrices $\btheta$,
where the entries of $\btheta$ are i.i.d.\ according to some distribution (see
\iflongversion Appendix \ref{appx:random:matrix} \else \cite{zhou2026} \fi for
details). Importantly, the distributions are parameterized by an angle $\alpha
\in [0,\pi]$ and have support on the unbounded sector
\begin{align*}
 S_\alpha
  & \defeq
 \bigl\{
 r e^{\iota \phi}
 \bigm|
 r \ge 0,\ \phi \in [-\alpha,\alpha]
 \bigr\}.
\end{align*}
Fig.~\ref{fig:DENFG:SPA:alpha} shows the numerical results for
$Z \defeq Z(\sfnDE(\btheta))$ and the SPA fixed-point-based Bethe
approximation $Z_{\B,\SPA} \defeq Z_{\B,\SPA}(\sfnDE(\btheta))$. We choose
$21$ values of $\alpha$, equally spaced in $[0,\pi]$, and for each value of
$\alpha$, generate $1000$ matrices $\btheta$ of size $n\times n$.  In
Fig.~\ref{fig:DENFG:SPA:alpha} we plot the empirical values for
\begin{align*}
 \ln
 \biggl( \frac{\E[Z]}{\E[Z_{\B,\SPA}]} \biggr),
 \quad
 \ln
 \biggl(
 \E\biggl[\frac{Z}{Z_{\B,\SPA}}\biggr]
 \biggr),
 \quad
 \E\biggl[\ln\biggl( \frac{Z}{Z_{\B,\SPA}} \biggr) \biggr]
\end{align*}
as functions of $\alpha$ for $n=10$. The horizontal dashed lines are
conjectured asymptotic behavior of
$\ln\bigl( \E[Z] / \E[Z_{\B,\SPA}] \bigr)$ at $\alpha = 0$ and
$\pi$. Although $Z / Z_{\B,\SPA}$ or
$\ln\bigl( Z / Z_{\B,\SPA}\bigr)$ is directly related to the performance of Bethe/SPA
approximation of $|\perm(\btheta)|^2$, it is challenging to characterize their distribution
or statistics directly, and we leave it for future work. In this work, we focus on
$\E[Z] / \E[Z_{\B,\SPA}]$ instead.
Supplementary numerical results can be found in
\iflongversion Appendix~\ref{appx:spa:details}. \else \cite{zhou2026}.\fi


\begin{figure}[t]
 \centering
 \includegraphics[width=\linewidth]{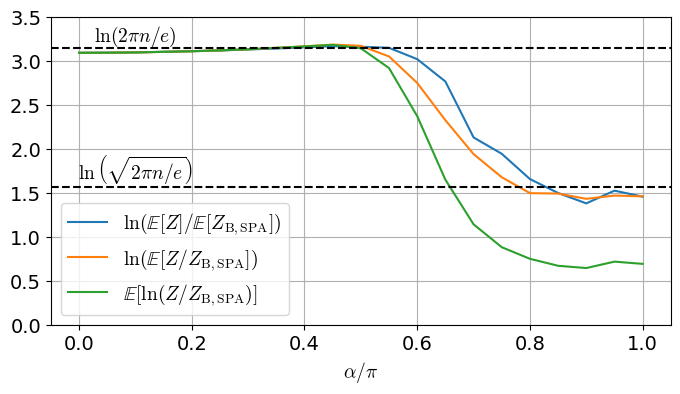}
 \caption{Numerical results of SPA on DE-NFG $\sfnDE(\btheta)$ for
  $n = 10$.}
 \label{fig:DENFG:SPA:alpha}
\end{figure}


Interestingly, even though the SPA messages are initialized to be rank-$2$ PSD
matrices, two types of degeneracies are observed for fixed-point messages and
beliefs. The first type of degeneracy is when $\alpha$ is below a certain
threshold (which depends on $n$): the fixed-point messages are always rank-$1$
matrices. As a result, the ratio $\E[Z] / \E[Z_{\B,\SPA}]$ asymptotically
behave like $\bigl( \E[\perm(\btheta)] / \E[\perm_\B(\btheta)] \big|_{\alpha=0}
\bigr)^2$, and can be studied with the techniques
in~\cite{ngDoublecoverbasedAnalysisBethe2022}.\footnote{The results in
\cite{ngDoublecoverbasedAnalysisBethe2022} can be extended in a straightforward
manner from matrices with non-negative-real-valued entries to matrices with
complex-valued entries.}


The second type of degeneracy is when $\alpha$ is above a certain threshold
(which depends on $n$): in particular, at $\alpha=\pi$, the entries of
$\btheta$ are generated i.i.d. according to the standard complex Gaussian
distribution. In this case, with high probability, the SPA fixed-point messages
and fixed-point beliefs are diagonal matrices. Furthermore, the fixed-point
Bethe free energy $Z_{\B,\SPA}$ is actually approximating
$\perm(|\btheta|^2)=Z(\sfnDE(\mathbf 0_n,|\btheta|^2))$ where
$|\btheta|^2\defeq(|\theta_{i,j}|^2)_{i,j}$, instead of
$|\perm(\btheta)|^2=Z(\sfnDE(\btheta,\mathbf 0_n))$. Although the ratio $\E[Z]
/ \E[Z_{\B,\SPA}]$ asymptotically behaves like $\E[\perm(\btheta)] /
\E[\perm_\B(\btheta)] \big|_{\alpha=0}$, the distribution of $Z / Z_{\B,\SPA}$
becomes more spread out.


\section {Double Covers of NFGs and DE-NFGs}
\label{sec:2cover}


Fix some NFG $\sfn$. For every positive integer $M$, the degree-$M$ Bethe
partition sum is defined to be
\begin{align*}
 Z_{\B,M}(\sfn)
  & \defeq
 \sqrt[M]{\Braket{Z\bigl(\widetilde{\sfn}(\btheta)\bigr)}_{\widetilde{\sfn}\in\widetilde{\mathcal N}_M}},
\end{align*}
where the arithmetic average
$\Braket{\,\cdot\,}_{\widetilde{\sfn}\in\widetilde{\mathcal N}_M}$ is over all
$M$-covers of $\sfn$. (See~\cite{vontobelCountingGraphCovers2013} for the
details of this definition.) In particular, it was shown that
$Z_{\B,M}(\sfn) \big|_{M = 1} = Z(\sfn)$ and
$\lim\sup_{M \to \infty} Z_{\B,M}(\sfn) =
 Z_{\B}(\sfn)$. In\cite{huangCharacterizingBethePartition2020,
 huangGraphcoverbasedCharacterizationBethe2025}, these definitions, and to some
extent, these results, have been extended from NFGs to DE-NFGs.


Because of the observation in~\eqref{eq:Bethe:approximation:ratio:1}, a
particularly important role is played by the case $M = 2$, i.e., double covers
of $\sfn$~\cite{vontobelUnderstandingRatioPartition2025}. In particular, in the
context of NFGs for permanents, the
paper~\cite{ngDoublecoverbasedAnalysisBethe2022} analyzed double covers of
$\sfn(\btheta)$ and showed that
\begin{align*}
 \bigl(\perm_{\B,2}(\btheta)\bigr)^2
  & = \sum_{\sigma_1,\sigma_2\in \mathcal{S}_n}
 \Bigl(
 \prod_{i\in[n]}
 \theta_{i,\sigma_1(i)}
 \theta_{i,\sigma_2(i)}
 \Bigr)
 \cdot
 2^{-c(\sigma_1\sigma_2^{-1})}
\end{align*}
for non-negative-real-valued matrices $\btheta$. Here, $c(\sigma)$ counts the
number of cycles of length at least two in the cycle representation of
$\sigma \in \mathcal{S}_n$.


In this paper, we generalize this result by considering double covers of the
DE-NFG $\sfnDE(\btheta)$ (see \iflongversion Appendix~\ref{appx:double:cover}
\else \cite{zhou2026} \fi for definitions and examples).
Fig.~\ref{fig:DENFG:ZB2:alpha} shows the numerical results for $Z \defeq
Z\bigl( \sfnDE(\btheta) \bigr)$ and $Z_{\B,2} \defeq Z_{\B,2}\bigl(
\sfnDE(\btheta) \bigr)$ over the same ensembles of random matrices considered
in the previous section. In Fig.~\ref{fig:DENFG:ZB2:alpha} we plot the
empirical values for
\begin{align*}
 \ln\biggl( \sqrt{\frac{\E[Z^2]}{\E[Z^2_{\B,2}]}} \biggr),
 \quad
 \ln\biggl( \E\biggl[\frac{Z}{Z_{\B,2}}\biggr] \biggr),
 \quad
 \E\biggl[\ln\biggl( \frac{Z}{Z_{\B,2}} \biggr) \biggr]
\end{align*}
as functions of $\alpha$ for $n=4$. For
$\ln\bigl( \sqrt{ \E[Z^2] / \E[Z^2_{\B,2}]} \bigr)$, the
horizontal dashed lines are conjectured asymptotics at $\alpha=0$ and $\pi$,
and the blue dashed line is the analytic value at each $\alpha$ for $n=4$. We
see that the empirical and analytic results match well.


\begin{figure}[t]
 \centering
 \vspace*{0.05in}
 \includegraphics[width=1\linewidth]{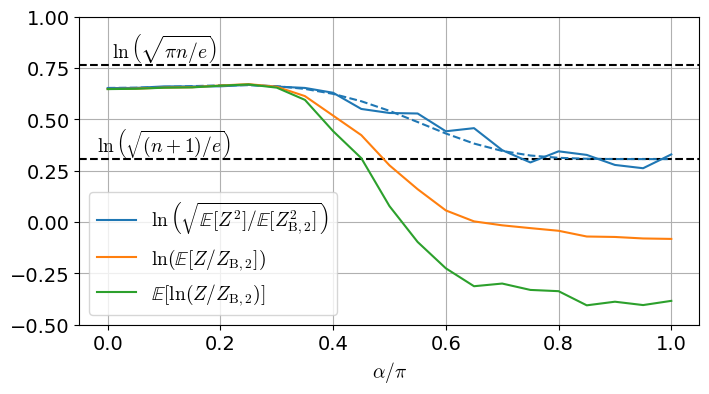}
 \caption{Numerical results of double cover analysis on DE-NFG $\sfnDE(\btheta)$ for $n=4$.}
 \label{fig:DENFG:ZB2:alpha}
\end{figure}


\section{Double Cover Analysis:\\Combinatorics and
  Asymptotics}\label{sec:2cover:counting}


The combinatorics result for $\E[Z^2] / \E[Z^2_{\B,2}]$ relies mainly on the
technique of graph-cover counting developed in
\cite{vontobelAnalysisDoubleCovers2016}. We first describe the general
strategy, which is applicable for any entry-wise i.i.d.~random matrices
$\btheta$. Then we detail two special cases $\btheta$ that we are able to
analyze: the all-one matrix and random complex matrices with zero-mean. The
corresponding asymptotics are consequently obtained by techniques in
\cite{flajoletAnalyticCombinatorics2009}.


The first step of the general strategy is the following decomposition of
$2$-covers of $\sfnDE(\btheta)$.


\begin{prop}\label{prop:double:cover:decomp}
 Any double cover of $\sfnDE(\btheta)$, $\widetilde{\sfn}_{\mathrm{DE}}(\btheta)$, is the
 disjoint union of $\widetilde {\sfn}(\btheta)$ and
 $\widetilde {\sfn}(\overline {\btheta})$ for some unique
 $\widetilde{\sfn}\in \widetilde {\mathcal N}_2$, and hence the partition sum
 is
 $Z\bigl(\widetilde{\sfn}_{\mathrm{DE}}(\btheta)\bigr)=
  \big|Z\bigl(\widetilde{\sfn}(\btheta)\bigr)\big|^2$.
 As a consequence, we have
 \[Z_{\B,2}\bigl(\sfnDE(\btheta)\bigr)=
  \sqrt{\Big\langle\big| Z \bigl(\widetilde{\sfn}(\btheta)\bigr)\big|^2\Big\rangle
   _{\widetilde{\sfn}\in \widetilde {\mathcal N}_2}}.\]
\end{prop}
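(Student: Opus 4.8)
The plan is to reduce everything to the rank-one structure of $\mathbf W_{i,j}$ together with the fact that graph covers copy local functions verbatim. First I would record that, when $\bepsilon=\mathbf 0_n$, the matrix $\mathbf W_{i,j}$ factors as an outer product: writing $w_{i,j}(0)=1$ and $w_{i,j}(1)=\theta_{i,j}$, one has $\mathbf W_{i,j}(a,b)=w_{i,j}(a)\,\overline{w_{i,j}(b)}$ for $a,b\in\mathcal X$. Consequently the edge-weight function splits as $\tfe{i,j}(\txl{i,j},\txr{i,j})=[\xl{i,j}=\xr{i,j}]\,w_{i,j}(\xl{i,j})\cdot[\xlprime{i,j}=\xrprime{i,j}]\,\overline{w_{i,j}(\xlprime{i,j})}$, i.e.\ into an unprimed factor weighted by $\theta_{i,j}$ and a primed factor weighted by $\overline{\theta_{i,j}}$. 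Since $\tfl{i}$ and $\tfr{j}$ are, by their very definition, products of an unprimed and a primed check-node function, the entire global function of $\sfnDE(\btheta)$ factors as the global function of $\sfn(\btheta)$ on the unprimed variables times that of $\sfn(\overline{\btheta})$ on the primed variables. This is the precise sense in which $\sfnDE(\btheta)$ ``unfolds'' into the disjoint union $\sfn(\btheta)\sqcup\sfn(\overline{\btheta})$, and it recovers the base identity $Z(\sfnDE(\btheta))=|\perm(\btheta)|^2$.

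The heart of the argument is to show that this unfolding commutes with taking double covers. A double cover is specified purely by combinatorial data --- a straight/crossed connection choice for each (double) edge of $\sfnDE(\btheta)$ --- while every copied function node inherits its local function unchanged from the base DE-NFG. Because each double edge of $\sfnDE(\btheta)$ unfolds into exactly one unprimed edge and one primed edge, a single connection choice per double edge induces the \emph{same} choice on the corresponding unprimed and primed edges. Applying the factorization of the local functions node-by-node to the cover, I would conclude that any double cover $\widetilde{\sfn}_{\mathrm{DE}}(\btheta)$ unfolds as $\widetilde{\sfn}(\btheta)\sqcup\widetilde{\sfn}(\overline{\btheta})$, where $\widetilde{\sfn}\in\widetilde{\mathcal N}_2$ is the NFG double cover carrying exactly those edge choices; moreover the assignment $\widetilde{\sfn}_{\mathrm{DE}}(\btheta)\mapsto\widetilde{\sfn}$ is a bijection onto $\widetilde{\mathcal N}_2$, which gives the asserted uniqueness.

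Granting the unfolding, the partition sum follows immediately: a disjoint union has a product partition sum, so $Z(\widetilde{\sfn}_{\mathrm{DE}}(\btheta))=Z(\widetilde{\sfn}(\btheta))\cdot Z(\widetilde{\sfn}(\overline{\btheta}))$. Since the partition sum of an NFG is a finite sum of products of local-function values in which only the edge-weight functions are complex-valued (the check-node functions being $\{0,1\}$-valued), conjugating all edge weights conjugates the whole sum; hence $Z(\widetilde{\sfn}(\overline{\btheta}))=\overline{Z(\widetilde{\sfn}(\btheta))}$ and therefore $Z(\widetilde{\sfn}_{\mathrm{DE}}(\btheta))=|Z(\widetilde{\sfn}(\btheta))|^2$. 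Substituting into the degree-$2$ Bethe partition sum and using the bijection to rewrite the average over DE-NFG double covers as an average over $\widetilde{\mathcal N}_2$ yields the claimed formula for $Z_{\B,2}(\sfnDE(\btheta))$.

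I expect the main obstacle to be the middle step: making rigorous that a double cover of the DE-NFG acts on each double edge as an indivisible unit --- one connection choice shared by its two threads --- rather than permitting the primed and unprimed threads to be covered independently. This is exactly what forces the unprimed and primed unfolded covers to coincide as combinatorial objects, and hence to be governed by the \emph{same} $\widetilde{\sfn}$. Verifying it requires unwinding the definition of DE-NFG covers (from the appendix, cf.\ the setup in \cite{huangGraphcoverbasedCharacterizationBethe2025}) and checking that the covering construction copies the factorizable local functions without mixing primed and unprimed coordinates; once this is in place, the remaining steps are routine bookkeeping.
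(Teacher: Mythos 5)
Your proposal is correct and takes essentially the same route as the paper's proof: both exploit the rank-one factorization $\mathbf W_{i,j}=\Bigl[\begin{smallmatrix}1\\ \theta_{i,j}\end{smallmatrix}\Bigr]\Bigl[\begin{smallmatrix}1 & \overline{\theta_{i,j}}\end{smallmatrix}\Bigr]$ together with the product form of $\tfl{i}$ and $\tfr{j}$ to unfold each double cover into identical unprimed and primed NFG covers, and then use $Z\bigl(\widetilde{\sfn}(\overline{\btheta})\bigr)=\overline{Z\bigl(\widetilde{\sfn}(\btheta)\bigr)}$ to obtain $\bigl|Z\bigl(\widetilde{\sfn}(\btheta)\bigr)\bigr|^2$ and the stated average. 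The issue you flag as the main obstacle --- that a DE-NFG double cover makes a single straight/crossed choice per double edge, shared by both threads --- is resolved exactly as you anticipate: in the paper's definition a double edge is a single edge of the underlying graph, so the cover permutes it atomically, and the paper's proof performs the same unfold-after-covering decomposition you describe.
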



\begin{proof}
 See \iflongversion Appendix \ref{appx:double:cover}.  \else \cite{zhou2026}.
 \fi
\end{proof}


The second step of the general strategy is to diagonalize the edge weights in a
two-cover $\widetilde{\sfn}(\btheta)$ via NFG transform (see \iflongversion
Appendix \ref{appx:double:cover:counting} \else \cite{zhou2026} \fi or
\cite{vontobelAnalysisDoubleCovers2016,ngDoublecoverbasedAnalysisBethe2022} for
the details). Unlike~\cite[Prop.~1]{ngDoublecoverbasedAnalysisBethe2022}, we do
not consider the superposition of all $2$-covers $\widetilde{\sfn}(\btheta)$ at
this point yet.


\begin{figure}[t]
 \centering
 \begin{minipage}{.23\textwidth}
\centering
\begin{tikzpicture}[on grid, auto]
\tikzstyle{state}=[shape=rectangle, draw, minimum size=4mm]
\pgfmathsetmacro{\hdis}{2.4} 
\pgfmathsetmacro{\vdis}{0.8} 
    \foreach \i in {1,2,3,4,5}{
        \node[state, label=left:$\hatfl{\i}$] (f\i) at (0,-\i*\vdis) {};
        \node[state, label=right:$\hatfr{\i}$] (g\i) at (\hdis,-\i*\vdis) {};
    }
    \foreach \i in {1,2,3,4,5}{
        \foreach \j in {1,2,3,4,5}{                
                 \draw[double] (f\i) -- (g\j) ;
        }
    }
\draw[blue, ultra thick, double] (f1) -- (g1);

\draw[red, ultra thick] (f2) -- (g2);
\draw[red, ultra thick] (f3) -- (g2);
\draw[red, ultra thick] (f3) -- (g3);
\draw[red, ultra thick] (f2) -- (g3);

\draw[green, ultra thick] (f4) -- (g5);
\draw[green, ultra thick] (f5) -- (g5);
\draw[green, ultra thick] (f5) -- (g4);
\draw[green, ultra thick] (f4) -- (g4);
\end{tikzpicture}
\end{minipage}%
 \caption{A valid configuration of NFG $\widehat{N}(\btheta)$. Possible $(1,1)$-edge,
  $(0,1)$-cycle and $(1,0)$-cycle are colored in blue, red and green,
  respectively.
 } \label{fig:doublecover:NFG:cong}
\end{figure}
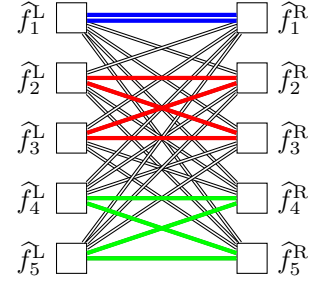


Denote the transformed NFG of $\widetilde{\sfn}(\btheta)$ by
$\widehat{\sfn}(\btheta)$. Correspondingly, function nodes and variables (over
$\mathcal X^2$) in $\widehat{\sfn}(\btheta)$ are denoted as
$\hatfl{i},\hatfr{i},\widehat{f}_{i,j}$ and $\hatx{i,j}$, and the set of all
$2$-covers $\widetilde{\mathcal N}_2$ is mapped to $\widehat{\mathcal N}_2$.
Most importantly, the partition sum is preserved under NFG transform, that is,
$Z\bigl( \widetilde{\sfn}(\btheta) \bigr) = Z\bigl( \widehat{\sfn}(\btheta)
\bigr)$. To find the latter, we shall classify all valid configurations of
$\widehat{\sfn}(\btheta)$, that is, values of $\set{\hatx{i,j}}$ for which the
global function of $\widehat{\sfn}(\btheta)$ is non-zero.


\begin{example}
 The NFG in Fig.~\ref{fig:doublecover:NFG:cong} shows a possible valid
 configuration of $\widehat{\sfn}(\btheta)$ for any
 $\widehat{\sfn}\in\widehat{\mathcal N}_2$ for the case $n=5$. We color an
 edge according to the value of its variable $\hatx{i,j}$.
\end{example}


A key point is that the set of valid configurations is independent of
$\widehat{\sfn}\in\widehat{\mathcal N}_2$, so we denote it by $\validc$. This
enables the following exchange of orders of summation:
\begin{align*}
 \sum_{\widehat{\sfn}\in\widehat{\mathcal N}_2}\Biggl\vert
 \sum_{\hatbfx \in \validc}
 g_{\widehat{\sfn}}(\hatbfx)\Biggr\vert^2
 =\sum_{\substack{\hatbfx[(1)]\in \validc \\\hatbfx[(2)]\in \validc}}
 \sum_{\widehat{\sfn}\in\widehat{\mathcal N}_2}
 g_{\widehat{\sfn}}(\hatbfx[(1)])\overline{g_{\widehat{\sfn}}(\hatbfx[(2)])}.
\end{align*}
In addition, for any valid configuration in
$\validc$, every vertex of
$\widehat{\sfn}(\btheta)$ is one of the three cases:
\begin{itemize}
 \item the endpoint of exactly one $(1,1)$-edge,
 \item vertex in exactly one $(0,1)$-cycle,
 \item vertex in exactly one $(1,0)$-cycle.
\end{itemize}
Furthermore, $\validc$ is related to
$\mathcal{S}_n\times \mathcal{S}_n$ such that any cycle in a valid configuration takes the form
$\bigcup_{i\in C}\bigl\{ (i,\sigma_1(i)),(i,\sigma_2(i)) \bigr\}$, where $C$
is a cycle of length at least $2$ in $\sigma_{1}\sigma_2^{-1}$.
\addtocounter{thm}{-1}
\begin{example} [continued]
 The valid configuration shown in Fig.~\ref{fig:doublecover:NFG:cong} corresponds to
 $4=2^2$ pairs of permutations, where the exponent is the number of cycles in that configuration.
\end{example}


Therefore, the cycle index of the symmetric group $\mathcal{S}_n$ arises
naturally from the counting arguments. It remains to find the contribution of
each valid configuration to $\E \big[Z^2(\sfnDE(\btheta))\big]$ and $ \E
\big[Z_{\B,2}^2(\sfnDE(\btheta))\big]$. For this, we made additional
assumptions concerning $\btheta$, in the next two subsections.


\begin{definition}
 Let $n$ be a positive integer. The cycle index of $\mathcal{S}_n$ is defined
 to be
 $Z_n \defeq \frac{1}{n!}\sum_{\sigma\in
   \mathcal{S}_n}\prod_{k\in[n]}z_k^{c_k(\sigma)}$, where $z_k,k\in [n]$, are
 indeterminates and where $c_k(\sigma)$ counts the number of cycles of length
 $k$ in the cycle representation of $\sigma$. For $n = 0$, let
 $Z_0 \defeq 1$.
 \edefn
\end{definition}


\begin{definition}
 Let $n\in\N$. Define the following functions:
 \begin{alignat*}{3}
  \psi_n : \
   & \R_{\geq 0}^4 &  & \to     &  & \ \R_{\geq 0},                                               \\
   & (a,b,m,c)     &  & \mapsto &  & \ (n!)^2Z_n \text{ with }z_1{=}a,z_k{=}mb^k{+}c^k,k{\geq} 2, \\
  \Psi_n : \
   & \R_{\geq 0}^3 &  & \to     &  & \ \R_{\geq 0},                                               \\
   & (a,b,m)       &  & \mapsto &  & \ (n!)^2Z_n \text{ with }z_1{=}a,z_k{=}mb^k,k{\geq} 2.
 \end{alignat*}
 \edefn
\end{definition}


\subsection{All-one matrix}


We have the following results for all-one matrix of size $n\times n$. Although
the permanent of the all-one matrix is trivial to compute, we conjecture that a
similar asymptotics result of $\E\bigl[ Z^2\bigl( \sfnDE(\btheta) \bigr) \bigr]
/ \E \bigl[ Z_{\B,2}^2\bigl( \sfnDE(\btheta) \bigr) \bigr]$ shall hold for
entry-wise i.i.d.~non-negative-real-valued matrices $\btheta$.


\begin{thm}
 \label{thm:double:cover:allone}

 Let $\btheta=\mathbf 1_n$. It holds that
 \begin{align*}
  Z^2\bigl(\sfnDE(\btheta)\bigr)
   & = (n!)^4,\nonumber                                                       \\
  Z_{\B,2}^2\bigl(\sfnDE(\btheta)\bigr)
   & = \sum_{k=0}^{n} \binom{n}{k}^2 \Psi_{k}(0,1,1/2) \Psi_{n-k}^2(1,1,1/2).
 \end{align*}
\end{thm}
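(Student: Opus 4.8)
The first identity requires no covering machinery: by the fact $Z(\sfnDE(\btheta)) = |\perm(\btheta)|^2$ recorded in Section~\ref{sec:preliminary}, and since $\perm(\mathbf 1_n) = n!$, we get $Z(\sfnDE(\mathbf 1_n)) = (n!)^2$ and hence $Z^2(\sfnDE(\mathbf 1_n)) = (n!)^4$ (the all-one case is deterministic, so expectations are trivial here). For the second identity the plan is to start from Proposition~\ref{prop:double:cover:decomp}, which gives $Z_{\B,2}^2(\sfnDE(\mathbf 1_n)) = \bigl\langle |Z(\widetilde{\sfn}(\mathbf 1_n))|^2 \bigr\rangle_{\widetilde{\sfn} \in \widetilde{\mathcal N}_2}$. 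Transforming each two-cover to $\widehat{\sfn}(\mathbf 1_n)$, using that the NFG transform preserves the partition sum, and expanding $|Z|^2 = Z\overline{Z}$ over pairs of valid configurations via the exchange-of-summation identity above, the task reduces to evaluating
\[
 Z_{\B,2}^2(\sfnDE(\mathbf 1_n))
  = \sum_{\hatbfx[(1)],\, \hatbfx[(2)] \in \validc}
 \bigl\langle g_{\widehat{\sfn}}(\hatbfx[(1)])\, \overline{g_{\widehat{\sfn}}(\hatbfx[(2)])} \bigr\rangle_{\widehat{\sfn} \in \widehat{\mathcal N}_2}
\]
and summing the outcome over the classification of $\validc$ into $(1,1)$-edges, $(0,1)$-cycles, and $(1,0)$-cycles.

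The heart of the argument is the evaluation of the cover-average $\bigl\langle g_{\widehat{\sfn}}(\hatbfx[(1)])\, \overline{g_{\widehat{\sfn}}(\hatbfx[(2)])} \bigr\rangle_{\widehat{\sfn}}$ for the all-one matrix by the graph-cover counting of~\cite{vontobelAnalysisDoubleCovers2016}. Since every $\mathbf W_{i,j} = \mathbf 1_2$ here, after the transform the value $g_{\widehat{\sfn}}(\hatbfx)$ depends on a configuration only through its combinatorial type, so the average is governed purely by how the two configurations can be lifted consistently to a common two-cover. I expect the averaging over the binary edge lifts to vanish unless the two configurations share a common subset of positions on which they behave freely, and to be \emph{locked} together on the complementary subset, where they traverse a common cyclic structure of length at least two and each such cycle contributes a weight $1/2$ (the source of the parameter $m = 1/2$); the uniform unit edge weights account for $b = 1$. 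Writing $k$ for the number of rows, and symmetrically columns, carried by the locked cyclic part, the selection of these rows and columns contributes the factor $\binom{n}{k}^2$.

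On the locked part the two configurations coincide as a single fixed-point-free cyclic structure on $k$ elements, since a would-be fixed point is exactly a shared $(1,1)$-edge and hence is accounted for in the free part instead; enumerating these structures by the cycle index of $\mathcal S_k$ with weight $1/2$ per cycle and no fixed points yields $\Psi_k(0,1,1/2)$. On the free part of size $n-k$ the two configurations are independent, and each is enumerated by the full cycle index of $\mathcal S_{n-k}$ with weight $1/2$ per nontrivial cycle and fixed points allowed, i.e.\ $\Psi_{n-k}(1,1,1/2)$, which therefore appears squared. Summing over $k$ from $0$ to $n$ then gives the claimed formula, with the $k=0$ term $\Psi_n^2(1,1,1/2)$ the fully decoupled contribution (the square of the degree-two Bethe quantity of Section~\ref{sec:2cover}) and the higher-$k$ terms the coupling corrections. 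The main obstacle is the precise evaluation of the cover-average: establishing that it vanishes outside this free/locked overlap pattern and equals exactly the claimed per-cycle weight $1/2$ with fixed-point-free locked cycles, which requires careful local bookkeeping at each check node and edge of the double cover under the transformed weights. Once this local count is settled, the reorganization into the cycle index and the functions $\Psi_k(0,1,1/2)$ and $\Psi_{n-k}(1,1,1/2)$ is routine, and the corresponding asymptotics would follow by the analytic-combinatorics methods of~\cite{flajoletAnalyticCombinatorics2009}.
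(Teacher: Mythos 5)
Your skeleton matches the paper's route (first identity via $Z(\sfnDE(\mathbf 1_n))=|\perm(\mathbf 1_n)|^2=(n!)^2$; second identity via Proposition~\ref{prop:double:cover:decomp}, the NFG transform, the exchange of summation, and the final assembly $\sum_k\binom{n}{k}^2\Psi_k(0,1,1/2)\Psi_{n-k}^2(1,1,1/2)$), but the step you yourself flag as ``the main obstacle'' \emph{is} the heart of the paper's proof, and your guess at its outcome is wrong in a way that matters. The paper settles it with a concrete cancellation lemma: if there is a position $(i,j)$ with $\hatx{i,j}^{(1)}=(1,0)$ but $\hatx{i,j}^{(2)}\neq(1,0)$, then the cover-average vanishes. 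The proof pairs each $\widehat{\sfn}\in\widehat{\mathcal N}_2$ with the cover differing only in whether the $(i,j)$-th edge is parallel or crossed; since $\widehat{\mathbf E}^{\mathrm c}_{i,j}$ differs from $\widehat{\mathbf E}^{\mathrm p}_{i,j}$ only in the sign of the entry indexed by $(1,0)$, the two members of each pair contribute opposite values and the sum cancels. This pins down the surviving pairs as exactly those whose $(1,0)$-cycles coincide, and exhibits the crucial asymmetry: $(0,1)$-cycles are \emph{free} because the $(0,1)$-entry of the transformed edge weight is the same for parallel and crossed edges, whereas $(1,0)$-cycles are \emph{locked} because only they feel the sign flip. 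Your write-up never says which cycles lock or why; a ``locked cyclic structure'' of unspecified type cannot be correct, since the locking mechanism acts only on $(1,0)$-edges.

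Moreover, your proposed value of the surviving cover-average, $1/2$ per locked cycle, is incorrect: for $\btheta=\mathbf 1_n$ the average equals exactly $1$, because once the $(1,0)$-cycles coincide, $g_{\widehat{\sfn}}(\hatbfx[(1)])\,g_{\widehat{\sfn}}(\hatbfx[(2)])=(\pm 1)^2=+1$ for \emph{every} cover $\widehat{\sfn}$. The factors $1/2$ in $\Psi_k(0,1,1/2)$ and $\Psi_{n-k}(1,1,1/2)$ arise elsewhere, namely from the $2^c$-to-one correspondence between pairs of permutations and valid configurations (the paper's continued Example): rewriting a count of configurations as a sum over $\mathcal S_k\times\mathcal S_k$, i.e.\ as a cycle-index evaluation, introduces $2^{-c(\sigma\tau^{-1})}$ per pair. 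Your accounting is internally inconsistent: if you enumerate typed configurations and additionally weight each locked cycle by $1/2$ from the average, the locked part becomes $\Psi_k(0,1,1/4)$; if instead you enumerate permutation pairs so that the $1/2$ weights give $\Psi_k(0,1,1/2)$ on the locked part, then the free part (where the average contributes no $1/2$) would count $((n-k)!)^2=\Psi_{n-k}(1,1,1)$ rather than $\Psi_{n-k}(1,1,1/2)$. So the proposal has a genuine gap --- the parallel/crossed sign-flip pairing lemma is the missing idea --- and filling it the way you describe would produce the wrong constants.
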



\begin{proof}
 A detailed proof can be found in \iflongversion Appendix
 \ref{appx:double:cover:allone}. \else \cite{zhou2026}. \fi
\end{proof}


\begin{corollary}[of Thm.~\ref{thm:double:cover:allone}]
 \label{cor:double:cover:allone}
 Let $\btheta=\mathbf 1_n$. It holds that
 \begin{align*}
  \frac{Z(\sfnDE(\btheta))}{Z_{\B,2}(\sfnDE(\btheta))}
   & \sim\sqrt{\frac{\pi n}{e}},
 \end{align*}
 where $a(n) \sim b(n)$ stands for
 $\lim_{n \to \infty} \frac{a(n)}{b(n)} = 1$.
\end{corollary}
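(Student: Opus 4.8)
The plan is to feed the two identities of Theorem~\ref{thm:double:cover:allone} into the machinery of~\cite{flajoletAnalyticCombinatorics2009}: the exponential formula for the cycle index together with singularity analysis. From $Z^2(\sfnDE(\btheta)) = (n!)^4$ we immediately get $Z(\sfnDE(\btheta)) = (n!)^2$, so everything reduces to the asymptotics of $Z_{\B,2}(\sfnDE(\btheta))$. First I would unfold the definition of $\Psi$. Writing $a_m$ and $b_m$ for $Z_m$ evaluated at $z_1 = 0$, resp.\ $z_1 = 1$, and $z_k = 1/2$ for all $k \geq 2$, we have $\Psi_k(0,1,1/2) = (k!)^2 a_k$ and $\Psi_{n-k}(1,1,1/2) = ((n-k)!)^2 b_{n-k}$. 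Substituting into the formula for $Z_{\B,2}^2$ and using $\binom{n}{k}^2 = (n!)^2 / ((k!)^2 ((n-k)!)^2)$, the $(k!)^2$ factors cancel and one is left with
\[
 Z_{\B,2}^2\bigl(\sfnDE(\btheta)\bigr)
  = (n!)^2 \sum_{k=0}^n a_k\, ((n-k)!)^2\, b_{n-k}^2 .
\]

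Second, I would obtain the asymptotics of $a_m$ and $b_m$ from the exponential formula $\sum_{m\ge 0} Z_m x^m = \exp\bigl(\sum_{k\ge 1} z_k x^k / k\bigr)$. Substituting the two evaluations and using $\sum_{k\ge 1} x^k/k = -\ln(1-x)$ gives the closed forms
\[
 \sum_{m\ge 0} b_m x^m = \frac{e^{x/2}}{\sqrt{1-x}},
 \qquad
 \sum_{m\ge 0} a_m x^m = \frac{e^{-x/2}}{\sqrt{1-x}} .
\]
Both have a unique dominant singularity at $x = 1$ of square-root type, so the transfer theorem yields $b_m \sim \sqrt{e}/\sqrt{\pi m}$ and $a_m \sim 1/(\sqrt{e}\sqrt{\pi m})$, using $[x^m](1-x)^{-1/2} \sim 1/\sqrt{\pi m}$.

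Third, I would isolate the dominant term of the convolution. Because the summand carries the factor $((n-k)!)^2$, which is maximized at $k = 0$, and because $a_0 = 1$, the $k = 0$ term is
\[
 (n!)^2 \cdot a_0 \cdot (n!)^2 b_n^2 = (n!)^4\, b_n^2 \sim (n!)^4 \frac{e}{\pi n} .
\]
The choice $z_1 = 0$ forces $a_1 = 0$ (more generally $a_m$ only counts fixed-point-free contributions), so the first correction comes from $k = 2$; the falling-factorial ratio $((n-k)!)^2 / (n!)^2 \le n^{-2k}$ then shows these corrections are $O(n^{-4})$ relative to the leading term. Hence $Z_{\B,2}^2 \sim (n!)^4 e/(\pi n)$, i.e.\ $Z_{\B,2} \sim (n!)^2 \sqrt{e/(\pi n)}$, and dividing by $Z = (n!)^2$ gives $Z/Z_{\B,2} \sim \sqrt{\pi n/e}$, as claimed.

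The main obstacle is making the tail estimate in the third step uniform. For each fixed $k$ the term is clearly subdominant, but the number of terms grows with $n$, so one must rule out an accumulation of moderate-$k$ contributions. The clean remedy is to combine uniform coefficient bounds $b_m^2 = O(1/m)$ and $a_k = O(1/\sqrt{k})$ (both available directly from the explicit generating functions) with the $n^{-2k}$ decay supplied by the factorial ratio, which beats the polynomial growth in the number of summands. Establishing these bounds uniformly in $k$ and $n$ simultaneously --- rather than merely pointwise in $k$ --- is the technical crux of the argument.
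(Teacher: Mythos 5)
Your proposal is correct and follows essentially the same route as the paper's proof: the same reduction of $Z_{\B,2}^2/(n!)^4$ to a convolution of cycle-index coefficients, the same generating-function identity $\Psi_n(a,1,1/2)=(n!)^2\,[x^n]\,e^{(a-1/2)x}(1-x)^{-1/2}$ with square-root singularity analysis (your $b_m\sim\sqrt{e/(\pi m)}$, $a_m\sim 1/\sqrt{\pi m e}$ are exactly the paper's $\sqrt{b_m}$ and $c_m$), and the same isolation of the $k=0$ term with an $O(n^{-4})$ relative tail coming from $a_1=0$ plus geometric decay of the factorial ratio. The only slip is that your stated bound $((n-k)!)^2/(n!)^2\le n^{-2k}$ has the inequality reversed (the left-hand side is in fact $\ge n^{-2k}$); the correct uniform bound, and the one the paper uses, is $(2/n)^{2k}$ valid for $2\le k\le n/2$, combined with a separate super-polynomially small estimate (via $n!/(n-k)!\ge (n/2)^{n/2}$) for $k>n/2$ --- which is precisely the two-regime split your final paragraph gestures at.
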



\begin{proof}
  A detailed proof can be found in \iflongversion Appendix
 \ref{appx:double:cover:allone:asym}. \else \cite{zhou2026}. \fi
\end{proof}


\subsection{Complex matrix with zero-mean}


\begin{assumption}
 \label{ass:iid:matrices}

 Let $\btheta$ be a random complex matrix of size $n\times n$ whose entries are
 i.i.d.~according to some distribution $\mathcal D$ on $\C$ with $\mu_{1,0}=0,
 |\mu_{2,0}|<\mu_{1,1}$ and $\mu_{2,2}<\infty$, where
 $\mu_{p,q}\defeq\E_{\theta\sim \mathcal
 D}\Big[\theta^p\overline{\theta^q}\Big]$ for $p,q\in \N$.
\end{assumption}


\begin{thm}
 \label{thm:double:cover:complex}
 Given Assumption \ref{ass:iid:matrices}, it holds that
 \begin{align*}
  \E \big[Z^2(\sfnDE(\btheta))\big]
   & = \psi_n (\mu_{2,2}, \mu_{1,1}^2, 2, |\mu_{2,0}|^2), \\
  \E \big[Z_{\B,2}^2(\sfnDE(\btheta))\big]
   & = \psi_n (\mu_{2,2}, \mu_{1,1}^2, 1, |\mu_{2,0}|^2).
 \end{align*}
\end{thm}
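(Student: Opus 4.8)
The plan is to prove both identities in parallel, since after taking the expectation they reduce to the \emph{same} sum over four permutations and differ only by a graph-cover weighting. Start with $\E\big[Z^2(\sfnDE(\btheta))\big]$. By the preliminary identity $Z(\sfnDE(\btheta)) = |\perm(\btheta)|^2$ we have $Z^2(\sfnDE(\btheta)) = |\perm(\btheta)|^4 = \perm(\btheta)^2\,\overline{\perm(\btheta)}^2$, so expanding each permanent gives
\[
 Z^2(\sfnDE(\btheta))
 = \sum_{\sigma_1,\sigma_2,\tau_1,\tau_2 \in \mathcal{S}_n}
 \;\prod_{i\in[n]} \theta_{i,\sigma_1(i)}\,\theta_{i,\sigma_2(i)}\,\overline{\theta_{i,\tau_1(i)}}\,\overline{\theta_{i,\tau_2(i)}}.
\]
Taking expectations and using that the entries are i.i.d., the expectation factorizes over cells, so the summand becomes $\prod_{(i,j)} \mu_{p_{i,j},\,q_{i,j}}$, where $p_{i,j}$ (resp.\ $q_{i,j}$) counts how often cell $(i,j)$ is selected by $\sigma_1,\sigma_2$ (resp.\ by $\tau_1,\tau_2$). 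The hypothesis $\mu_{1,0}=0$ (hence $\mu_{0,1}=0$) is the crucial simplification: every occupied cell must satisfy $p_{i,j}+q_{i,j}\ge 2$, which forces a rigid per-row structure.

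The second step is a per-row analysis. Each row $i$ carries exactly four incidences, so it falls into one of two types. In \emph{Case A} the positive incidences coincide, $\sigma_1(i)=\sigma_2(i)$; then the conjugate incidences must also coincide, and the row contributes $\mu_{2,2}$ if they land on the same column and $\mu_{2,0}\mu_{0,2}=|\mu_{2,0}|^2$ otherwise. In \emph{Case B} the positive incidences are distinct, and then $\{\tau_1(i),\tau_2(i)\}=\{\sigma_1(i),\sigma_2(i)\}$ is forced, contributing $\mu_{1,1}^2$. Writing $\pi \defeq \sigma_1^{-1}\sigma_2$, the Case-A rows are the fixed points of $\pi$ and the Case-B rows lie on its cycles of length $\ge 2$; along such a $k$-cycle the conjugate assignment admits exactly two consistent orientations, so the cycle carries weight $2\,\mu_{1,1}^{2k}$. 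On the Case-A rows the positive and conjugate bijections onto the common column set differ by a mismatch permutation whose fixed points give $\mu_{2,2}$ and whose $\ell$-cycles give $|\mu_{2,0}|^{2\ell}$. I would then merge the Case-B cycles and the Case-A mismatch cycles into a single permutation of $[n]$: each fixed point carries $z_1=\mu_{2,2}$, and each cycle of length $k\ge 2$ carries $z_k = 2\,(\mu_{1,1}^2)^k + (|\mu_{2,0}|^2)^k$, being either a Case-B cycle (weight $2\mu_{1,1}^{2k}$, including its two orientations) or a Case-A mismatch cycle (weight $|\mu_{2,0}|^{2k}$). The remaining freedom is the anchor $\sigma_1$, which contributes an overall factor $n!$ with weight independent of $\sigma_1$. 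Collecting this $n!$ with the $n!$ normalizing the cycle index yields $\E\big[Z^2(\sfnDE(\btheta))\big] = (n!)^2 Z_n = \psi_n(\mu_{2,2},\mu_{1,1}^2,2,|\mu_{2,0}|^2)$, the first claim; I would spot-check the bookkeeping at $n=1,2$.

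For $\E\big[Z_{\B,2}^2(\sfnDE(\btheta))\big]$ I would run the identical expansion after first invoking Prop.~\ref{prop:double:cover:decomp}, the NFG transform $\widetilde{\sfn}\mapsto\widehat{\sfn}$, and the exchange of summation recorded in Section~\ref{sec:2cover:counting}. The two valid configurations $\hatbfx[(1)],\hatbfx[(2)]\in\validc$ then play the roles of the positive pair $(\sigma_1,\sigma_2)$ and the conjugate pair $(\tau_1,\tau_2)$, so after applying $\E$ over $\btheta$ the per-cell moment structure, and hence the entire Case-A/Case-B cycle decomposition from the previous paragraph, is \emph{exactly} the one above. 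The sole new ingredient is the graph-cover average $\tfrac{1}{|\widehat{\mathcal N}_2|}\sum_{\widehat{\sfn}\in\widehat{\mathcal N}_2}$ of the phase products $g_{\widehat{\sfn}}(\hatbfx[(1)])\,\overline{g_{\widehat{\sfn}}(\hatbfx[(2)])}$.

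The main obstacle is precisely this cover average, and it is where the exponent drops from $2$ to $1$. Following the graph-cover counting of \cite{vontobelAnalysisDoubleCovers2016}, as specialized to permanents in \cite{ngDoublecoverbasedAnalysisBethe2022}, I expect the average to evaluate to $2^{-c}$, where $c$ is the number of Case-B cycles, i.e.\ the cycles of length $\ge 2$ of $\sigma_1\sigma_2^{-1}$. This factor cancels exactly one of the two orientations on each Case-B cycle, replacing the weight $2\,\mu_{1,1}^{2k}$ by $\mu_{1,1}^{2k}$ while leaving the fixed-point weight $\mu_{2,2}$ and the Case-A mismatch weight $|\mu_{2,0}|^{2k}$ untouched; equivalently it turns $z_k = 2(\mu_{1,1}^2)^k + (|\mu_{2,0}|^2)^k$ into $z_k = (\mu_{1,1}^2)^k + (|\mu_{2,0}|^2)^k$, i.e.\ sets $m=1$. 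This would give $\E\big[Z_{\B,2}^2(\sfnDE(\btheta))\big] = \psi_n(\mu_{2,2},\mu_{1,1}^2,1,|\mu_{2,0}|^2)$. The delicate point is to establish that the cover average is $2^{-c}$ and, crucially, that it acts only on the Case-B cycles; once this lemma is in place, the rest is the routine cycle-index bookkeeping already assembled for the first identity.
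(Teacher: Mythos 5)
Your treatment of the first identity is correct and is essentially the paper's own argument: the paper likewise expands $\E\big[Z^2\big]$ over quadruples $(\sigma_1,\sigma_2,\tau_1,\tau_2)$, uses $\mu_{1,0}=0$ to force the row-wise dichotomy (coincident positive incidences vs.\ forced matching $\{\tau_1(i),\tau_2(i)\}=\{\sigma_1(i),\sigma_2(i)\}$), counts the two consistent matchings per nontrivial cycle, and merges the $2\mu_{1,1}^{2k}$-weighted cycles with the $|\mu_{2,0}|^{2k}$-weighted mismatch cycles on the fixed points into a single cycle index; your per-row derivation is, if anything, a more direct route to the same bookkeeping than the paper's detour through the configuration map $h$.

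The gap is in the second identity, and it sits exactly where you flagged it: the lemma you ``expect'' is not true in the form you state, so the plan as written would stall. The cover average
\[
 2^{-n^2}\sum_{\widehat{\sfn}\in\widehat{\mathcal N}_2}
 g_{\widehat{\sfn}}(\hatbfx[(1)])\,\overline{g_{\widehat{\sfn}}(\hatbfx[(2)])}
\]
does \emph{not} evaluate to $2^{-c}$. What the paper proves is a per-cell cancellation statement: the transformed edge weights are
$\widehat{\mathbf E}^{\mathrm p}_{i,j}=\mathrm{diag}(1,\theta_{i,j},\theta_{i,j},\theta_{i,j}^2)$ and
$\widehat{\mathbf E}^{\mathrm c}_{i,j}=\mathrm{diag}(1,\theta_{i,j},-\theta_{i,j},\theta_{i,j}^2)$, differing only in the sign of the $(1,0)$ entry, so the per-cell average
$\tfrac12\bigl[\widehat{\mathbf E}^{\mathrm p}_{i,j}(\hatx{i,j}^{(1)},\hatx{i,j}^{(1)})\overline{\widehat{\mathbf E}^{\mathrm p}_{i,j}(\hatx{i,j}^{(2)},\hatx{i,j}^{(2)})}+\widehat{\mathbf E}^{\mathrm c}_{i,j}(\hatx{i,j}^{(1)},\hatx{i,j}^{(1)})\overline{\widehat{\mathbf E}^{\mathrm c}_{i,j}(\hatx{i,j}^{(2)},\hatx{i,j}^{(2)})}\bigr]$
vanishes when exactly one of $\hatx{i,j}^{(1)},\hatx{i,j}^{(2)}$ equals $(1,0)$, and otherwise equals the parallel-cover product. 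Hence the cover average is a \emph{selector}, equal to $0$ or to the untransformed product: it keeps exactly those pairs in $\validc\times\validc$ whose $(1,0)$-cells coincide, i.e.\ whose cycles carry the same orientation on both sides. The factor you are after then comes from counting rather than from the average itself: within a class of configurations sharing the same cycles, only $2^c$ of the $4^c$ orientation pairs survive, and this is what locks the two orientations together and turns the per-cycle weight $2\mu_{1,1}^{2k}$ of $\E[Z^2]$ into $\mu_{1,1}^{2k}$, i.e.\ $m=2$ into $m=1$. (Your subsidiary claim that the averaging ``acts only on the Case-B cycles'' is correct, and follows from the same observation, since the $(0,0)$ and $(1,1)$ entries of $\widehat{\mathbf E}^{\mathrm p}_{i,j}$ and $\widehat{\mathbf E}^{\mathrm c}_{i,j}$ agree.) With the key lemma restated as this cancellation/selection property and proved from the sign structure of the crossed edge weight --- which is precisely the paper's computation of $g_{i,j,\mathrm{avg}}$ --- the rest of your cycle-index bookkeeping goes through as you describe.
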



\begin{proof}
  A detailed proof can be found in \iflongversion Appendix
 \ref{appx:double:cover:complex}. \else \cite{zhou2026}. \fi
\end{proof}


\begin{corollary}[of
  Thm.~\ref{thm:double:cover:complex}]\label{cor:double:cover:complex}
 Given Assumption \ref{ass:iid:matrices}, it holds that
 \begin{align*}
   & \sqrt\frac{\E \big[Z^2(\sfnDE(\btheta))\big]}{\E \big[Z_{\B,2}^2(\sfnDE(\btheta))\big]}
  \sim
  \sqrt{\frac{n+1+C}{e}},
 \end{align*}
 where $C$ is some constant that depends only on the moments of $\mathcal D$.
\end{corollary}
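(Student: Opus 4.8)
The plan is to feed the two closed forms from Theorem~\ref{thm:double:cover:complex} into an asymptotic analysis of a ratio of specializations of the cycle index $Z_n$. Writing $a \defeq \mu_{2,2}$, $b \defeq \mu_{1,1}^2$, and $c \defeq |\mu_{2,0}|^2$, the $(n!)^2$ prefactors in $\psi_n$ cancel, so
\[
 \frac{\E\big[Z^2(\sfnDE(\btheta))\big]}{\E\big[Z_{\B,2}^2(\sfnDE(\btheta))\big]}
  = \frac{\psi_n(a,b,2,c)}{\psi_n(a,b,1,c)}
  = \frac{Z_n\big|_{z_1=a,\,z_k=2b^k+c^k}}{Z_n\big|_{z_1=a,\,z_k=b^k+c^k}} .
\]
It therefore suffices to understand, for $m\in\{1,2\}$, the $n\to\infty$ behavior of the specialization $Z_n^{(m)}$ of the cycle index obtained by setting $z_1 = a$ and $z_k = mb^k + c^k$ for $k \geq 2$.

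First I would pass to generating functions through the exponential formula for the cycle index, $\sum_{n\geq 0} Z_n t^n = \exp\!\bigl(\sum_{k\geq 1} z_k t^k / k\bigr)$. Substituting the specialization and using $\sum_{k\geq 2} (xt)^k/k = -\ln(1-xt) - xt$ collapses the exponent into closed form, giving
\[
 G_m(t) \defeq \sum_{n\geq 0} Z_n^{(m)} t^n
  = \exp\bigl( (a - mb - c)\,t \bigr)\cdot (1-bt)^{-m}\cdot (1-ct)^{-1},
\]
where the exponential prefactor appears precisely because $z_1 = a$ departs from the generic value $mb+c$ that the $k\geq 2$ pattern would dictate. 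The singularities of $G_m$ sit at $t=1/b$ and $t=1/c$, and this is exactly where Assumption~\ref{ass:iid:matrices} enters: the condition $|\mu_{2,0}| < \mu_{1,1}$ forces $c < b$, hence $1/b < 1/c$, so the pole of order $m$ at $t = 1/b$ strictly dominates while $(1-ct)^{-1}$ stays analytic there.

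The core step is singularity analysis in the sense of~\cite{flajoletAnalyticCombinatorics2009}. Writing $G_m(t) = H_m(t)(1-bt)^{-m}$ with $H_m(t) \defeq \exp((a-mb-c)t)(1-ct)^{-1}$ analytic at $t=1/b$, the transfer theorem yields $[t^n]G_m(t) \sim H_m(1/b)\,\tfrac{n^{m-1}}{(m-1)!}\,b^n$, with corrections that are exponentially smaller because of the subdominant singularity at $1/c$. In the ratio the common factor $b^n$ cancels, and the decisive simplification is
\[
 \frac{H_2(1/b)}{H_1(1/b)}
  = \frac{\exp\bigl((a-2b-c)/b\bigr)}{\exp\bigl((a-b-c)/b\bigr)}
  = e^{-1},
\]
the single surviving factor of $e$ coming from $\exp(-mb\cdot(1/b)) = e^{-m}$. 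This already gives $\E[Z^2]/\E[Z_{\B,2}^2] \sim n/e$, and hence $\sqrt{n/e}$ after taking the square root.

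To sharpen this to $\sqrt{(n+1+C)/e}$ I would retain the next term of the expansion for $m=2$. Expanding $H_2$ to first order around $t=1/b$ and extracting the coefficient of the subleading $(1-bt)^{-1}$ contributes an $n$-independent term proportional to $H_2'(1/b)$, so that $[t^n]G_2(t) = b^n\bigl( H_2(1/b)\,(n+1) + h_1 \bigr) + o(b^n)$ while $[t^n]G_1(t) = b^n H_1(1/b) + o(b^n)$; dividing and using $H_2(1/b)/H_1(1/b) = e^{-1}$ gives the ratio $\tfrac{n+1+C}{e} + o(1)$, of which the stated $\sim$ is an immediate consequence after taking square roots. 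Here $C = h_1/H_2(1/b)$ is a rational expression in $a,b,c$, i.e.\ in $\mu_{2,2}, \mu_{1,1}, \mu_{2,0}$ alone, as claimed. The hard part will be the honest bookkeeping of this subleading constant: one must verify that both the order-$1$ pole at $t=1/c$ and the higher-order terms of the local expansion of $H_2$ contribute only $o(1)$ to the ratio (the place where $c < b$ is used a second time), and then track $C$ exactly through the moment algebra rather than merely bounding it.
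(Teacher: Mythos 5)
Your proposal is correct and takes essentially the same route as the paper's proof: the paper likewise writes $\psi_n(a,b,m,c)=(n!)^2[x^n]\,e^{(a-mb-c)x}(1-bx)^{-m}(1-cx)^{-1}$ via the cycle-index generating function, uses $c<b$ (from Assumption~\ref{ass:iid:matrices}) to isolate the dominant singularity at $x=1/b$, expands the analytic factor locally to extract the $(n+1)$ term plus the constant correction for $m=2$ (with $\mathcal{O}(c^n)$ remainders from the subdominant singularity), and obtains $C=-\frac{a-2b-c}{b}-\frac{c}{b-c}$, matching your $C=-H_2'(1/b)/\bigl(bH_2(1/b)\bigr)$. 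No gap to report.
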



\begin{proof}
 See \iflongversion Appendix \ref{appx:double:cover:complex:asym}
 \else
  \cite{zhou2026}
 \fi
 for the value of $C$ and a detailed proof.
\end{proof}


In the special case where entries of $\btheta$ follow the standard complex
Gaussian distribution, we have $\mu_{1,1}=1,\mu_{2,0}=0$ and $\mu_{2,2}= 2$.
Then the asymptotics given by Cor.~\ref{cor:double:cover:complex} is
$\sqrt{(n+1)/e}$. This matches with our numerical result for $\alpha=\pi$ shown
in Fig.~\ref{fig:DENFG:ZB2:alpha}. For the case \(\mu_{1,1}=|\mu_{2,0}|>0\), see 
Appendix \ref{appx:double:cover:complex:equal}.


\section{Conclusion}


Our numerical experiments showed that the SPA-based Bethe approximation
degrades as the distribution of $\theta$ tends to the standard complex Gaussian
distribution. In the graph-cover analysis, we derived exact expressions and
asymptotics for the second moments of $|\perm(\btheta)|^2$ and its degree-$2$
Bethe approximation for zero-mean complex matrices. The observation in
Eq.~\eqref{eq:Bethe:approximation:ratio:1} only holds when $\arg{\theta}$ is
confined under some threshold, but does not hold when the distribution of
$\theta$ tends to the standard complex Gaussian distribution.

\newpage
\bibliographystyle{IEEEtran}
\bibliography{ref, ref1}

\iflongversion
 \onecolumn
 \appendices


 \section{Entrywise i.i.d.~Complex Matrices}\label{appx:random:matrix}
 We consider ensembles of complex-valued matrices $\btheta$ whose entries are
 i.i.d.~complex random variables.
 \begin{definition}
  Let $\mathcal D_\alpha,\alpha\in[0,\pi]$, be the distribution over
  \[S_\alpha=
   \bigl\{
   r e^{\iota \phi}
   \bigm|
   r \ge 0,\ \phi \in [-\alpha,\alpha]
   \bigr\},\]
  with the probability density function given by
  \[
   p_\alpha(\theta)=p_\alpha(r,\phi)\defeq
   \begin{cases}
   2r\exp(-r^2),                     & \alpha=0    \\
   r\exp(-r^2)\cdot\frac{1}{\alpha}, & \alpha\neq0
   \end{cases},\quad \theta=re^{\iota\phi}\in S_\alpha.
  \]
  \edefn
 \end{definition}

 Note that the distribution $\mathcal D_0$ is the Rayleigh distribution over
 $\R_{\geq 0}$ with scale $1/\sqrt{2}$, and $\mathcal D_\pi$ is the standard
 complex Gaussian distribution $\mathcal {CN}(0,1)$. So as $\alpha$ goes from
 $0$ to $\pi$, the distribution $D_\alpha$ gradually becomes a
 ``fully-complex'' distribution, and we study how the ratios ${Z}/{Z_{\B}}$ and
 $ {Z}/{Z_{\B,2}}$ for the DE-NFG $\sfnDE(\btheta)$ change in this process.
 
 \begin{prop}
  Let $\mu_{p,q}(\alpha)\defeq\E_{\theta\sim \mathcal
    D_\alpha}\Big[\theta^p\overline{\theta^q}\Big]$ for $p,q\in \N$. We have
  \begin{align*}
   \mu_{1,0}(\alpha) & =
   \begin{cases}
   \frac{\sqrt \pi}{2},                  & \alpha=0     \\
   \frac{\sqrt \pi\sin \alpha}{2\alpha}, & \alpha\neq 0
   \end{cases},  \\
   \mu_{1,1}(\alpha) & =1,                               \\
   \mu_{2,0}(\alpha) & =
   \begin{cases}
   1,                             & \alpha=0     \\
   \frac{\sin(2\alpha)}{2\alpha}, & \alpha\neq 0
   \end{cases},         \\
   \mu_{2,1}(\alpha) & =
   \begin{cases}
   \frac{3\sqrt \pi}{4},                  & \alpha=0     \\
   \frac{3\sqrt \pi\sin \alpha}{4\alpha}, & \alpha\neq 0
   \end{cases}, \\
   \mu_{2,2}(\alpha) & =2.
  \end{align*}

 \end{prop}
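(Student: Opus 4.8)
The plan is to evaluate the defining integral of $\mu_{p,q}(\alpha)=\E_{\theta\sim\mathcal D_\alpha}[\theta^p\overline{\theta^q}]$ directly in polar coordinates. Writing $\theta=re^{\iota\phi}$, one has $\theta^p\overline{\theta^q}=r^{p+q}e^{\iota(p-q)\phi}$, so for $\alpha\neq 0$ the moment becomes
\[
 \mu_{p,q}(\alpha)
  = \frac{1}{\alpha}\int_0^\infty\!\!\int_{-\alpha}^{\alpha}
    r^{p+q}\,e^{\iota(p-q)\phi}\, r\,e^{-r^2}\, d\phi\, dr .
\]
The key structural observation is that the integrand separates into a radial factor $r^{p+q+1}e^{-r^2}$ and an angular factor $e^{\iota(p-q)\phi}$, so the whole integral factorizes as the product of a radial integral $\int_0^\infty r^{p+q+1}e^{-r^2}\,dr$ and an angular integral $\tfrac{1}{\alpha}\int_{-\alpha}^{\alpha}e^{\iota(p-q)\phi}\,d\phi$. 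I would evaluate the two factors independently.

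For the radial integral, the substitution $u=r^2$ reduces it to $\tfrac12\int_0^\infty u^{(p+q)/2}e^{-u}\,du=\tfrac12\Gamma\!\bigl(\tfrac{p+q}{2}+1\bigr)$. For the angular integral there are two cases. When $p=q$ the integrand is $1$ and the integral equals $\tfrac{1}{\alpha}\cdot 2\alpha=2$, independent of $\alpha$ (which is what forces $\mu_{1,1}$ and $\mu_{2,2}$ to be constant). When $p\neq q$, integrating the exponential and using $e^{\iota x}-e^{-\iota x}=2\iota\sin x$ gives $\tfrac{2\sin((p-q)\alpha)}{(p-q)\alpha}$. Multiplying the radial and angular factors yields a single closed-form expression for $\mu_{p,q}(\alpha)$ valid for every $\alpha\neq 0$.

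The case $\alpha=0$ must be handled separately, since there the support collapses to $\R_{\geq 0}$ and the density is the Rayleigh density $2re^{-r^2}$; with $\phi\equiv 0$ the phase factor vanishes, and $\mu_{p,q}(0)=2\int_0^\infty r^{p+q+1}e^{-r^2}\,dr=\Gamma\!\bigl(\tfrac{p+q}{2}+1\bigr)$, which is exactly the $\alpha\to 0^+$ limit of the sector formula (as $\tfrac{\sin((p-q)\alpha)}{(p-q)\alpha}\to 1$). The five claimed identities then follow by substituting $(p,q)\in\{(1,0),(1,1),(2,0),(2,1),(2,2)\}$ and evaluating $\Gamma(3/2)=\tfrac{\sqrt\pi}{2}$, $\Gamma(2)=1$, $\Gamma(5/2)=\tfrac{3\sqrt\pi}{4}$, and $\Gamma(3)=2$. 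No step presents a genuine obstacle; the only points requiring care are the bookkeeping of the measure convention (the stated $p_\alpha$ is a density with respect to $d\phi\,dr$, with the polar Jacobian already absorbed, which is precisely what makes $\mathcal D_\pi$ coincide with $\mathcal{CN}(0,1)$) and the $p=q$ versus $p\neq q$ split in the angular integral.
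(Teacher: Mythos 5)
Your proof is correct: the factorization into a radial integral $\tfrac12\Gamma\bigl(\tfrac{p+q}{2}+1\bigr)$ and an angular integral ($2$ if $p=q$, $\tfrac{2\sin((p-q)\alpha)}{(p-q)\alpha}$ otherwise), together with the separate Rayleigh computation at $\alpha=0$, reproduces all five stated moments, and your reading of $p_\alpha$ as a density with respect to $d\phi\,dr$ (Jacobian absorbed) is the right convention. The paper states this proposition without proof, treating it as a routine computation, and your argument is exactly that computation carried out correctly.
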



 \section{Bethe Free Energy Function of the DE-NFG \(\sfnDE(\btheta)\) }\label{appx:Bethe:details}
 \begin{definition}[Belief polytope of $\sfnDE(\btheta)$]\label{def:DENFG:beliefpolytope}
  Consider a complex matrix $\btheta$ and the associated DE-NFG $\sfnDE(\btheta)$. We let
  \[\bbeta\defeq \bigl((\bl{i})_{i\in[n]},(\br{j})_{j\in[n]},(\be{i,j})_{i,j\in [n]^2}\bigr)\]
  be a collection of complex matrices based on the matrices
  \begin{align*}
   \bl{i} \defeq \bigl(\bl{i}(\ell,k) \bigr)_{(\ell,k)\in [n]^2},\quad
   \br{j} \defeq \bigl(\br{j}(\ell,k) \bigr)_{(\ell,k)\in [n]^2},\quad
   \be{i,j} \defeq \bigl(\be{i,j}(x,x') \bigr)_{(x,x')\in\mathcal{X}^2}.
  \end{align*}
  The belief polytope $\mathcal B$ is defined to be the set
  \begin{align*}
   \mathcal B =
   \left\{ \bbeta\ \middle\vert
   \begin{array}{c}
    \tr(\mathbf{1}_{n}\bl{i})=1, \quad \bl{i}\succeq 0, \quad \forall i\in[n];           \\
    \tr(\mathbf{1}_{n}\br{j})=1, \quad \br{j}\succeq 0, \quad \forall j\in[n];           \\
    \tr(\mathbf{1}_{2}\be{i,j})=1, \quad \be{i,j}\succeq 0, \quad \forall (i,j)\in[n]^2; \\[0.5em]
    \be{i,j}=\begin{bmatrix}\sum_{k\neq j,\ell\neq j}\bl{i}(\ell,k)&\sum_{\ell\neq j}\bl{i}(\ell,j)\\\sum_{k\neq j}\bl{i}(j,k)&\bl{i}(j,j)\end{bmatrix},
    \quad \forall (i,j)\in[n]^2;                                                         \\[1.5em]
    \be{i,j}=\begin{bmatrix}\sum_{k\neq i,\ell\neq i}\br{j}(\ell,k)&\sum_{\ell\neq i}\br{j}(\ell,i)\\\sum_{k\neq i}\br{j}(i,k)&\br{j}(i,i)\end{bmatrix},
    \quad \forall (i,j)\in[n]^2
   \end{array}\right\}
  \end{align*}
  where $\bbeta\in\mathcal B$ is called a pseudo-belief.
  \edefn
 \end{definition}
 \begin{remark}
  We make some remarks about the above definition.
  \begin{itemize}
   \item The belief $\bl{i}$ is associated with the set of valid configurations
   of $\tfl{i}$. In particular, $\bl{i}(\ell,k)$ is associated with the valid
   configuration $(\xl{i,j})_j=\mathbf u_\ell,(\xlprime{i,j})_j=\mathbf u_k$,
   where $\mathbf u_\ell$ is the $\ell$-th standard unit vector in $\R^n$.
   \item The first three constraints in $\mathcal B$ ensure that each matrix of
   a pseudo-belief is a valid quantum mass function
   \cite[Def.~1]{loeligerQuantumMeasurementMarginalization2020}, which is a
   quantum generalization of the probability mass function (used for defining
   the belief polytope of $\sfn(\btheta)$
   \cite[Def.~9]{vontobelBethePermanentNonnegative2013}).
   \item The last two constraints in $\mathcal B$ are referred as the ``edge
   consistency'' constraints. They essentially require that the marginal of
   $\bl{i}$ with respect to $\txl{i,j}$ is the same as $\be{i,j}$, and
   similarly for $\br{j}$. An alternative formalism is provided in the lemma
   below.
   \item Strictly speaking, we shall also introduce belief matrices for
   variables $\txl{i,j}$ and $\txr{i,j}$ (see Fig.~\ref{fig:DENFGperm}). Since
   the function \(\tfe{i,j}\) imposes an equality constraint on $\txl{i,j}$ and
   $\txr{i,j}$, the consistency constraint requires these belief matrices to be
   the same as $\be{i,j}$, and we omit them in the above definition for
   simplicity.
  \end{itemize}
 \end{remark}

 \begin{lem}
  Let $\mathbf V_i,i\in[n]$, be the matrix given by
  \[
   \mathbf V_i=
   \begin{bNiceArray}{ccccccc}[last-row]
    1 & \cdots & 1 & 0 & 1 & \cdots & 1 \\
    0 & \cdots & 0 & 1 & 0 & \cdots & 0 \\
    \rule[15pt]{0pt}{0pt}
    & i-1 & & & & n-i & \\
    \CodeAfter
    \UnderBrace[shorten, yshift=3pt]{2-1}{2-3}{}
    \UnderBrace[shorten, yshift=3pt]{2-5}{2-7}{}
   \end{bNiceArray}
   .\]
  Then the ``edge consistency'' constraints in defining $\mathcal B$ can be
  written as
  \[
   \left\{
   \begin{aligned}
    \mathbf V_j \bl{i} \mathbf V_j^\T & =\be{i,j} \\
    \mathbf V_i \br{j} \mathbf V_i^\T & =\be{i,j}
   \end{aligned}\right., \quad \forall (i,j)\in [n]^2.
  \]
 \end{lem}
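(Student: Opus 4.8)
The plan is to verify both identities by a direct entrywise expansion, relying only on the special structure of $\mathbf{V}_i$. First I would read off its two rows: the first row is the all-ones row vector with a single zero in position $i$, i.e.\ $\mathbf{1}^\T - \mathbf{u}_i^\T$ (the indicator of all positions other than $i$), and the second row is $\mathbf{u}_i^\T$, the $i$-th standard unit row vector. Consequently, for the left check nodes the congruence $\mathbf{V}_j \bl{i} \mathbf{V}_j^\T$ is the $2 \times 2$ matrix whose $(a,b)$ entry equals the bilinear form $(\text{row }a)\,\bl{i}\,(\text{row }b)^\T$, so that each of the four entries is a partial sum of the entries $\bl{i}(\ell,k)$ restricted according to whether the index $\ell$ (resp.\ $k$) equals $j$ or not.

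Next I would compute these four entries explicitly, using $(\mathbf{1}-\mathbf{u}_j)_\ell = [\ell \neq j]$ and $(\mathbf{u}_j)_\ell = [\ell = j]$. This gives
\[
\bigl(\mathbf{V}_j \bl{i} \mathbf{V}_j^\T\bigr)_{1,1} = \sum_{\ell\neq j,\,k\neq j}\bl{i}(\ell,k), \quad \bigl(\mathbf{V}_j \bl{i} \mathbf{V}_j^\T\bigr)_{1,2} = \sum_{\ell\neq j}\bl{i}(\ell,j),
\]
and similarly $\bigl(\mathbf{V}_j \bl{i} \mathbf{V}_j^\T\bigr)_{2,1} = \sum_{k\neq j}\bl{i}(j,k)$ and $\bigl(\mathbf{V}_j \bl{i} \mathbf{V}_j^\T\bigr)_{2,2} = \bl{i}(j,j)$. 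This is exactly the right-hand side of the first ``edge consistency'' constraint in Definition~\ref{def:DENFG:beliefpolytope}, establishing $\mathbf{V}_j \bl{i} \mathbf{V}_j^\T = \be{i,j}$. For the right check nodes I would repeat the identical computation with $\mathbf{V}_i$ in place of $\mathbf{V}_j$ and $\br{j}$ in place of $\bl{i}$; the only change is that the distinguished index is now $i$, which reproduces the second constraint verbatim.

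There is essentially no analytical obstacle here: the entire content is the recognition that a congruence by $\mathbf{V}_i$ realizes the linear map that marginalizes the belief $\bl{i}$ — whose row/column index $\ell$ records which of the variables $\{\xl{i,j'}\}_{j'}$ equals $1$ — down to the single pair $\txl{i,j}=(\xl{i,j},\xlprime{i,j})$. Under this map the ``$\neq j$'' block aggregates all configurations with $\xl{i,j}=0$, while the ``$=j$'' block isolates the configuration $\xl{i,j}=1$, matching the $\{0,1\}$-indexing of $\be{i,j}$. The only point requiring genuine care is keeping the index bookkeeping straight: $\bl{i}$ is marginalized over the edge index $j$ whereas $\br{j}$ is marginalized over the edge index $i$, and correspondingly the congruence uses $\mathbf{V}_j$ in the first case and $\mathbf{V}_i$ in the second.
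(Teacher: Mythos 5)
Your proposal is correct: the direct entrywise expansion of the congruence $\mathbf V_j \bl{i} \mathbf V_j^\T$, using that the rows of $\mathbf V_j$ are the indicators of $\{\ell \neq j\}$ and $\{\ell = j\}$, reproduces exactly the four partial sums in the ``edge consistency'' constraints of Definition~\ref{def:DENFG:beliefpolytope}, and the same computation with $\mathbf V_i$ and $\br{j}$ gives the second identity. The paper states this lemma without proof, treating it as an immediate reformulation of the definition, and your verification (including the marginalization interpretation, which matches the paper's remark following the definition) is precisely the intended argument.
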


 To generalize the Bethe free energy function to the complex-valued belief
 $\bbeta$, we must take care of the multivalued complex logarithm. In the
 following, we use the principal value of the complex logarithm, that is,
 \[
  \ln \theta \defeq\ln|\theta|+\iota\phi,\quad \theta\in\C\setminus\R_{\leq 0},
 \]
 where $\theta=|\theta|e^{\iota \phi}$ for $\phi\in(-\pi,\pi)$. This choice
 allows the identity $\ln\overline\phi=\overline{\ln\phi}$ to hold, and the
 functions defined below are always real valued. We also adopt the convention
 that $0\ln 0=0$.

 \begin{definition}
  The Bethe free energy function associated with the DE-NFG $\sfnDE(\btheta)$ is defined to be the function
  \[
   F_{\B}:\mathcal B\to\R,\quad \bbeta\mapsto  U_{\B}(\bbeta) - H_{\mathrm{B}}(\bbeta)
  \]
  where
  \begin{align*}
    & U_{\B} : \mathcal B\to \R,\quad \bbeta \mapsto \sum_{i,j} U_{\B,i,j}(\be{i,j})                                 \\
    & H_{\B}: \mathcal B \to \R,\quad \bbeta \mapsto \sum_iH_{\B,i}\big(\bl{i}\big)+\sum_{j}H_{\B,j}\big(\br{j}\big)
   -\sum_{i,j}H_{\B,i,j}(\be{i,j})
  \end{align*}
  with
  \begin{alignat*}{3}
   U_{\B,i,j}:\  & \be{i,j} &  & \mapsto &  & -\sum_{x,x'} \be{i,j}(x,x')\ln\big(\mathbf W_{i,j}(x,x')\big) \\
   H_{\B,i}:\    & \bl{i}   &  & \mapsto &  & -\sum_{\ell,k}\bl{i}(\ell,k)\ln\big(\bl{i}(\ell,k)\big)       \\
   H_{\B,j}:\    & \br{j}   &  & \mapsto &  & -\sum_{\ell,k}\br{j}(\ell,k)\ln\big(\br{j}(\ell,k)\big)       \\
   H_{\B,i,j}:\  & \be{i,j} &  & \mapsto &  & -\sum_{x,x'} \be{i,j}(x,x')\ln\big(\be{i,j}(x,x')\big)
  \end{alignat*}
  Here, $U_{\B}$ and $H_{\B}$ are called the Bethe average energy function and the Bethe entropy function, respectively.
  \edefn
 \end{definition}

 \begin{definition}
  Let $\btheta$ be a complex matrix and $\sfnDE(\btheta)$ be the associated DE-NFG. Define the Bethe partition function $Z_\B$ of $\sfnDE(\btheta)$ as
  \begin{align*}
   Z_\B (\sfnDE(\btheta)) \defeq \exp\Big(- \stat_{\bbeta \in \mathcal B} F_\B(\bbeta)\Big),
  \end{align*}
  where the operator $\stat$ returns the set of values of the function where the function is stationary.
  Here, we assume that the stationary point of \(F_\B(\,\cdot\,)\) exists and is unique.
  \edefn
 \end{definition}

 We look for stationary points of $F_\B$ instead of the minimum for the
 following reasons. First, the Bethe free energy function (or any concerned
 objective function) could be a complex-valued function for a general DE-NFG,
 so only stationary points make sense. Second, throughout our investigation,
 the Bethe free energy function for the DE-NFG \(\sfnDE(\btheta)\) does not
 exhibit the desired convexity properties, in contrast to its classical
 counterparts in \cite{vontobelBethePermanentNonnegative2013}.


 \section{Details of SPA on the DE-NFG \(\sfnDE(\btheta)\)}\label{appx:spa:details}

 \subsection{Message initialization}
 Consider $\mathbf M=\mathbf U \mathbf D \mathbf U^\mathrm H$ where $\mathbf U$
 is a Haar-random matrix over the unitary group $\mathrm U(2)$, and $\mathbf
 D=\mathrm{diag}(D_0,D_1),D_i\overset{\mathrm {i.i.d.}}{\sim}\mathrm{Exp}(1)$.
 All the left-going messages in a DE-NFG are initialized i.i.d.~as $\mathbf
 M/M_{00}$. The normalization by $M_{00}$ enables a simplified implementation
 of the message update rules.

 \subsection{Message update rules}
 \begin{figure}[t]
  \centering
  \begin{tikzpicture}[on grid, auto]
\tikzstyle{state}=[shape=rectangle, draw, minimum size=4mm]
\pgfmathsetmacro{\hdis}{8} 
\pgfmathsetmacro{\vdis}{1} 
\node[state, label=left:$\tfl{i}$] (f) at (0,-2*\vdis) {};
\node[state, label=right:$\tfr{j}$] (g) at (\hdis,-0*\vdis) {};
\node[state, label=above left:$\tfe{i,j}$] (m) at (0.5*\hdis, -1*\vdis) {};
\draw[double] (f) -- (m) node [pos=0.5, above, sloped] {$\txl{i,j}$} 
node [pos=0.2,sloped] {\color{red}$\muleft{L}{i,j,t}$} node [pos=0.8,below, sloped] {\color{blue}$\muright{L}{i,j,t}$};
\draw[double] (m) -- (g) node [pos=0.5, below, sloped] {$\txr{i,j}$} 
node [pos=0.2,sloped] {\color{red}$\muleft{R}{i,j,t}$} node [pos=0.8,below,sloped] {\color{blue}$\muright{R}{i,j,t}$}; 
\foreach \i in {0.33,0.66}{
    \draw[double] (f) -- ++ (0.25*\hdis, -\i*\vdis/2);
    \draw[dashed, double] ($(f)+(0.25*\hdis, -\i*\vdis/2)$) -- ++ (0.25*\hdis, -\i*\vdis/2);
    \draw[double] (g) -- ++ (-0.25*\hdis, \i*\vdis/2);
    \draw[dashed, double] ($(g)+(-0.25*\hdis, \i*\vdis/2)$) -- ++ (-0.25*\hdis, \i*\vdis/2);
}
\end{tikzpicture}
  \caption{Zoom in on parts of $\sfnDE(\btheta)$ with SPA messages.}\label{fig:DENFG:SPA}
 \end{figure}
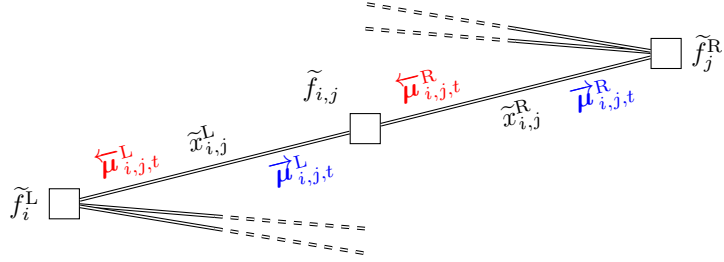

 At each time $t\in\N$, we associate a left-going message
 $\muleft{L}{i,j,t}:\mathcal X^2\to\C$ and a right-going message
 $\muright{L}{i,j,t}:\mathcal X^2\to\C$ with the variable $\txl{i,j}$, and
 similarly for $\txr{i,j}$ (see Fig.~\ref{fig:DENFG:SPA}). The message update
 rules for the DE-NFG $\sfnDE(\btheta,\boldsymbol{\epsilon})$ are derived from
 the sum-product algorithms on general NFGs
 \cite{loeligerIntroductionFactorGraphs2004}. Assume that for $t=0$, all the
 messages are suitably initialized. One can show that the message update rules
 for the function nodes $\tfl{i}$ and $\tfe{i,j}$ for $t\geq 1, (i,j)\in[n]^2$
 are given by, respectively,
 \begin{equation*}
  \begin{split}
    & \left\{\begin{split}
              \muright{L}{i,j,t}(1,1) & =\prod_{j'\neq j}\muleft{L}{i,j',t-1}(0,0)                                  \\
              \frac{\muright{L}{i,j,t}(0,1)}{\muright{L}{i,j,t}(1,1)}
                                      & =\sum_{j'\neq j}\frac{\muleft{L}{i,j',t-1}(1,0)}{\muleft{L}{i,j',t-1}(0,0)} \\
              \frac{\muright{L}{i,j,t}(1,0)}{\muright{L}{i,j,t}(1,1)}
                                      & =\sum_{j'\neq j}\frac{\muleft{L}{i,j',t-1}(0,1)}{\muleft{L}{i,j',t-1}(0,0)} \\
              \frac{\muright{L}{i,j,t}(0,0)}{\muright{L}{i,j,t}(1,1)}
                                      & =\sum_{j'\neq j}\frac{\muleft{L}{i,j',t-1}(1,1)}{\muleft{L}{i,j',t-1}(0,0)}
              +\sum_{\substack{j_1,j_2\neq j                                                                        \\j_1\neq j_2}}\frac{\muleft{L}{i,j_1,t-1}(1,0)}{\muleft{L}{i,j_1,t-1}
               (0,0)}\frac{\muleft{L}{i,j_2,t-1}(0,1)}{\muleft{L}{i,j_2,t-1}(0,0)}
             \end{split}\right. \\
    & \Bigg\{\begin{split}
              \muleft{L}{i,j,t}  & =\mathbf W_{i,j}\odot \muleft{R}{i,j,t}  \\
              \muright{R}{i,j,t} & =\mathbf W_{i,j}\odot \muright{L}{i,j,t}
             \end{split},
  \end{split}
 \end{equation*}
 where $\odot$ denotes the entry-wise product of two matrices.
 The update rules for the function nodes $\tfr{j}$ are similar to $\tfl{i}$, except that
 the incoming messages are \(\muright{R}{i,j,t}\) and the outgoing messages are \(\muleft{R}{i,j,t}\).
 \begin{lem}
  Viewing $\muleft{L}{i,j,t}$ as a $2\times 2$ matrix, we define
  $\overleftarrow\Delta^{\mathrm{L}}_{i,j,t}\defeq\det\big(\muleft{L}{i,j,t}\big)$,
  and similarly for other messages. Then the fourth update rule for $\tfl{i}$ can be replaced by
  \[
   \frac{\overrightarrow{\Delta}^{\mathrm{L}}_{i,j,t}}{\big(\muright{L}{i,j,t}(1,1)\big)^2}
   =\sum_{j'\neq j}\frac{\overleftarrow{\Delta}^{\mathrm{L}}_{i,j',t-1}}{\big(\muleft{L}{i,j',t-1}(0,0)\big)^2}.
  \]

  In the case where \(\mathbf W_{i,j}= \Bigl[
   \begin{smallmatrix}
    1            & \overline{\theta_{i,j}} \\
    \theta_{i,j} & |\theta_{i,j}|^2
   \end{smallmatrix}
   \Bigr]\), which is rank-$1$, the update rules for $\tfe{i,j}$ can be replaced by
  \begin{equation*}
   \left\{
   \begin{aligned}
    \muleft{L}{i,j,t}(0,0)                    & = \muleft{R}{i,j,t}(0,0)                        \\
    \muleft{L}{i,j,t}(0,1)                    & = \overline{\theta_{i,j}}\muleft{R}{i,j,t}(0,1) \\
    \muleft{L}{i,j,t}(1,0)                    & = {\theta_{i,j}}\muleft{R}{i,j,t}(1,0)          \\
    \overleftarrow{\Delta}^\mathrm{L}_{i,j,t} & =
    |\theta_{i,j}|^2 \overleftarrow{\Delta}^\mathrm{R}_{i,j,t}
   \end{aligned}\right. \ \mathrm{and}\
   \left\{
   \begin{aligned}
    \muright{R}{i,j,t}(0,0)                    & =\muright{L}{i,j,t}(0,0)                        \\
    \muright{R}{i,j,t}(0,1)                    & =\overline{\theta_{i,j}}\muright{L}{i,j,t}(0,1) \\
    \muright{R}{i,j,t}(1,0)                    & ={\theta_{i,j}}\muright{L}{i,j,t}(1,0)          \\
    \overrightarrow{\Delta}^\mathrm{R}_{i,j,t} & =
    |\theta_{i,j}|^2 \overrightarrow{\Delta}^\mathrm{L}_{i,j,t}
   \end{aligned}\right. .
  \end{equation*}
  Therefore, we can use $\big(\muleft{L}{i,j,t}(0,0),\muleft{L}{i,j,t}(0,1),\muleft{L}{i,j,t}(1,0),\overleftarrow{\Delta}^{\mathrm{L}}_{i,j,t}\big)$
  to represent the left-going message $\muleft{L}{i,j,t}$, and similarly for other messages.
 \end{lem}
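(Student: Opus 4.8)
The plan is to prove both assertions by direct substitution into the update rules already stated for $\tfl{i}$ and $\tfe{i,j}$; the lemma is a reparametrization of those rules rather than a genuinely new derivation, so no auxiliary machinery is needed. I would handle the two assertions in turn: first the determinant form of the fourth $\tfl{i}$-rule, then the simplified rank-$1$ edge rules.

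For the determinant identity, I would abbreviate the four entries of each incoming message by $a_{j'}\defeq\muleft{L}{i,j',t-1}(0,0)$ and, analogously, $b_{j'},c_{j'},d_{j'}$ for the $(0,1),(1,0),(1,1)$ entries, so that $\overleftarrow{\Delta}^{\mathrm L}_{i,j',t-1}=a_{j'}d_{j'}-b_{j'}c_{j'}$. Writing $P\defeq\muright{L}{i,j,t}(1,1)=\prod_{j'\neq j}a_{j'}$ from the first rule, the next two rules give $\muright{L}{i,j,t}(0,1)=P\sum_{j'\neq j}c_{j'}/a_{j'}$ and $\muright{L}{i,j,t}(1,0)=P\sum_{j'\neq j}b_{j'}/a_{j'}$. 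I would then insert these, together with the fourth rule, into $\overrightarrow{\Delta}^{\mathrm L}_{i,j,t}=\muright{L}{i,j,t}(0,0)\,P-\muright{L}{i,j,t}(0,1)\,\muright{L}{i,j,t}(1,0)$ and factor out the common $P^2$ (the determinant is orientation-independent, so the exact row/column convention does not matter).

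The only nontrivial step, and the one I expect to be the main obstacle, is the cancellation of the off-diagonal cross terms. The fourth rule supplies a double sum restricted to ordered pairs $j_1\neq j_2$ (both $\neq j$) of $(c_{j_1}/a_{j_1})(b_{j_2}/a_{j_2})$, whereas the product $\muright{L}{i,j,t}(0,1)\,\muright{L}{i,j,t}(1,0)/P^2$ equals the full double sum $\bigl(\sum_{j'\neq j}c_{j'}/a_{j'}\bigr)\bigl(\sum_{j'\neq j}b_{j'}/a_{j'}\bigr)$ over all pairs. Subtracting, the $j_1\neq j_2$ contributions annihilate and only the diagonal term $-\sum_{j'\neq j}b_{j'}c_{j'}/a_{j'}^2$ survives; adding the $\sum_{j'\neq j}d_{j'}/a_{j'}$ term from the fourth rule collapses the bracket to $\sum_{j'\neq j}(a_{j'}d_{j'}-b_{j'}c_{j'})/a_{j'}^2=\sum_{j'\neq j}\overleftarrow{\Delta}^{\mathrm L}_{i,j',t-1}/a_{j'}^2$. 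Hence $\overrightarrow{\Delta}^{\mathrm L}_{i,j,t}=P^2\sum_{j'\neq j}\overleftarrow{\Delta}^{\mathrm L}_{i,j',t-1}/a_{j'}^2$, and dividing by $P^2=\bigl(\muright{L}{i,j,t}(1,1)\bigr)^2$ yields the claimed identity. The care required here is purely combinatorial bookkeeping of the ordered-pair index sets.

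For the rank-$1$ edge rules, I would simply read off the Hadamard product $\muleft{L}{i,j,t}=\mathbf W_{i,j}\odot\muleft{R}{i,j,t}$ entrywise, using the entries of $\mathbf W_{i,j}=\bigl[\begin{smallmatrix}1 & \overline{\theta_{i,j}}\\ \theta_{i,j} & |\theta_{i,j}|^2\end{smallmatrix}\bigr]$; the $(0,0),(0,1),(1,0)$ components reproduce the first three stated equations verbatim. For the determinant relation I would compute $\overleftarrow{\Delta}^{\mathrm L}_{i,j,t}$ directly and observe that both the $(0,0)(1,1)$ product and the $(0,1)(1,0)$ product acquire the same scalar $|\theta_{i,j}|^2$ (since $\overline{\theta_{i,j}}\cdot\theta_{i,j}=|\theta_{i,j}|^2$), so it factors out to give $\overleftarrow{\Delta}^{\mathrm L}_{i,j,t}=|\theta_{i,j}|^2\,\overleftarrow{\Delta}^{\mathrm R}_{i,j,t}$; the right-going relations follow identically by symmetry. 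Finally, since the update in these coordinates never references the $(1,1)$ entry, and since that entry is recoverable as $\bigl(\overleftarrow{\Delta}^{\mathrm L}_{i,j,t}+\muleft{L}{i,j,t}(0,1)\muleft{L}{i,j,t}(1,0)\bigr)/\muleft{L}{i,j,t}(0,0)$ whenever the $(0,0)$ entry is nonzero (which the $M_{00}$-normalization guarantees), the four-tuple $\bigl(\muleft{L}{i,j,t}(0,0),\muleft{L}{i,j,t}(0,1),\muleft{L}{i,j,t}(1,0),\overleftarrow{\Delta}^{\mathrm L}_{i,j,t}\bigr)$ is a faithful and self-contained representation of the message, justifying its use in the implementation.
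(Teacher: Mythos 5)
Your proof is correct, and it is essentially the intended argument: the paper states this lemma without proof precisely because it is the routine verification you carry out — entrywise substitution of rules 1--3 into the determinant, cancellation of the off-diagonal cross terms leaving only the diagonal $j_1=j_2$ contributions, the factor $\overline{\theta_{i,j}}\theta_{i,j}=|\theta_{i,j}|^2$ scaling both products in the edge-node determinant equally, and recovery of the $(1,1)$ entry from the four-tuple via $\bigl(\Delta+M(0,1)M(1,0)\bigr)/M(0,0)$. The only cosmetic remark is that nonvanishing of the $(0,0)$ entries is better justified by positive semi-definiteness of the messages (a PSD matrix with zero $(0,0)$ entry has zero off-diagonals, which the SPA setup excludes) than by the $M_{00}$-normalization, which is imposed only at initialization.
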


 \subsection{SPA fixed-point belief}
 The SPA-based belief at function nodes $\tfl{i},\tfr{j}$ and $\tfe{i,j}$ at
 time $t$ are given by, respectively,
 \begin{align*}
  \bl{i,t}(\ell,k) & \propto \prod_{j}\muleft{L}{i,j,t}\big([j=\ell],[j=k]\big),(\ell,k)\in[n]^2  \\
  \br{j,t}(\ell,k) & \propto \prod_{i}\muright{R}{i,j,t}\big([i=\ell],[i=k]\big),(\ell,k)\in[n]^2 \\
  \be{i,j,t}       & \propto \mathbf W_{i,j}\odot\muright{L}{i,j,t}\odot\muleft{R}{i,j,t},
 \end{align*}
 and they are normalized according to the sum constraints, i.e.,
 \[
  \tr(\mathbf{1}_{n}\bl{i,t})=1, \quad
  \tr(\mathbf{1}_{n}\br{j,t})=1, \quad \text{and }
  \tr(\mathbf{1}_{2}\be{i,j,t})=1.
 \]

 \begin{prop}[See \protect\cite{caoDoubleedgeFactorGraphs2017}]
  At the fixed point of the SPA, the fixed-point belief vector $\bbeta_{\mathrm F}$ satisfies the edge consistency
  constraints. Furthermore, if the SPA is initialized with positive semidefinite messages, then
  all fixed-point beliefs are positive semidefinite.
 \end{prop}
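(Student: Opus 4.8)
The plan is to treat the two assertions separately: edge consistency follows by combining the fixed-point message equations with the marginalization that defines the beliefs, while positive semidefiniteness (PSD) follows from an invariance argument for the message updates.

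For edge consistency I work at a fixed point and drop the time index (noting that the damped iteration shares its fixed points with the undamped one, so the stationary equations are exactly the update rules stated above). I marginalize the check-node belief $\bl{i}(\ell,k)\propto\prod_j\muleft{L}{i,j}([j=\ell],[j=k])$ onto the edge $(i,j)$ by summing over valid configurations $(\ell,k)$ whose value on that edge is a fixed $(x,x')\in\mathcal X^2$; splitting into the four cases according to whether $\ell$ and $k$ equal $j$ reproduces exactly the matrix appearing in the edge-consistency constraint of Def.~\ref{def:DENFG:beliefpolytope}. Factoring the $j$-th message out of the product, the residual sum over the remaining configurations is, by the $\tfl{i}$ check-node update rule, precisely the outgoing message $\muright{L}{i,j}$, so the marginal equals $\muleft{L}{i,j}\odot\muright{L}{i,j}$ up to normalization. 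The edge-weight update $\muleft{L}{i,j}=\mathbf W_{i,j}\odot\muleft{R}{i,j}$ rewrites this as $\mathbf W_{i,j}\odot\muleft{R}{i,j}\odot\muright{L}{i,j}=\be{i,j}$, and the symmetric computation with $\br{j}$ together with $\muright{R}{i,j}=\mathbf W_{i,j}\odot\muright{L}{i,j}$ yields the other constraint. Since all beliefs are normalized, proportionality upgrades to equality.

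For PSD I prove by induction that every message is a Hermitian PSD $2\times2$ matrix, so the same holds at the fixed point. The edge weight $\mathbf W_{i,j}$ is Hermitian with $\det\mathbf W_{i,j}=\epsilon_{i,j}\geq0$, hence PSD; by the Schur product theorem the edge-weight updates (Hadamard products with $\mathbf W_{i,j}$) preserve PSD, and the mid-point damping step is an average of two PSD matrices and so also preserves PSD. The substantive step is the check-node update: setting $b_{j'}\defeq\muleft{L}{i,j'}(1,0)/\muleft{L}{i,j'}(0,0)$, Hermiticity of the output follows from $\muleft{L}{i,j'}(0,1)=\overline{\muleft{L}{i,j'}(1,0)}$, the diagonal is manifestly nonnegative, and for the determinant one compares $\muright{L}{i,j}(0,0)/\muright{L}{i,j}(1,1)=\sum_{j'}\muleft{L}{i,j'}(1,1)/\muleft{L}{i,j'}(0,0)+|\sum_{j'}b_{j'}|^2-\sum_{j'}|b_{j'}|^2$ against $|\muright{L}{i,j}(0,1)/\muright{L}{i,j}(1,1)|^2=|\sum_{j'}b_{j'}|^2$, using the incoming conditions $\muleft{L}{i,j'}(1,1)/\muleft{L}{i,j'}(0,0)\geq|b_{j'}|^2$ term by term.

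With all fixed-point messages PSD, the edge beliefs $\be{i,j}\propto\mathbf W_{i,j}\odot\muright{L}{i,j}\odot\muleft{R}{i,j}$ are PSD by the Schur product theorem, and the check-node beliefs admit the decomposition $\bl{i}\propto vv^{*}+D$ with $v_\ell=\muleft{L}{i,\ell}(1,0)/\muleft{L}{i,\ell}(0,0)$ and $D$ diagonal with entries $\det\muleft{L}{i,\ell}/\muleft{L}{i,\ell}(0,0)^2\geq0$, both summands PSD. I expect the determinant step of the check-node update to be the main obstacle: the cross term $|\sum_{j'}b_{j'}|^2-\sum_{j'}|b_{j'}|^2$ can be negative and is only controlled after absorbing it with the incoming determinant inequalities. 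A secondary care point is the singular case where some $\muleft{L}{i,j'}(0,0)$ vanishes, making the ratio form of the update ill-defined; there I would pass to a limit of strictly PSD messages or argue directly from the determinant form of the update given in the preceding lemma.
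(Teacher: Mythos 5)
Your proof is correct, but there is no in-paper proof to compare it against: the proposition is stated with a pointer to \cite{caoDoubleedgeFactorGraphs2017}, where the analogous facts are established for general DE-NFGs, and the paper gives no argument of its own. What your route buys is a self-contained, elementary verification specialized to $\sfnDE(\btheta,\bepsilon)$. The edge-consistency half checks out: factoring the $j$-th message out of $\bl{i}(\ell,k)\propto\prod_{j'}\muleft{L}{i,j'}\bigl([j'{=}\ell],[j'{=}k]\bigr)$, the residual sums over the four cases of $(\ell,k)$ relative to $j$ are exactly the four entries of the check-node output, so the marginal is $\muleft{L}{i,j}\odot\muright{L}{i,j}$, and the edge-weight rule $\muleft{L}{i,j}=\mathbf W_{i,j}\odot\muleft{R}{i,j}$ converts this into $\mathbf W_{i,j}\odot\muleft{R}{i,j}\odot\muright{L}{i,j}\propto\be{i,j}$, with normalization upgrading proportionality to equality (the $\br{j}$ side is symmetric via $\muright{R}{i,j}=\mathbf W_{i,j}\odot\muright{L}{i,j}$). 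For the PSD half, your determinant comparison is precisely the identity $\overrightarrow{\Delta}^{\mathrm L}_{i,j}/\bigl(\muright{L}{i,j}(1,1)\bigr)^2=\sum_{j'\neq j}\overleftarrow{\Delta}^{\mathrm L}_{i,j'}/\bigl(\muleft{L}{i,j'}(0,0)\bigr)^2$ that the paper states, without proof, as the lemma immediately preceding the proposition; so your induction simultaneously supplies a proof of that lemma, and your rank-one-plus-diagonal decomposition $\bl{i}\propto vv^{*}+D$ correctly handles the check-node beliefs, which are not covered by the Schur product theorem alone. Two small points. First, the $(0,0)$ entry of the check-node output is not ``manifestly'' nonnegative as you write: it contains the possibly negative cross term $\bigl|\sum_{j'}b_{j'}\bigr|^2-\sum_{j'}|b_{j'}|^2$ and becomes nonnegative only after invoking the same incoming inequalities $\muleft{L}{i,j'}(1,1)/\muleft{L}{i,j'}(0,0)\geq|b_{j'}|^2$ that you use for the determinant; your computation does deliver this, so it is a wording issue rather than a gap. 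Second, your caveat about vanishing $(0,0)$ entries is genuine, since the paper's update rules are written in ratio form; but a PSD message with $\mu(0,0)=0$ forces $\mu(0,1)=\mu(1,0)=0$, so the limiting (or determinant-form) argument you sketch does close that case.
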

 As a direct consequence of the above proposition, a fixed point of the SPA corresponds to
 a pseudo-belief in the belief polytope. It turns out that fixed points of the SPA have
 a deeper connection with stationary points of the Bethe free energy function.

 \begin{thm} \label{thm:stationary:fixedpoint}
  Let $\btheta$ be a complex matrix that contains no $0$, and $\sfnDE(\btheta)$ be the
  associated DE-NFG. The interior stationary points of the Bethe free energy function $F_\B$
  correspond to non-diagonal fixed points of the SPA and vice versa.
 \end{thm}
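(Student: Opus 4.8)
The plan is to follow the classical Lagrangian-duality argument that links SPA fixed points to stationary points of the Bethe free energy function, adapted to the complex-valued quantum-mass-function setting via Wirtinger calculus. First I would observe that at an interior point of the belief polytope $\mathcal B$ all beliefs $\bl{i}, \br{j}, \be{i,j}$ are strictly positive definite and the entries appearing inside the logarithms lie off the branch cut, so that the principal logarithms in $F_\B$ and their Wirtinger derivatives are well-defined and the positive-semidefiniteness inequality constraints are inactive. Hence only the equality constraints of $\mathcal B$ must be enforced with Lagrange multipliers: the normalization constraints $\tr(\mathbf 1_n \bl{i})=1$, $\tr(\mathbf 1_n \br{j})=1$, $\tr(\mathbf 1_2 \be{i,j})=1$, and the edge-consistency constraints $\mathbf V_j \bl{i}\mathbf V_j^\T=\be{i,j}$ and $\mathbf V_i \br{j}\mathbf V_i^\T=\be{i,j}$ from the lemma above.

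Next I would form the Lagrangian, attaching scalar multipliers to the normalization constraints and Hermitian matrix multipliers $\Lambda_{i,j}^L, \Lambda_{i,j}^R$ to the two families of edge-consistency constraints. Since $F_\B$ is real-valued (a consequence of the convention $\ln\overline\theta=\overline{\ln\theta}$), its stationary points are characterized by the vanishing of the Wirtinger derivatives $\partial L/\partial\overline{\bl{i}(\ell,k)}$, $\partial L/\partial\overline{\br{j}(\ell,k)}$ and $\partial L/\partial\overline{\be{i,j}(x,x')}$. In these derivatives the entropy terms produce logarithms of the beliefs, the average-energy term produces $\ln\mathbf W_{i,j}$, and the edge-consistency terms produce the multipliers pulled back through $\mathbf V_j$ and $\mathbf V_i$. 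Solving the resulting equations, I would show that each belief factorizes as a normalized product of entrywise exponentials of the multipliers, with $\be{i,j}$ proportional to $\mathbf W_{i,j}$ times the exponentials of $\Lambda_{i,j}^L$ and $\Lambda_{i,j}^R$.

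I would then define SPA messages as the exponentiated multipliers, reading off $\muleft{L}{i,j}$ and $\muright{R}{i,j}$ as $2\times 2$ matrices, and verify that the factorized belief equations are exactly the SPA belief equations $\bl{i}\propto\prod_j\muleft{L}{i,j}(\cdot,\cdot)$, $\be{i,j}\propto\mathbf W_{i,j}\odot\muright{L}{i,j}\odot\muleft{R}{i,j}$ stated in the previous subsection, while the edge-consistency constraints reproduce the check-node and edge-weight update rules evaluated at their fixed point. This gives the forward direction (interior stationary point $\Rightarrow$ SPA fixed point). The reverse direction follows by reversing the algebra: setting the multipliers equal to the logarithms of the fixed-point messages, one checks that the induced beliefs satisfy the Wirtinger stationarity conditions and lie in the interior of $\mathcal B$.

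The main obstacle will be the \emph{non-diagonal} qualifier and its interplay with interiority. I expect that diagonal SPA messages, i.e.\ those whose off-diagonal entries $\muleft{L}{i,j}(0,1)$ and $\muleft{L}{i,j}(1,0)$ vanish, correspond precisely to beliefs on the boundary of $\mathcal B$, where the off-diagonal belief entries vanish, the DE-NFG decouples into the two independent NFGs $\sfn(\btheta)$ and $\sfn(\overline\btheta)$, and the logarithms become singular so that the Wirtinger derivatives are no longer defined. Establishing that the bijection between interior stationary points and SPA fixed points is exactly the bijection with non-diagonal messages therefore requires a careful positivity argument tying nonvanishing off-diagonal message entries to strict positive-definiteness of the beliefs; this rank/positivity bookkeeping is the delicate step that distinguishes the DE-NFG result from its classical NFG counterpart.
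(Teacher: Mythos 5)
Your overall strategy coincides with the paper's proof: a Yedidia-style Lagrangian in which scalar multipliers enforce the trace-normalization constraints and matrix multipliers enforce the edge-consistency constraints (in the $\mathbf V_i,\mathbf V_j$ formulation), Wirtinger derivatives set to zero to obtain belief factorizations, the identification $\boldsymbol{\lambda}^{\mathrm L}_{i,j}=\ln\muleft{L}{i,j}$, $\boldsymbol{\lambda}^{\mathrm R}_{i,j}=\ln\muright{R}{i,j}$ to match these with the SPA fixed-point belief equations, and the reverse direction by running the algebra backwards as in Yedidia et al. Up to bookkeeping conventions (the paper's stationarity condition produces $\be{i,j}(x,x')\propto\bigl(\mathbf W_{i,j}(x,x')\bigr)^{-1}$ times exponentiated multipliers, which becomes $\mathbf W_{i,j}\odot\muright{L}{i,j}\odot\muleft{R}{i,j}$ only after invoking the edge-node update rules), this is the same proof.

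The one place where your plan goes wrong is exactly the step you flag as delicate: the claim that diagonal SPA messages correspond to beliefs on the \emph{boundary} of $\mathcal B$. This is false. Diagonal messages produce diagonal beliefs, and a diagonal belief matrix with strictly positive diagonal entries is positive definite, hence lies in the relative interior of the belief polytope --- the PSD inequality constraints are inactive there, and no "boundary" argument is available. What actually fails at such points is differentiability of $F_\B$ itself: the Bethe entropy terms involve the entrywise principal logarithms $\ln\bl{i}(\ell,k)$ and $\ln\be{i,j}(x,x')$, which are singular as soon as any off-diagonal belief entry vanishes, irrespective of positive definiteness. So "interior stationary point" must be read as a point where all belief entries are non-zero (so the Wirtinger derivatives exist and vanish), and the matching condition on the SPA side is that every message has non-zero off-diagonal entries, so that $\boldsymbol{\lambda}=\ln\boldsymbol{\mu}$ is well-defined; this is how the paper disposes of the non-diagonal qualifier, in a single parenthetical remark (using also the hypothesis that $\btheta$ contains no zero entry, so that $\mathbf W_{i,j}$ has non-vanishing entries). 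Your proposed "positivity argument tying nonvanishing off-diagonal message entries to strict positive-definiteness of the beliefs" is therefore chasing the wrong invariant: the relevant bookkeeping concerns non-vanishing entries (branch-cut avoidance in the complex logarithm), not rank or strict positivity of the belief matrices.
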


 \begin{proof}
  The idea of the proof of Theorem \ref{thm:stationary:fixedpoint} is the same as
  \cite[Thm.~2]{yedidiaConstructingFreeenergyApproximations2005}, and we use the complex derivatives
  defined in the Wirtinger calculus (see, e.g.,\cite{kreutz-delgadoComplexGradientOperator2009}).

  We associate Lagrangian multipliers with the equality constraints defining
  $\mathcal B$ and form the Lagrangian $\mathcal{L}$, i.e.,
  \begin{align*}
   \mathcal L & = F_\B+\sum_{i,j}\gamma_{i,j} \Bigl[\tr \big(\mathbf{1}_{2}\be{i,j}\big)-1\Bigr]    \\
              & \quad +\sum_i \gamma^{\mathrm L}_i \Bigl[\tr \big(\mathbf{1}_{n}\bl{i}\big)-1\Bigr]
   +\sum_j \gamma^{\mathrm R}_j \Bigl[\tr \big(\mathbf{1}_{n}\br{j}\big)-1\Bigr]                    \\
              & \quad +\sum_{i,j}\sum_{x,x'}\mathrm{Re}
   \Bigl\{\boldsymbol{\lambda}^{\mathrm L}_{i,j}(x,x')
   \Bigl(\be{i,j}(x,x')-\bigl[\mathbf V_j\bl{i}\mathbf V_j^\T\bigr]_{x,x'}\Bigr)\Bigr\}             \\
              & \quad +\sum_{i,j}\sum_{x,x'}\mathrm{Re}
   \Bigl\{\boldsymbol{\lambda}^{\mathrm R}_{i,j}(x,x')
   \Bigl(\be{i,j}(x,x')-\bigl[\mathbf V_i\br{j}\mathbf V_i^\T\bigr]_{x,x'}\Bigr)\Bigr\},
  \end{align*}
  where $\gamma^{\mathrm L}_i,\gamma^{\mathrm R}_j,\gamma_{i,j}\in\R$ and $\boldsymbol{\lambda}^{\mathrm L}_{i,j},\boldsymbol{\lambda}^{\mathrm R}_{i,j}\in\C^{2\times 2}$.
  Since only interior stationary points are considered, all beliefs are positive definite.

  By setting the complex derivatives $\frac{\partial \mathcal L}{\partial
  \bl{i}(\ell,k)}, \frac{\partial \mathcal L}{\partial \br{j}(\ell,k)}$ and
  $\frac{\partial \mathcal L}{\partial \be{i,j}(x,x')}$ to be zero, we obtain
  the following stationarity conditions:
  \begin{equation*}
   \begin{split}
    \be{i,j}(x,x')
    = & \,\big(\mathbf W_{i,j}(x,x')\big)^{-1}\exp\Big\{\frac{1}{2}\big(\boldsymbol{\lambda}^{\mathrm L}_{i,j}(x,x')+\overline{\boldsymbol{\lambda}^{\mathrm L}_{i,j}(x',x)}
    + \boldsymbol{\lambda}^{\mathrm R}_{i,j}(x,x')+\overline{\boldsymbol{\lambda}^{\mathrm R}_{i,j}(x',x)}\big)-1+\gamma_{i,j}\Big\}                                         \\
    \ln \bl{i}(\ell,k)
    = & \,
    \left\{\begin{aligned}
             & \sum_{j\neq \ell}\boldsymbol{\lambda}^{\mathrm L}_{i,j}(0,0)+\boldsymbol{\lambda}^{\mathrm L}_{i,\ell}(1,1)-1-\gamma^{\mathrm L}_i,
             & \ell=k                                                                                                                              \\
             & \sum_{j\neq \ell,k}\boldsymbol{\lambda}^{\mathrm L}_{i,j}(0,0)
            +\frac{1}{2}\big(\boldsymbol{\lambda}^{\mathrm L}_{i,\ell}(1,0)+\overline{\boldsymbol{\lambda}^{\mathrm L}_{i,\ell}(0,1)}
            + \boldsymbol{\lambda}^{\mathrm L}_{i,k}(0,1)+\overline{\boldsymbol{\lambda}^{\mathrm R}_{i,k}(1,0)}\big)-1-\gamma^{\mathrm L}_i,
             & \ell\neq k
           \end{aligned}\right.
   \end{split}
  \end{equation*}
  and similarly for $\br{j}$.

  Finally, by identifying the Lagrangian multipliers with fixed-point messages
  of the SPA, i.e.,
  \[
   \left\{
   \begin{aligned}
    \boldsymbol{\lambda}^{\mathrm L}_{i,j}(x,x') & = \ln \muleft{L}{i,j}(x,x')  \\
    \boldsymbol{\lambda}^{\mathrm R}_{i,j}(x,x') & = \ln \muright{R}{i,j}(x,x')
   \end{aligned}\right. , \quad \forall\, (x,x')\in\mathcal X^2,(i,j)\in[n]^2,
  \]
  we obtain the fixed-point beliefs from the stationarity conditions. (Note that
  we require that the off-diagonal entry of each SPA message be non-zero, so
  that the above quantities are well-defined.) The above shows that an interior
  stationary point of the Bethe free energy function is a fixed point of SPA.
  The reverse direction is the same as in the proof of
  \cite[Thm.~2]{yedidiaConstructingFreeenergyApproximations2005}.
 \end{proof}

 \subsection{Supplementary numerical results}
 In Fig.~\ref{fig:DENFG:SPA:alpha:supp}, we plot the empirical mean and
 standard deviation of
 \begin{align*}
  \ln
  \biggl( \frac{\E[Z]}{\E[Z_{\B,\SPA}]} \biggr),
  \quad
  \ln
  \biggl(
  \E\biggl[\frac{Z}{Z_{\B,\SPA}}\biggr]
  \biggr),
  \quad
  \E\biggl[\ln\biggl( \frac{Z}{Z_{\B,\SPA}} \biggr) \biggr]
 \end{align*}
 as functions of $\alpha$ for $n= 4$ and $10$, respectively. We observe that, as $\alpha$ increases,
 the (empirical) standard deviation of
 \[
  \frac{Z}{Z_{\B,\SPA}}
  \text{ and }
  \ln\biggl( \frac{Z}{Z_{\B,\SPA}} \biggr)
 \]
 increases, indicating that the quality of the SPA-based Bethe approximation
 degrades when the distribution of $\theta$ become ``fully'' complex-valued.

 \begin{figure}[t]
  \centering
  \includegraphics[width=.47\linewidth]{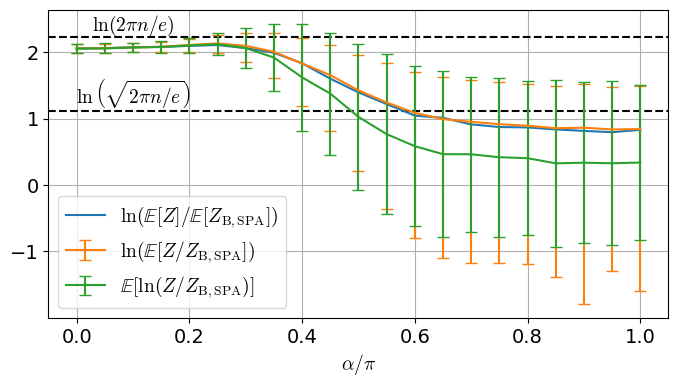}
  \hspace{0.5em}
  \includegraphics[width=.47\linewidth]{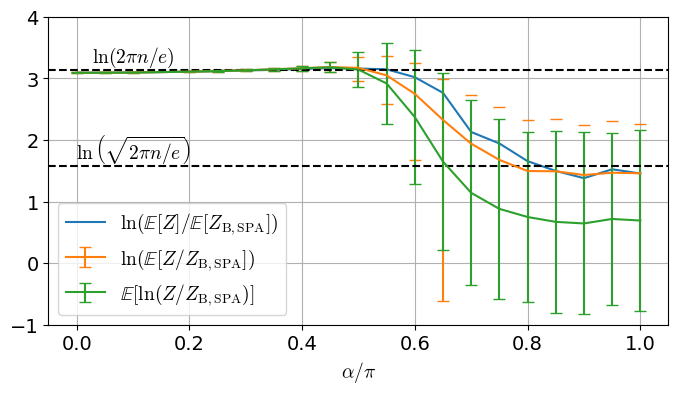}
  \caption{Numerical results of the SPA on DE-NFG $\sfnDE(\btheta)$ for $n=4$ (left) and $n=10$ (right).}
  \label{fig:DENFG:SPA:alpha:supp}
 \end{figure}


 \section{Double Covers of the DE-NFG \(\sfnDE(\btheta)\)}\label{appx:double:cover}
 \subsection{Definition and examples}
 \begin{definition}
  A graph
  $\widetilde{\mathsf{G}} = (\widetilde{\mathcal{V}},
   \widetilde{\mathcal{E}})$ is a cover of a graph
  $\mathsf{G} = (\mathcal{V}, \mathcal{E})$ if there exists a graph
  homomorphism $\pi: \widetilde{\mathcal{V}} \to \mathcal{V}$ such that for each
  $v \in \mathcal{V}$ and $\widetilde{v} \in \pi^{-1}(v)$, the neighborhood of
  $\widetilde{v}$ is mapped bijectively to the neighborhood of $v$.

  Given a cover $\widetilde{\mathsf{G}} = (\widetilde{\mathcal{V}},
  \widetilde{\mathcal{E}})$, if there is a positive integer $M$ such that
  $|\pi^{-1}(v)| = M$ for all $v\in \mathcal{V}$, then $\widetilde{\mathsf{G}}$
  is called an $M$-cover. The set of all $M$-covers of $\mathsf{G}$ is denoted
  by $\widetilde{\mathcal{G}}_M$. \edefn
 \end{definition}


 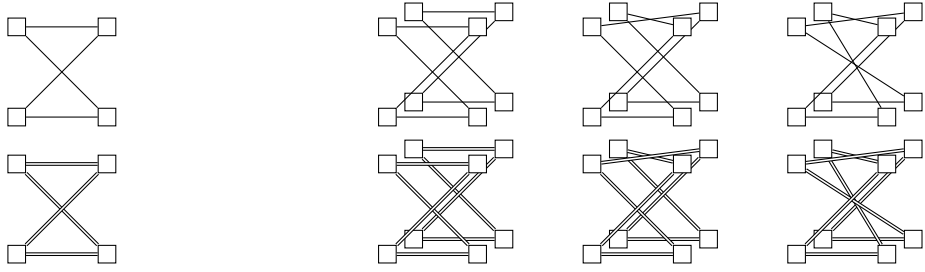
\begin{figure}[t]
  \begin{minipage}[t]{0.45\linewidth}
\centering
\begin{tikzpicture}[node distance=1.2cm, on grid, auto]
\tikzstyle{state}=[shape=rectangle, draw, minimum size=2mm, fill=white]
\begin{pgfonlayer}{main}
\node[state] (f11) [] {};
\node[state, right of=f11] (f12) [] {};
\node[state, below of=f11] (f21) [] {};
\node[state, right of=f21] (f22) [] {};
\end{pgfonlayer}
\begin{pgfonlayer}{main}
\draw[] (f11) -- (f12);
\draw[] (f11) -- (f22);
\draw[] (f21) -- (f12);
\draw[] (f21) -- (f22);
\end{pgfonlayer}
\end{tikzpicture}
\end{minipage}
\begin{minipage}[t]{0.15\linewidth}
\begin{tikzpicture}[node distance=1.2cm, on grid, auto]
\tikzstyle{state}=[shape=rectangle, draw, minimum size=2mm, fill=white]
\begin{pgfonlayer}{main}
\node[state] (f11) [] {};
\node[state,right of=f11] (f12) [] {};
\node[state,below of=f11] (f21) [] {};
\node[state,right of=f21] (f22) [] {};
\end{pgfonlayer}

\begin{pgfonlayer}{above}
\node[state] (g11) [below left=0.2cm and 0.35cm of f11]  {};
\node[state,right of=g11] (g12) [] {};
\node[state,below of=g11] (g21) [] {};
\node[state,right of=g21] (g22) [] {};
\end{pgfonlayer}
\begin{pgfonlayer}{main}
\draw[] (f11) -- (f12);
\draw[] (f11) -- (f22);
\draw[] (f21) -- (f12);
\draw[] (f21) -- (f22);
\end{pgfonlayer}
\begin{pgfonlayer}{above}
\draw[] (g11) -- (g12);
\draw[] (g11) -- (g22);
\draw[] (g21) -- (g12);
\draw[] (g21) -- (g22);
\end{pgfonlayer}  
\end{tikzpicture}
\end{minipage}%
\begin{minipage}[t]{0.15\linewidth}
\begin{tikzpicture}[node distance=1.2cm, on grid, auto]
\tikzstyle{state}=[shape=rectangle, draw, minimum size=2mm, fill=white]
\begin{pgfonlayer}{main}
\node[state] (f11) [] {};
\node[state,right of=f11] (f12) [] {};
\node[state,below of=f11] (f21) [] {};
\node[state,right of=f21] (f22) [] {};
\end{pgfonlayer}

\begin{pgfonlayer}{above}
\node[state] (g11) [below left=0.2cm and 0.35cm of f11]  {};
\node[state,right of=g11] (g12) [] {};
\node[state,below of=g11] (g21) [] {};
\node[state,right of=g21] (g22) [] {};
\end{pgfonlayer}
\begin{pgfonlayer}{main}
\draw[] (f11) -- (g12);
\draw[] (f11) -- (f22);
\draw[] (f21) -- (f12);
\draw[] (f21) -- (f22);
\end{pgfonlayer}
\begin{pgfonlayer}{above}
\draw[] (g11) -- (f12);
\draw[] (g11) -- (g22);
\draw[] (g21) -- (g12);
\draw[] (g21) -- (g22);
\end{pgfonlayer}
\end{tikzpicture}
\end{minipage}%
\begin{minipage}[t]{0.15\linewidth}
\begin{tikzpicture}[node distance=1.2cm, on grid, auto]
\tikzstyle{state}=[shape=rectangle, draw, minimum size=2mm, fill=white]
\begin{pgfonlayer}{main}
\node[state] (f11) [] {};
\node[state,right of=f11] (f12) [] {};
\node[state,below of=f11] (f21) [] {};
\node[state,right of=f21] (f22) [] {};
\end{pgfonlayer}

\begin{pgfonlayer}{above}
\node[state] (g11) [below left=0.2cm and 0.35cm of f11]  {};
\node[state,right of=g11] (g12) [] {};
\node[state,below of=g11] (g21) [] {};
\node[state,right of=g21] (g22) [] {};
\end{pgfonlayer}
\begin{pgfonlayer}{main}
\draw[] (f11) -- (g12);
\draw[] (f21) -- (f12);
\draw[] (f21) -- (f22);
\end{pgfonlayer}
\begin{pgfonlayer}{above}
\draw[] (f11) -- (g22);
\draw[] (g11) -- (f12);
\draw[] (g11) -- (f22);
\draw[] (g21) -- (g12);
\draw[] (g21) -- (g22);
\end{pgfonlayer}
\end{tikzpicture}
\end{minipage}%

  \vspace{.5em}
  \begin{minipage}[t]{0.45\linewidth}
\centering
\begin{tikzpicture}[node distance=1.2cm, on grid, auto]
\tikzstyle{state}=[shape=rectangle, draw, minimum size=2mm, fill=white]
\begin{pgfonlayer}{main}
\node[state] (f11) [] {};
\node[state, right of=f11] (f12) [] {};
\node[state, below of=f11] (f21) [] {};
\node[state, right of=f21] (f22) [] {};
\end{pgfonlayer}
\begin{pgfonlayer}{main}
\draw[double] (f11) -- (f12);
\draw[double] (f11) -- (f22);
\draw[double] (f21) -- (f12);
\draw[double] (f21) -- (f22);
\end{pgfonlayer}
\end{tikzpicture}
\end{minipage}
\begin{minipage}[t]{0.15\linewidth}
\begin{tikzpicture}[node distance=1.2cm, on grid, auto]
\tikzstyle{state}=[shape=rectangle, draw, minimum size=2mm, fill=white]
\begin{pgfonlayer}{main}
\node[state] (f11) [] {};
\node[state,right of=f11] (f12) [] {};
\node[state,below of=f11] (f21) [] {};
\node[state,right of=f21] (f22) [] {};
\end{pgfonlayer}

\begin{pgfonlayer}{above}
\node[state] (g11) [below left=0.2cm and 0.35cm of f11]  {};
\node[state,right of=g11] (g12) [] {};
\node[state,below of=g11] (g21) [] {};
\node[state,right of=g21] (g22) [] {};
\end{pgfonlayer}
\begin{pgfonlayer}{main}
\draw[double] (f11) -- (f12);
\draw[double] (f11) -- (f22);
\draw[double] (f21) -- (f12);
\draw[double] (f21) -- (f22);
\end{pgfonlayer}
\begin{pgfonlayer}{above}
\draw[double] (g11) -- (g12);
\draw[double] (g11) -- (g22);
\draw[double] (g21) -- (g12);
\draw[double] (g21) -- (g22);
\end{pgfonlayer}  
\end{tikzpicture}
\end{minipage}%
\begin{minipage}[t]{0.15\linewidth}
\begin{tikzpicture}[node distance=1.2cm, on grid, auto]
\tikzstyle{state}=[shape=rectangle, draw, minimum size=2mm, fill=white]
\begin{pgfonlayer}{main}
\node[state] (f11) [] {};
\node[state,right of=f11] (f12) [] {};
\node[state,below of=f11] (f21) [] {};
\node[state,right of=f21] (f22) [] {};
\end{pgfonlayer}

\begin{pgfonlayer}{above}
\node[state] (g11) [below left=0.2cm and 0.35cm of f11]  {};
\node[state,right of=g11] (g12) [] {};
\node[state,below of=g11] (g21) [] {};
\node[state,right of=g21] (g22) [] {};
\end{pgfonlayer}
\begin{pgfonlayer}{main}
\draw[double] (f11) -- (g12);
\draw[double] (f11) -- (f22);
\draw[double] (f21) -- (f12);
\draw[double] (f21) -- (f22);
\end{pgfonlayer}
\begin{pgfonlayer}{above}
\draw[double] (g11) -- (f12);
\draw[double] (g11) -- (g22);
\draw[double] (g21) -- (g12);
\draw[double] (g21) -- (g22);
\end{pgfonlayer}
\end{tikzpicture}
\end{minipage}%
\begin{minipage}[t]{0.15\linewidth}
\begin{tikzpicture}[node distance=1.2cm, on grid, auto]
\tikzstyle{state}=[shape=rectangle, draw, minimum size=2mm, fill=white]
\begin{pgfonlayer}{main}
\node[state] (f11) [] {};
\node[state,right of=f11] (f12) [] {};
\node[state,below of=f11] (f21) [] {};
\node[state,right of=f21] (f22) [] {};
\end{pgfonlayer}

\begin{pgfonlayer}{above}
\node[state] (g11) [below left=0.2cm and 0.35cm of f11]  {};
\node[state,right of=g11] (g12) [] {};
\node[state,below of=g11] (g21) [] {};
\node[state,right of=g21] (g22) [] {};
\end{pgfonlayer}
\begin{pgfonlayer}{main}
\draw[double] (f11) -- (g12);
\draw[double] (f21) -- (f12);
\draw[double] (f21) -- (f22);
\end{pgfonlayer}
\begin{pgfonlayer}{above}
\draw[double] (f11) -- (g22);
\draw[double] (g11) -- (f12);
\draw[double] (g11) -- (f22);
\draw[double] (g21) -- (g12);
\draw[double] (g21) -- (g22);
\end{pgfonlayer}
\end{tikzpicture}
\end{minipage}%
  \caption{NFG $\sfn(\btheta)$ (top left) and possible double-covers of it (top right).
   DE-NFG $\sfnDE(\btheta)$ (bottom left) and possible double-covers of it (bottom right).}
  \label{fig:doublecover}
 \end{figure}

 Fig.~\ref{fig:doublecover} shows examples of double-covers of NFG
 $\sfn(\btheta)$ and DE-NFG $\sfnDE(\btheta)$, respectively. We define the
 degree-$2$ Bethe partition sum of $\sfnDE(\btheta)$ to be
 \[
  Z_{\B,2}\bigl(\sfnDE(\btheta)\bigr)\defeq
  \sqrt{\Braket{Z\bigl(\widetilde{\sfn}_{\mathrm{DE}}(\btheta)\bigr)}_{\widetilde{\sfn}_{\mathrm{DE}}\in\widetilde{\mathcal N}_{\mathrm{DE},2}}}.
 \]


 \subsection{Proof of Proposition~\ref{prop:double:cover:decomp}}
 \begin{figure}[t]
  \centering
  \tikzset{
state/.style={shape=rectangle, draw, minimum size=2mm, fill=white, node distance=1.5cm}
}
\begin{minipage}[t]{0.3\linewidth}
\centering
\begin{tikzpicture}[ on grid, auto]\
\begin{pgfonlayer}{behind}
\node[state, label=above:$\tfl{1,2}$] (f11) [] {};
\node[state,right of=f11] (f12) [] {};
\node[state,below of=f11] (f21) [] {};
\node[state,right of=f21] (f22) [] {};
\end{pgfonlayer}
\begin{pgfonlayer}{above}
\node[state, label=above:$\tfl{1,1}$] (g11) [below left=0.4cm and 0.6cm of f11]  {};
\node[state,right of=g11] (g12) [] {};
\node[state,below of=g11] (g21) [] {};
\node[state,right of=g21] (g22) [] {};
\end{pgfonlayer}
\begin{pgfonlayer}{behind}
\draw[double,red] (f11) -- (g12);
\draw[double,blue] (f11) -- (f22);
\draw[double,blue] (f21) -- (f12);
\draw[double,blue] (f21) -- (f22);
\end{pgfonlayer}
\begin{pgfonlayer}{above}
\draw[double,red] (g11) -- (f12);
\draw[double,black] (g11) -- (g22);
\draw[double,black] (g21) -- (g12);
\draw[double,black] (g21) -- (g22);
\end{pgfonlayer}
\end{tikzpicture}
\end{minipage}%
\begin{minipage}[t]{0.35\linewidth}
\centering
\begin{tikzpicture}[on grid, auto]
\begin{pgfonlayer}{behind}
\node[state, label=above:$\fl{1,2}$] (F11) [] {};
\node[state,right of=F11] (F12) [] {};
\node[state,below of=F11] (F21) [] {};
\node[state,right of=F21] (F22) [] {};
\end{pgfonlayer}
\begin{pgfonlayer}{background}
\node[state, label=above:$f^{\mathrm{L}'}_{1,2}$] (f11) [below left=-0.4cm and -0.6cm of F11]  {};
\node[state,right of=f11] (f12) [] {};
\node[state,below of=f11] (f21) [] {};
\node[state,right of=f21] (f22) [] {};
\end{pgfonlayer}
\begin{pgfonlayer}{above}
\node[state,label=above:$f^{\mathrm{L}'}_{1,1}$] (g11) [below left=0.4cm and 0.6cm of F11]  {};
\node[state,right of=g11] (g12) [] {};
\node[state,below of=g11] (g21) [] {};
\node[state,right of=g21] (g22) [] {};
\end{pgfonlayer}
\begin{pgfonlayer}{glass}
\node[state, label=above:$\fl{1,1}$] (G11) [below left=0.4cm and 0.6cm of g11]  {};
\node[state,right of=G11] (G12) [] {};
\node[state,below of=G11] (G21) [] {};
\node[state,right of=G21] (G22) [] {};
\end{pgfonlayer}
\begin{pgfonlayer}{behind}
\draw[red] (F11) -- (G12);
\draw[blue] (F11) -- (F22);
\draw[blue] (F21) -- (F12);
\draw[blue] (F21) -- (F22);
\end{pgfonlayer}
\begin{pgfonlayer}{background}
\draw[red] (f11) -- (g12);
\draw[blue] (f11) -- (f22);
\draw[blue] (f21) -- (f12);
\draw[blue] (f21) -- (f22);
\end{pgfonlayer}
\begin{pgfonlayer}{above}
\draw[red] (g11) -- (f12);
\draw[black] (g11) -- (g22);
\draw[black] (g21) -- (g12);
\draw[black] (g21) -- (g22);
\end{pgfonlayer}
\begin{pgfonlayer}{glass}
\draw[red] (G11) -- (F12);
\draw[black] (G11) -- (G22);
\draw[black] (G21) -- (G12);
\draw[black] (G21) -- (G22);
\end{pgfonlayer}
\end{tikzpicture}
\end{minipage}%
\begin{minipage}[t]{0.35\linewidth}
\centering
\begin{tikzpicture}[on grid, auto]
\begin{pgfonlayer}{above}
\node[state, label=above:$\fl{1,2}$] (F11) [] {};
\node[state,right of=F11] (F12) [] {};
\node[state,below of=F11] (F21) [] {};
\node[state,right of=F21] (F22) [] {};
\end{pgfonlayer}
\begin{pgfonlayer}{background}
\node[state, label=above:$f^{\mathrm{L}'}_{1,2}$] (f11) [below left=-0.8cm and -1.4cm of F11]  {};
\node[state,right of=f11] (f12) [] {};
\node[state,below of=f11] (f21) [] {};
\node[state,right of=f21] (f22) [] {};
\end{pgfonlayer}
\begin{pgfonlayer}{behind}
\node[state,label=above:$f^{\mathrm{L}'}_{1,1}$] (g11) [below left=-.4cm and -.8cm of F11]  {};
\node[state,right of=g11] (g12) [] {};
\node[state,below of=g11] (g21) [] {};
\node[state,right of=g21] (g22) [] {};
\end{pgfonlayer}
\begin{pgfonlayer}{glass}
\node[state, label=above:$\fl{1,1}$] (G11) [below left=0.4cm and 0.6cm of F11]  {};
\node[state,right of=G11] (G12) [] {};
\node[state,below of=G11] (G21) [] {};
\node[state,right of=G21] (G22) [] {};
\end{pgfonlayer}
\begin{pgfonlayer}{above}
\draw[red] (F11) -- (G12);
\draw[blue] (F11) -- (F22);
\draw[blue] (F21) -- (F12);
\draw[blue] (F21) -- (F22);
\end{pgfonlayer}
\begin{pgfonlayer}{background}
\draw[red] (f11) -- (g12);
\draw[blue] (f11) -- (f22);
\draw[blue] (f21) -- (f12);
\draw[blue] (f21) -- (f22);
\end{pgfonlayer}
\begin{pgfonlayer}{behind}
\draw[red] (g11) -- (f12);
\draw[black] (g11) -- (g22);
\draw[black] (g21) -- (g12);
\draw[black] (g21) -- (g22);
\end{pgfonlayer}
\begin{pgfonlayer}{glass}
\draw[red] (G11) -- (F12);
\draw[black] (G11) -- (G22);
\draw[black] (G21) -- (G12);
\draw[black] (G21) -- (G22);
\end{pgfonlayer}
\end{tikzpicture}
\end{minipage}%
  \caption{A possible $2$-cover of $\sfnDE(\btheta)$ ($n=2$) and its decomposition.} \label{fig:doublecover:DENFG:decomp}
 \end{figure}
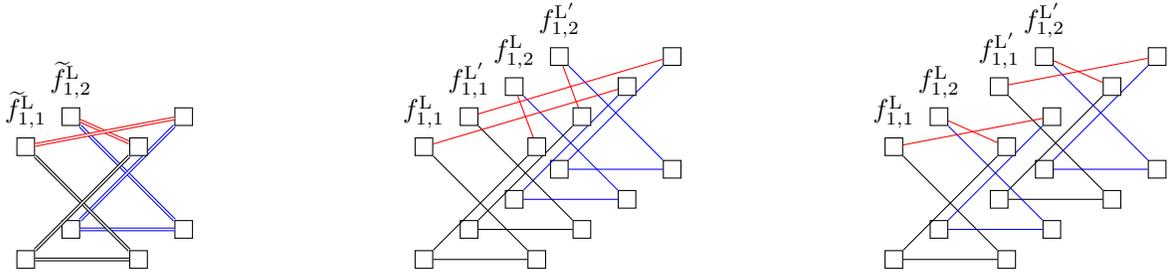

 Fig.~\ref{fig:doublecover:DENFG:decomp} illustrates the idea behind the
 decomposition. Recall the definitions of $\tfl{i}$ and $\mathbf W_{i,j}$ in
 \(\sfnDE(\btheta)\), i.e.,
 \begin{align*}
  \tfl{i}\bigl(\{\txl{i,j}\}_{j\in[n]}\bigr) & =
  \fl{i}\bigl(\{ \xl{i,j} \}_{j \in [n]}\bigr)\cdot
  \fl{i}\bigl(\{ \xlprime{i,j} \}_{j \in [n]}\bigr), \quad
  \txl{i,j} = (\xl{i,j}, \xlprime{i,j}),         \\
  \mathbf W_{i,j}                            & =
  \Biggl[\begin{matrix}
           1            & \overline{\theta_{i,j}} \\
           \theta_{i,j} & |\theta_{i,j}|^2
          \end{matrix}\Biggr] =
  \Biggl[\begin{matrix}
           1 \\ \theta_{i,j}
          \end{matrix}\Biggr]
  \Bigl[\begin{matrix}
          1 & \overline{\theta_{i,j}}
         \end{matrix}\Bigr].
 \end{align*}

 In a $2$-cover $\widetilde{\sfn}_\mathrm{DE}(\btheta)$
 (Fig.~\ref{fig:doublecover:DENFG:decomp} (left)), each function node,
 originally defined over $\mathcal X^{2n}$, can be decomposed into the direct
 product of two functions over $\mathcal X^{n}$. Taking $\tfl{i,1}\otimes
 \tfl{i,2}$ as an example, we have
 \begin{align*}
  \tfl{i,1}\bigl( \{\txl{i,j,1}\}_{j\in[n]} \bigr) & =
  \fl{i,1}\bigl( \{x_{i,j,1}\}_{j\in[n]} \bigr) \cdot
  f^{\mathrm{L}'}_{i,1}\bigl( \{x'_{i,j,1}\}_{j\in[n]} \bigr), \\
  \tfl{i,2}\bigl( \{\txl{i,j,2}\}_{j\in[n]} \bigr) & =
  \fl{i,2}\bigl( \{x_{i,j,2}\}_{j\in[n]} \bigr) \cdot
  f^{\mathrm{L}'}_{i,2}\bigl( \{x'_{i,j,2}\}_{j\in[n]} \bigr),
 \end{align*}
 where $\fl{i,1},f^{\mathrm{L}'}_{i,1}, \fl{i,2}, f^{\mathrm{L}'}_{i,2}$ are the same function as $\fl{i}$.

 Further decomposing each variable \(\txl{i,j,m}\) on a double-edge (with
 alphabet $\mathcal X^2$) into two variables \(x_{i,j,m}\) and \(x'_{i,j,m}\)
 on single-edges (each with alphabet $\mathcal X$), we obtain
 Fig.~\ref{fig:doublecover:DENFG:decomp} (middle). Finally, we re-group the
 umprimed function nodes and variables (e.g., \(\fl{i,1}, \fl{i,2},
 x_{i,j,m}\)) into a separate NFG, and the primed ones (e.g.,
 \(f^{\mathrm{L}'}_{i,1}, f^{\mathrm{L}'}_{i,2}, x'_{i,j,m}\)) into another NFG
 (Fig.~\ref{fig:doublecover:DENFG:decomp} (right)). We refer to the former as
 the ``unprimed NFG'' and the latter as the ``primed NFG''. As a result, the
 umprimed NFG and the primed NFG are identical, except for the edge weights
 being $\theta_{i,j}$ and $\overline{\theta_{i,j}}$, respectively.

 For any double cover $\widetilde{\sfn}_{\mathrm{DE}}(\btheta)$ of $\sfnDE(\btheta)$,
 the above decomposition gives a
 unique $\widetilde{\sfn}\in \widetilde {\mathcal N}_2$ such that
 $\widetilde{\sfn}_{\mathrm{DE}}(\btheta)$is the disjoint union of the unprimed NFG
 $\widetilde {\sfn}(\btheta)$ and the primed NFG $\widetilde {\sfn}(\overline
 {\btheta})$. Hence, the partition sum of
 $\widetilde{\sfn}_{\mathrm{DE}}(\btheta)$ can be expressed as
 \[Z\bigl(\widetilde{\sfn}_{\mathrm{DE}}(\btheta)\bigr)=
  Z\bigl(\widetilde{\sfn}(\btheta)\bigr)\cdot Z\bigl(\widetilde{\sfn}(\overline{\btheta})\bigr)=
  Z\bigl(\widetilde{\sfn}(\btheta)\bigr)\cdot \overline{Z\bigl(\widetilde{\sfn}(\btheta)\bigr)}=
  \big|Z\bigl(\widetilde{\sfn}(\btheta)\bigr)\big|^2.\]
 As a consequence, we have
 \[Z_{\B,2}\bigl(\sfnDE(\btheta)\bigr)
  =\sqrt{\Braket{Z\bigl(\widetilde{\sfn}_{\mathrm{DE}}(\btheta)\bigr)}
   _{\widetilde{\sfn}_{\mathrm{DE}}\in\widetilde{\mathcal N}_{\mathrm{DE},2}}}
  =\sqrt{\Big\langle\big| Z \bigl(\widetilde{\sfn}(\btheta)\bigr)\big|^2\Big\rangle
   _{\widetilde{\sfn}\in \widetilde {\mathcal N}_2}}.\]

 \section{Counting in Double Covers of the DE-NFG \(\sfnDE(\btheta)\)}\label{appx:double:cover:counting}
 \begin{figure}[t]\centering
  \input{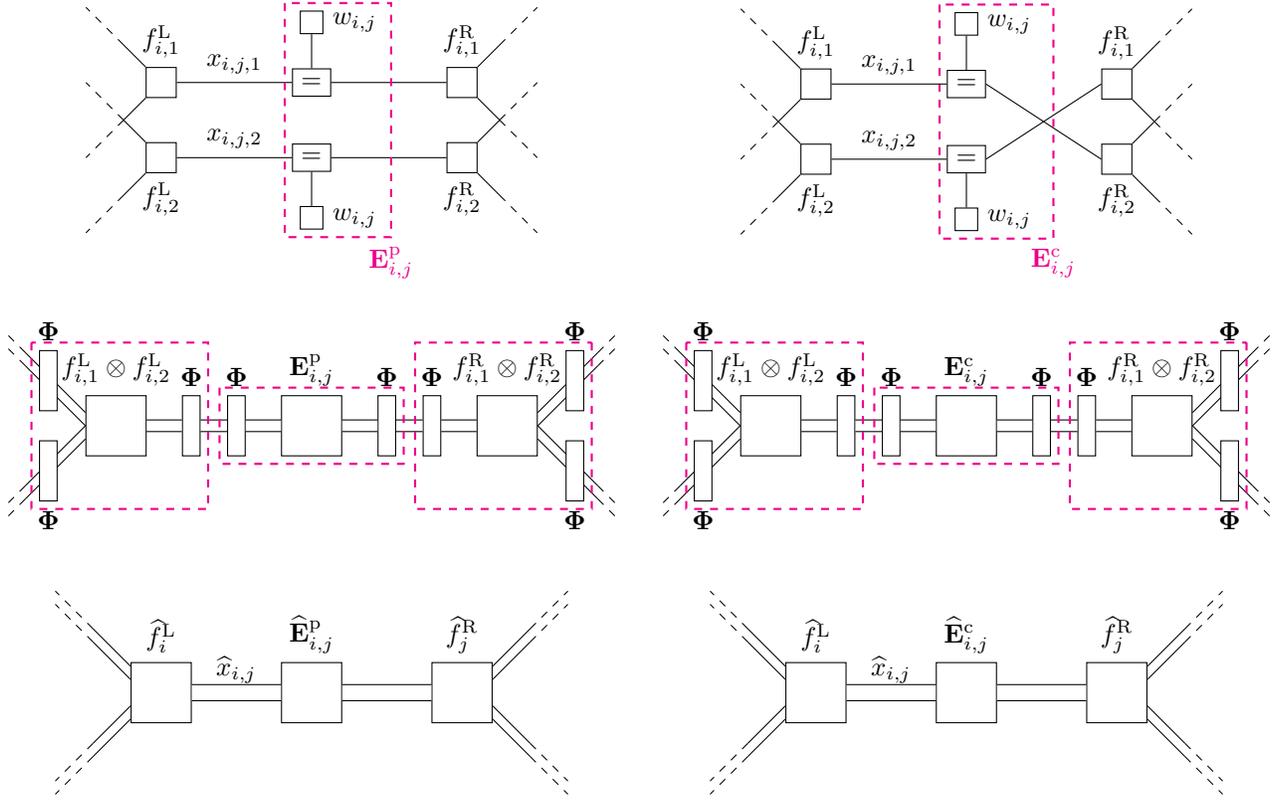}
  \caption{Part of a possible $2$-cover of $\sfn(\btheta)$ and NFG transform by $\boldsymbol{\Phi}$:
   Copies of the edge $(i,j)$ are parallel (left) and crossed (right).}\label{fig:doublecover:NFG:CTB}
 \end{figure}

 Towards evaluating $\E[Z_{\B,2}^2(\sfnDE(\btheta))]$, we use the NFG transform techniques 
 on double covers proposed in \cite{vontobelAnalysisDoubleCovers2016} 
 separately to the unprimed NFG and the primed NFG in Fig.~\ref{fig:doublecover:DENFG:decomp} (right). 
 We only show the transformations applied to the unprimed NFG. The transformations for the primed NFG are
 the same, except for replacing $\theta_{i,j}$ by its complex conjugate,
 $\overline{\theta_{i,j}}$, in the edge weight functions.

 We use ``\underline{p}arallel'' and ``\underline{c}rossed'' to refer to the
 two possible choices of edges in a double cover (see
 Fig.~\ref{fig:doublecover:NFG:CTB} (top)). By closing-the-box in
 Fig.~\ref{fig:doublecover:NFG:CTB} (top), we obtain the matrix representation
 of the edge weight functions, respectively,
 \[
  \mathbf E_{i,j}^{\mathrm{p}}=\begin{bmatrix}
   1 & 0            & 0            & 0               \\
   0 & \theta_{i,j} & 0            & 0               \\
   0 & 0            & \theta_{i,j} & 0               \\
   0 & 0            & 0            & \theta_{i,j} ^2
  \end{bmatrix} \text{ and }
  \mathbf E_{i,j}^{\mathrm{c}}=\begin{bmatrix}
   1 & 0            & 0            & 0               \\
   0 & 0            & \theta_{i,j} & 0               \\
   0 & \theta_{i,j} & 0            & 0               \\
   0 & 0            & 0            & \theta_{i,j} ^2
  \end{bmatrix}.
 \]

 Note that the matrices $\mathbf E_{i,j}^{\mathrm{p}}$ and $\mathbf
 E_{i,j}^{\mathrm{c}}$ commute, and hence, they are simultaneously diagonalized
 by
 \[
  \boldsymbol{\Phi}\defeq\begin{bmatrix}
   1 & 0          & 0           & 0 \\
   0 & 1/\sqrt{2} & 1/\sqrt{2}  & 0 \\
   0 & 1/\sqrt{2} & -1/\sqrt{2} & 0 \\
   0 & 0          & 0           & 1
  \end{bmatrix}.
 \]
 Applying the DE-NFG transformation by $\boldsymbol \Phi$ on
 $\widetilde{\sfn}\in\widetilde{\mathcal{N}}_2$ (see
 Fig.~\ref{fig:doublecover:NFG:CTB} (middle and bottom)), we obtain the
 following for the check node function $\fl{i,1}\otimes \fl {i,2}$:
 \[
  \hatfl{i}\big(\set{\hatx{i,j}}_{j\in[n]}\big)
  =
  \begin{cases}
  1  & \exists k \in [n], \hatx{i,k}=(1,1);
  \ \hatx{i,j}=(0,0),j\neq k                                                \\
  1  & \exists k, \ell \in [n], k\neq \ell, \hatx{i,k}=\hatx{i,\ell}=(0,1);
  \ \hatx{i,j}=(0,0),j\neq k,\ell                                           \\
  -1 & \exists k, \ell \in [n], k\neq \ell, \hatx{i,k}=\hatx{i,\ell}=(1,0);
  \ \hatx{i,j}=(0,0),j\neq k,\ell                                           \\
  0  & \text{otherwise}
  \end{cases}.
 \]
 The function $\hatfr{j}$ is defined similarly and hence omitted. For edge
 weights $\mathbf E^{\mathrm p}_{i,j}$ and $\mathbf E^{\mathrm c}_{i,j}$, the matrix
 representation of the transformed edge weight functions are given by,
 respectively,
 \[
  \widehat{\mathbf E}^{\mathrm p}_{i,j}\defeq \boldsymbol{\Phi} \mathbf{E}^{\mathrm p}_{i,j} \boldsymbol{\Phi} =
  \begin{bmatrix}
   1 & 0            & 0            & 0               \\
   0 & \theta_{i,j} & 0            & 0               \\
   0 & 0            & \theta_{i,j} & 0               \\
   0 & 0            & 0            & \theta_{i,j} ^2
  \end{bmatrix}\text{ and }
  \widehat{\mathbf E}^{\mathrm c}_{i,j}\defeq \boldsymbol{\Phi} \mathbf{E}^{\mathrm c}_{i,j} \boldsymbol{\Phi} =
  \begin{bmatrix}
   1 & 0            & 0             & 0               \\
   0 & \theta_{i,j} & 0             & 0               \\
   0 & 0            & -\theta_{i,j} & 0               \\
   0 & 0            & 0             & \theta_{i,j} ^2
  \end{bmatrix}.
 \]
 Denote the transformed NFG of $\widetilde{\sfn}(\btheta)$ by
 $\widehat{\sfn}(\btheta)$, and the set of all $2$-covers $\widetilde{\mathcal
 N}_2$ is mapped to $\widehat{\mathcal N}_2$. Since elements of
 $\widehat{\mathcal N}_2$ only differ by the edge weight function, the set of
 valid configurations is independent of $\widehat{\sfn}\in\widehat{\mathcal
 N}_2$, so we denote the set of valid configurations by $\validc$. By the
 definition of $\hatfl{i}$ and $\hatfr{j}$, we see that for any valid
 configuration in $\validc$, every vertex of $\widehat{\sfn}(\btheta)$ is one
 of the three cases:
 \begin{itemize}
  \item the endpoint of exactly one $(1,1)$-edge,
  \item vertex in exactly one $(0,1)$-cycle,
  \item vertex in exactly one $(1,0)$-cycle.
 \end{itemize}

 Combining Proposition \ref{prop:double:cover:decomp}, Appendix
 \ref{appx:double:cover} and the results so far in Appendix
 \ref{appx:double:cover:counting}, we have
 \begin{align*}
  Z_{\B,2}^2(\sfnDE(\btheta))
   & = \Braket{Z\bigl(\widetilde{\sfn}_{\mathrm{DE}}(\btheta)\bigr)}_{\widetilde{\sfn}_{\mathrm{DE}}\in\widetilde{\mathcal N}_{\mathrm{DE},2}} \\
   & = \Big\langle\big| Z \bigl(\widetilde{\sfn}(\btheta)\bigr)\big|^2\Big\rangle_{\widetilde{\sfn}\in \widetilde {\mathcal N}_2}              \\
   & = \Big\langle\big| Z \bigl(\widehat{\sfn}(\btheta)\bigr)\big|^2\Big\rangle_{\widehat{\sfn}\in \widehat {\mathcal N}_2}                    \\
   & = 2^{-n^2}\sum_{\widehat{\sfn}\in\widehat{\mathcal N}_2}\Biggl\vert
  \sum_{\hatbfx \in \validc}
  g_{\widehat{\sfn}}(\hatbfx)\Biggr\vert^2                                                                                                     \\
   & =2^{-n^2}
  \sum_{\substack{\hatbfx[(1)]\in \validc                                                                                                      \\\hatbfx[(2)]\in \validc}} \,
  \sum_{\widehat{\sfn}\in\widehat{\mathcal N}_2}
  g_{\widehat{\sfn}}(\hatbfx[(1)])\overline{g_{\widehat{\sfn}}(\hatbfx[(2)])}.
 \end{align*}


 \section{Proof of Theorem \ref{thm:double:cover:allone}}\label{appx:double:cover:allone}
 Recall that we want to evaluate the following quantity for $\btheta=\mathbf 1_n$:
 \[
  Z_{\B,2}^2(\sfnDE(\btheta))=2^{-n^2}
  \sum_{\substack{\hatbfx[(1)]\in \validc\\\hatbfx[(2)]\in \validc}} \,
  \sum_{\widehat{\sfn}\in\widehat{\mathcal N}_2}
  g_{\widehat{\sfn}}(\hatbfx[(1)])g_{\widehat{\sfn}}(\hatbfx[(2)])\]
 For $(\hatbfx[(1)],\hatbfx[(2)])\in \validc\times \validc$, define
 \[
  g(\hatbfx[(1)],\hatbfx[(2)])\defeq
  2^{-n^2}\sum_{\widehat{\sfn}\in\widehat{\mathcal N}_2}
  g_{\widehat{\sfn}}(\hatbfx[(1)])g_{\widehat{\sfn}}(\hatbfx[(2)]).
 \]

 \begin{lem}
  For $(\hatbfx[(1)],\hatbfx[(2)])\in \validc\times \validc$, if there exists $(i,j)\in[n]^2$
  such that $\hatx{i,j}^{(1)}=(1,0)$ but $\hatx{i,j}^{(2)}\neq (1,0)$, then $g(\hatbfx[(1)],\hatbfx[(2)])=0$.
 \end{lem}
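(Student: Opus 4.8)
The plan is to use the fact that, for $\btheta=\mathbf 1_n$, the dependence of each summand $g_{\widehat{\sfn}}(\hatbfx[(1)])\,g_{\widehat{\sfn}}(\hatbfx[(2)])$ on the cover $\widehat{\sfn}$ enters only through sign factors attached to crossed edges, and then to show that the edge singled out in the hypothesis produces an exact sign cancellation in the cover-sum. First I would parameterize $\widehat{\mathcal N}_2$ explicitly: a $2$-cover of the complete bipartite base graph is determined by declaring each edge $(i,j)$ to be parallel ($\mathrm{p}$) or crossed ($\mathrm{c}$) independently, so $\widehat{\mathcal N}_2\cong\{\mathrm{p},\mathrm{c}\}^{[n]^2}$ and $|\widehat{\mathcal N}_2|=2^{n^2}$, matching the normalization in the definition of $g(\hatbfx[(1)],\hatbfx[(2)])$. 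Next, specializing $\theta_{i,j}=1$ collapses the transformed edge-weight matrices to $\widehat{\mathbf E}^{\mathrm{p}}_{i,j}=\mathrm{diag}(1,1,1,1)$ and $\widehat{\mathbf E}^{\mathrm{c}}_{i,j}=\mathrm{diag}(1,1,-1,1)$, so a parallel edge contributes weight $1$ for every value of $\hatx{i,j}$, whereas a crossed edge contributes $-1$ exactly when $\hatx{i,j}=(1,0)$ and $1$ otherwise.

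I would then factor the global function as $g_{\widehat{\sfn}}(\hatbfx)=C(\hatbfx)\prod_{i,j} w_{s_{i,j}}(\hatx{i,j})$, where $C(\hatbfx)\in\{-1,+1\}$ is the product of the check-node values $\hatfl{i},\hatfr{j}$ over all vertices and $w_{s_{i,j}}$ is the scalar edge weight just described. The crucial structural point, already established in the setup, is that $\hatfl{i}$ and $\hatfr{j}$ do not depend on the edge-type assignment, so $C(\hatbfx)$ depends only on $\hatbfx$ and not on $\widehat{\sfn}$. Consequently, pulling the $\hatbfx$-only factor $C(\hatbfx[(1)])\,C(\hatbfx[(2)])$ out of the cover-sum, the remaining sum over $s\in\{\mathrm{p},\mathrm{c}\}^{[n]^2}$ factorizes over edges into $\prod_{i,j}\bigl(\sum_{s_{i,j}\in\{\mathrm{p},\mathrm{c}\}}(-1)^{[s_{i,j}=\mathrm{c}]\,(a^{(1)}_{i,j}+a^{(2)}_{i,j})}\bigr)$, where $a^{(t)}_{i,j}\defeq[\hatx{i,j}^{(t)}=(1,0)]$.

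A direct inspection of each local factor shows it equals $2$ when $a^{(1)}_{i,j}+a^{(2)}_{i,j}\in\{0,2\}$ and equals $1+(-1)=0$ when $a^{(1)}_{i,j}+a^{(2)}_{i,j}=1$. The hypothesis furnishes an edge $(i,j)$ with $\hatx{i,j}^{(1)}=(1,0)$ and $\hatx{i,j}^{(2)}\neq(1,0)$, i.e.\ $a^{(1)}_{i,j}+a^{(2)}_{i,j}=1$, so this edge's local factor vanishes and the entire product, hence $g(\hatbfx[(1)],\hatbfx[(2)])$, is zero. The computation is short once the setup is in place; the only real care needed is the bookkeeping of which matrix entry carries the $-1$ under a crossed edge and the confirmation that the check-node sign $C(\hatbfx)$ is genuinely independent of $\widehat{\sfn}$ so that it can be extracted from the cover-sum. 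I expect no obstacle beyond keeping the indexing of $\mathcal X^2$-values straight, so I anticipate this \emph{local cancellation} argument to go through cleanly.
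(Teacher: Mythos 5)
Your proof is correct, and the cancellation it exploits is the same one the paper uses: the two lifts of edge $(i,j)$ give transformed weights that agree except at the value $(1,0)$, where they differ by a sign, so summing over that edge's type annihilates any term in which exactly one of the two configurations assigns $(1,0)$ to that edge. The difference is purely in how the sum over $\widehat{\mathcal N}_2$ is organized. The paper fixes all other edge types and pairs each cover $\widehat{\sfn}$ with the cover $\kappa(\widehat{\sfn})$ obtained by toggling only the $(i,j)$-th lift, showing the two terms in each pair cancel; note that this pairing argument reads the sign flip off $\pm\theta_{i,j}$ and so never needs $\theta_{i,j}=1$. You instead use the independence of the $n^2$ edge-type choices to factorize the entire cover-sum into a product of local two-term sums, $\prod_{i,j}\bigl(1+(-1)^{a^{(1)}_{i,j}+a^{(2)}_{i,j}}\bigr)$ up to the cover-independent check-node sign $C(\hatbfx[(1)])C(\hatbfx[(2)])$, and observe that the hypothesis forces one factor to equal $1+(-1)=0$. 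Your packaging buys a little more: it evaluates $g(\hatbfx[(1)],\hatbfx[(2)])$ for \emph{every} pair of valid configurations, which is exactly what the paper needs immediately after this lemma (the claim that $g=1$ when the $(1,0)$-cycles coincide), although to finish that step you would still have to verify that the residual sign $C(\hatbfx[(1)])C(\hatbfx[(2)])$ equals $+1$; moreover, your edge-wise averaging is the same device the paper itself deploys later (via $g_{i,j,\mathrm{avg}}$) in the zero-mean complex case. The two prerequisites you flagged are indeed the load-bearing ones, and both are established in the paper's setup: the weight matrices $\widehat{\mathbf E}^{\mathrm p}_{i,j}$ and $\widehat{\mathbf E}^{\mathrm c}_{i,j}$ differ only in the $(1,0)$ diagonal entry, and the check-node functions $\hatfl{i},\hatfr{j}$, hence the set $\validc$ and the sign $C(\cdot)$, are identical across all $\widehat{\sfn}\in\widehat{\mathcal N}_2$.
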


 \begin{proof}
  Fix such an index $(i,j)$. Partition $\widehat{\mathcal N}_2$ into two subsets according to
  the $(i,j)$-th edge weight:
  \[
   \widehat{\mathcal N}_2
   =\widehat{\mathcal N}_2^{i,j,\mathrm p}\,\sqcup\,\widehat{\mathcal N}_2^{i,j,\mathrm c},
  \]
  where $\widehat{\mathcal N}_2^{i,j,\mathrm p}$ consists of those
  $\widehat{\sfn}$ whose $(i,j)$-th edge has weight matrix $\widehat{\mathbf
  E}^{\mathrm p}_{i,j}$, and $\widehat{\mathcal N}_2^{i,j,\mathrm c}$ consists
  of those $\widehat{\sfn}$ whose $(i,j)$-th edge has weight matrix
  $\widehat{\mathbf E}^{\mathrm c}_{i,j}$.

  Define a pairing map
  \[
   \kappa:\widehat{\mathcal N}_2^{i,j,\mathrm p}\to \widehat{\mathcal N}_2^{i,j,\mathrm c}
  \]
  by letting $\kappa(\widehat{\sfn})=\widehat{\sfn}'$ be the factor graph
  obtained from $\widehat{\sfn}$ by changing only the $(i,j)$-th edge weight
  matrix from $\widehat{\mathbf E}^{\mathrm p}_{i,j}$ to $\widehat{\mathbf
  E}^{\mathrm c}_{i,j}$ (and leaving all other edges unchanged). Then $\kappa$
  is a bijection, so $\bigl|\widehat{\mathcal N}_2^{i,j,\mathrm c}\bigr|
  =\bigl|\widehat{\mathcal N}_2^{i,j,\mathrm p}\bigr|$ and $\widehat{\mathcal
  N}_2$ is partitioned into disjoint pairs
  $\{\widehat{\sfn},\kappa(\widehat{\sfn})\}$.

  By the assumption that $\hatx{i,j}^{(1)}=(1,0)$, changing the $(i,j)$-th edge
  weight changes the sign of $g_{\widehat{\sfn}}(\hatbfx[(1)])$:
  \[
   g_{\kappa(\widehat{\sfn})}(\hatbfx[(1)])=-\,g_{\widehat{\sfn}}(\hatbfx[(1)]).
  \]
  On the other hand, since $\hatx{i,j}^{(2)}\neq (1,0)$, changing the
  $(i,j)$-th edge weight does not change $g_{\widehat{\sfn}}(\hatbfx[(2)])$:
  \[
   g_{\kappa(\widehat{\sfn})}(\hatbfx[(2)])=g_{\widehat{\sfn}}(\hatbfx[(2)]).
  \]
  Therefore, for every $\widehat{\sfn}\in\widehat{\mathcal N}_2^{i,j,\mathrm
  c}$,
  \[
   g_{\widehat{\sfn}}(\hatbfx[(1)])g_{\widehat{\sfn}}(\hatbfx[(2)])
   +
   g_{\kappa(\widehat{\sfn})}(\hatbfx[(1)])g_{\kappa(\widehat{\sfn})}(\hatbfx[(2)])=0.
  \]
  Summing over all pairs yields cancellation:
  \[
   g(\hatbfx[(1)],\hatbfx[(2)])
   =2^{-n^2}\sum_{\widehat{\sfn}\in\widehat{\mathcal N}_2}
   g_{\widehat{\sfn}}(\hatbfx[(1)])g_{\widehat{\sfn}}(\hatbfx[(2)])=0. \qedhere
  \]
 \end{proof}

 Hence $g(\hatbfx[(1)],\hatbfx[(2)])\neq 0$ if and only if $\hatbfx[(1)]$ and
 $\hatbfx[(2)]$ have the same $(1,0)$-cycles. When this condition holds, we
 have $g(\hatbfx[(1)],\hatbfx[(2)])=1$, so it remains to count the number of
 such pairs $(\hatbfx[(1)],\hatbfx[(2)])$.

 Recall that for a valid configuration $\hatbfx\in \validc$, it corresponds to
 $2^c$ pairs of permutations in $\mathcal{S}_n$, where $c$ is the number of
 cycles in $\hatbfx$ (see \cite{ngDoublecoverbasedAnalysisBethe2022}). This
 fact will be used extensively in the following discussions. We first choose a
 $(k,k)$-complete bipartite subgraph for $k\in\set{0,1,\dots,n}$. There are
 $\binom{n}{k}^2$ many such subgraphs. On the $(k,k)$-subgraph, we assign only
 $(1,0)$-cycles, and the assignment is the same for $\hatbfx[(1)]$ and
 $\hatbfx[(2)]$. The number of different assignments is
 \[
  \sum_{\substack{\sigma,\tau\in \mathcal{S}_k\\\sigma\tau^{-1} \text{ is a derangement}}}2^{-c(\sigma\tau^{-1})}
  =\sum_{\sigma,\tau\in \mathcal{S}_k}0^{c_1(\sigma\tau^{-1})}2^{-c(\sigma\tau^{-1})}
  =k!\sum_{\sigma\in \mathcal{S}_k}0^{c_1(\sigma)}2^{-c(\sigma)}
  =\Psi_k\Big(0,1,\frac{1}{2}\Big).
 \]
 On the remaining subgraph, we assign $(1,1)$-edges and $(0,1)$-cycles
 independently for $\hatbfx[(1)]$ and $\hatbfx[(2)]$. The number of different
 assignments is
 \[
  \Big(\sum_{\sigma,\tau\in \mathcal{S}_{n-k}}2^{-c(\sigma\tau^{-1})}\Big)^{\! 2}
  =\Big((n-k)!\sum_{\sigma\in \mathcal{S}_{n-k}}2^{-c(\sigma)}\Big)^{\! 2}
  =\Psi^2_{n-k}\Big(1,1,\frac{1}{2}\Big).
 \]
 Putting everything together, we have
 \[
  Z_{\B,2}^2(\sfnDE(\btheta)) = \sum_{k=0}^{n} \binom{n}{k}^{\!\! 2} \cdot \Psi_{k}\Big(0,1,\frac{1}{2}\Big)  \cdot  \Psi_{n-k}^2\Big(1,1,\frac{1}{2}\Big).
 \]

 \section{Proof of Corollary \ref{cor:double:cover:allone}}\label{appx:double:cover:allone:asym}
 Recall that for $\btheta=\mathbf 1_n$, we have
 \[
  Z^2(\sfnDE(\btheta))=(n!)^4
 \]
 and
 \[
  Z_{\B,2}^2(\sfnDE(\btheta)) = \sum_{k=0}^{n} \binom{n}{k}^{\!\! 2}  \cdot  \Psi_{n-k}^2(1,1,1/2) \cdot \Psi_{k}(0,1,1/2).
 \]
 For simplicity, we just write $Z$ and $Z_{\B,2}$, respectively. Define
 \[
  a_k\defeq\Big(\frac{k!}{n!}\Big)^{\! 2}, \quad
  b_k\defeq\Big(\frac{\Psi_{k}(1,1,1/2)}{(k!^2)}\Big)^{\! 2}, \quad
  c_k\defeq\frac{\Psi_{k}(0,1,1/2)}{(k!)^2}, \quad
  k\in[n].
 \] Then we have
 \begin{align*}
  \frac{Z_{\mathrm B,2}^2}{(n!)^4} &
  =\sum_{k=0}^n \Big(\frac{(n-k)!}{n!}\Big)^{\! 2}
  \Big(\frac{\Psi_{n-k}(1,1,1/2)}{((n-k)!)^2}\Big)^{\! 2}
  \frac{\Psi_{k}(0,1,1/2)}{(k!)^2}                                          \\
                                   & = \sum_{k=0}^n a_{n-k}b_{n-k}c_k       \\
                                   & = b_n+\sum_{k=2}^{n}a_{n-k}b_{n-k}c_k,
 \end{align*}
 where the last equality holds due to $a_n=c_0=1,c_1=0$.

 We use the following two lemmas.

 \begin{lem}[See \protect{\cite[Chap.~VI]{flajoletAnalyticCombinatorics2009}}]

  For $a,b\in \R, m\in\R\setminus\Z_{\leq0}$, it holds that
  \[\Psi_n(a,b,m)\sim
   (n!)^2\frac{b^ne^{a/b-m}n^{m-1}}{\Gamma(m)}(1+\mathcal O(n^{-1})),\]
  where $\mathcal O(\,\cdot\,)$ is the usual big-O notation. In particular, we
  have $b_k\sim\frac{e}{\pi k}(1+\mathcal O(k^{-1}))$ and
  $c_k\sim\frac{1}{\sqrt{\pi ke}}(1+\mathcal O(k^{-1}))$.
 \end{lem}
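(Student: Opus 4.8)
The plan is to derive the asymptotics of $\Psi_n(a,b,m)=(n!)^2 Z_n$ from the ordinary generating function of the cycle index $Z_n$ under the substitution $z_1=a$ and $z_k=mb^k$ for $k\geq 2$, then apply singularity analysis, and finally read off the two special cases.

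First I would invoke the classical P\'olya identity for the cycle-index generating function,
\[
 \sum_{n\geq 0} Z_n\, t^n=\exp\Bigl(\sum_{k\geq 1}\frac{z_k t^k}{k}\Bigr),
\]
and substitute the prescribed indeterminates. Using $\sum_{k\geq 1}u^k/k=-\ln(1-u)$, the exponent collapses to $(a-mb)t-m\ln(1-bt)$, so that
\[
 \sum_{n\geq 0} Z_n\, t^n=e^{(a-mb)t}\,(1-bt)^{-m}.
\]
This closed form is the crux of the argument: the prefactor $e^{(a-mb)t}$ is entire, and the only singularity is the algebraic branch point at $t=1/b$ contributed by $(1-bt)^{-m}$.

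Next I would apply singularity analysis \cite[Chap.~VI]{flajoletAnalyticCombinatorics2009}. Because the analytic prefactor $h(t)\defeq e^{(a-mb)t}$ is regular at the dominant singularity $t=1/b$, the transfer theorem gives that the coefficient of $t^n$ is asymptotic to $h(1/b)$ times the coefficient of $t^n$ in $(1-bt)^{-m}$. Concretely,
\[
 [t^n](1-bt)^{-m}=b^n\frac{\Gamma(n+m)}{\Gamma(m)\,\Gamma(n+1)}
 =b^n\frac{n^{m-1}}{\Gamma(m)}\bigl(1+\mathcal O(n^{-1})\bigr),
\]
while $h(1/b)=e^{(a-mb)/b}=e^{a/b-m}$. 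Multiplying by $(n!)^2$ yields the claimed estimate; the hypothesis $m\in\R\setminus\Z_{\leq 0}$ ensures that $\Gamma(m)$ is finite and nonzero, so the leading term is nondegenerate and the genuine growth rate is $n^{m-1}$.

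Finally, for the two stated consequences I would set $(a,b,m)=(1,1,\tfrac{1}{2})$ and $(a,b,m)=(0,1,\tfrac{1}{2})$ and use $\Gamma(\tfrac{1}{2})=\sqrt{\pi}$. The first gives $\Psi_k(1,1,\tfrac{1}{2})/(k!)^2\sim e^{1/2}/\sqrt{\pi k}$, whence $b_k\sim e/(\pi k)$; the second gives $c_k=\Psi_k(0,1,\tfrac{1}{2})/(k!)^2\sim e^{-1/2}/\sqrt{\pi k}=1/\sqrt{\pi k e}$. The main point requiring care is not the computation but the justification of the transfer theorem: one must verify that the generating function extends analytically to a $\Delta$-domain anchored at $t=1/b$ apart from the branch point, which is immediate here since $h$ is entire and $(1-bt)^{-m}$ has no other singularity. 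The $\mathcal O(n^{-1})$ correction then arises from the Stirling-type expansion of $\Gamma(n+m)/\Gamma(n+1)$ together with the first-order Taylor term of $h$ at $t=1/b$, both of which are of relative order $n^{-1}$.
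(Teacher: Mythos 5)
Your proposal is correct and follows essentially the same route as the paper's proof: substituting $z_1=a$, $z_k=mb^k$ into the P\'olya cycle-index generating function to get $e^{(a-mb)x}(1-bx)^{-m}$, then applying the Flajolet--Sedgewick singularity-analysis/transfer results at the branch point $x=1/b$, and specializing $(a,b,m)$ to $(1,1,\tfrac{1}{2})$ and $(0,1,\tfrac{1}{2})$ (with the correct squaring in the definition of $b_k$). Your added remarks on the $\Delta$-domain justification and the origin of the $\mathcal O(n^{-1})$ correction are consistent with, and slightly more explicit than, the paper's citation-based treatment.
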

 \begin{proof}
  It is known that the ordinary generating function of the cycle index $Z_n$ is given by
  \[
   G(x)\defeq\sum_{n\geq 0}Z_nx^n
   =\exp\biggl(\sum_{k=1}^\infty \frac{z_kx^k}{k}\biggr).
  \]
  Setting $z_1=a,z_k=mb^k$ for $k\geq 2$, we obtain
  \[
   \sum_{k=1}^\infty \frac{z_kx^k}{k} = (a-mb)x +\sum_{k=1}^\infty \frac{mb^kx^k}{k} = (a-mb)x-m\ln(1-bx).
  \]
  Hence, we have \[\Psi_n(a,b,m)=(n!)^2[x^n]\frac{\exp((a-mb)x)}{(1-bx)^m},\]
  where $[x^n]f(x)$ is the coefficient of $x^n$ in the power series expansion
  of $f(x)$. By \cite[Thm.
  \RomanNumeralCaps{6}.1]{flajoletAnalyticCombinatorics2009}, we have
  \[[x^n]\frac{1}{(1-x)^{m}}\sim \frac{n^{m-1}}{\Gamma(m)}(1+\mathcal{O}(n^{-1})).\]
  Then by \cite[Sec. \RomanNumeralCaps{6}.3 \&
  4]{flajoletAnalyticCombinatorics2009}, we have
  \[
   \Psi_n(a,b,m)\sim(n!)^2\frac{b^ne^{a/b-m}n^{m-1}}{\Gamma(m)}(1+\mathcal O(n^{-1})). \qedhere
  \]
 \end{proof}

 \begin{lem}
  For the quantities defined above, it holds that $\frac{Z_{\mathrm B,2}^2}{(n!)^4}= b_n(1+\mathcal O(n^{-4}))$.
 \end{lem}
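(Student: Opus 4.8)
The plan is to start from the decomposition already derived,
\[
 \frac{Z_{\mathrm B,2}^2}{(n!)^4} = b_n + \sum_{k=2}^n a_{n-k}\, b_{n-k}\, c_k,
\]
and to show that the correction term $S \defeq \sum_{k=2}^n a_{n-k} b_{n-k} c_k$ satisfies $S = b_n \cdot \mathcal O(n^{-4})$. After the change of index $j = n-k$ this reads $S = \sum_{j=0}^{n-2} a_j b_j c_{n-j}$, where $a_j = (j!/n!)^2$ is exact and increasing in $j$, while the preceding lemma supplies the uniform estimates $b_m \le C_b$ for all $m \ge 0$, $b_m \le C_b'/m$ for $m \ge 1$, $c_m \le C_c$ for all $m$, together with the two-sided estimate $b_n = \Theta(n^{-1})$.

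First I would observe that the sum is dominated by its largest-index term $j = n-2$ (equivalently $k=2$), since $a_j$ decays geometrically as $j$ decreases, with ratio $a_{j-1}/a_j = 1/j^2$. Concretely, I split the range at $j = n/2$. For $n/2 \le j \le n-2$ I use $b_j \le 2C_b'/n$ and $c_{n-j} \le C_c$ together with the geometric bound $\sum_{n/2 \le j \le n-2} a_j \le a_{n-2}/(1-4/n^2) = a_{n-2}\bigl(1 + \mathcal O(n^{-2})\bigr)$; since $a_{n-2} = 1/(n(n-1))^2 = \mathcal O(n^{-4})$, this range contributes $\mathcal O(n^{-1}) \cdot \mathcal O(n^{-4}) = \mathcal O(n^{-5})$. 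For $0 \le j < n/2$ I bound $b_j c_{n-j} \le C_b C_c$ and note that $\sum_{j < n/2} a_j \le (n/2)\,(\lfloor n/2 \rfloor!/n!)^2$, which decays faster than any polynomial and is therefore $o(n^{-5})$.

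Combining the two ranges gives $S = \mathcal O(n^{-5})$, and dividing by $b_n = \Theta(n^{-1})$ yields $S/b_n = \mathcal O(n^{-4})$. Because $S > 0$, we may write $\frac{Z_{\mathrm B,2}^2}{(n!)^4} = b_n\bigl(1 + S/b_n\bigr) = b_n\bigl(1 + \mathcal O(n^{-4})\bigr)$, which is exactly the claim.

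The main obstacle I anticipate is keeping track of the extra factor of $1/n$: a naive bound using $b_j \le C_b$ uniformly over the whole sum only gives $S = \mathcal O(n^{-4})$, one power of $n$ short of what is needed relative to $b_n$. The resolution is that the geometrically dominant terms sit at $j$ close to $n$, where $b_j \sim e/(\pi j)$ is itself of size $\Theta(n^{-1})$; exploiting the decay of $b_j$ in the upper range, rather than only the decay of $a_j$, is precisely what sharpens the estimate to $\mathcal O(n^{-5})$ and hence produces the stated $\mathcal O(n^{-4})$ relative error.
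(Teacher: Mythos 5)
Your proof is correct and takes essentially the same route as the paper's: the same starting decomposition, the same split of $\sum_{k\geq 2} a_{n-k}b_{n-k}c_k$ at $k=n/2$ (your $j=n-k\geq n/2$), the geometric decay of $a$ (ratio $a_{j-1}/a_j = 1/j^2 \leq 4/n^2$), constant bounds on $c$, super-polynomial smallness in the far range, and the same key observation that $b$ is of size $\Theta(1/n)$ in the dominant range. The only difference is bookkeeping: you bound the correction sum absolutely by $\mathcal O(n^{-5})$ and divide by $b_n = \Theta(n^{-1})$ at the end, whereas the paper normalizes by $b_n$ term by term via $b_{n-k}/b_n = \mathcal O(1)$; the two are equivalent.
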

 \begin{proof}
  We prove $\sum_{k=2}^{n}a_{n-k}b_{n-k}c_k/b_n=\mathcal O(n^{-4})$ by splitting the summation
  into $2\leq k\leq n/2$ and $n/2 <k\leq n$ parts separately.

  For $c_n$, we use a constant bound for all $n$. Since
  $c_n\sim\frac{1}{\sqrt{\pi ne}}(1+\mathcal O(n^{-1}))$ and is finite for
  every $n$, there exists a constant $K_c$ such that $c_n\leq K_c,\forall n$.

  Part 1: $2\leq k\leq n/2$. In this part, $a_{n-k}$ is dominated by a
  geometric series, that is,
  \[a_{n-k}=\frac{1}{(n(n-1)\cdots(n-k+1))^2}\leq \Big(\frac{2}{n}\Big)^{\! 2k}
   =\Bigl(\frac{4}{n^2}\Bigr)^{\! k}.\]
  For $b_{n-k}/b_n$, since $b_n\sim\frac{e}{\pi n}(1+\mathcal O(n^{-1}))$, we
  have $b^{-1}_n=\frac{n\pi}{e}(1+\mathcal O(n^{-1}))$. Hence, we have
  \[
   \frac{b_{n-k}}{b_n}
   =\frac{n}{n-k}(1+\mathcal O((n-k)^{-1})+\mathcal O (n^{-1}))
   =\frac{n}{n-k}(1+\mathcal O (n^{-1}))
   =\mathcal O(1),\quad 2\leq k\leq n/2.
  \]
  Multiplying together, we have
  \[a_{n-k}b_{n-k}c_k/b_n \leq KK_c \Bigl(\frac{4}{n^2}\Bigr)^{\! k}\]
  for some constant $K$. Therefore
  \[
   \sum_{k=2}^{n/2}a_{n-k}b_{n-k}c_k/b_n
   \leq KK_c\sum_{k=2}^\infty \Bigl(\frac{4}{n^2}\Bigr)^{\! k}
   = KK_c\frac{16/n^4}{1-4/n^2}.
  \]

  Part 2: $n/2 <k\leq n$. In this part, $a_{n-k}$ is super-polynomially small.
  Since
  \[
   \frac{n!}{(n-k)!}\geq\frac{n!}{\lfloor n/2 \rfloor!}= n(n-1)\cdots(\lfloor n/2\rfloor+1)\geq (n/2)^{n/2},
  \]
  as $n-k\leq\lfloor n/2 \rfloor$, we have $a_{n-k}\leq(2/n)^n$.

  In this case, it suffices to bound $b_{n-k}$ also by a constant $K_b$.
  Multiplying together we have
  \[a_{n-k}b_{n-k}c_k/b_n \leq K_bK_cK'n(2/n)^n.\]

  Adding the two parts together, we have
  \begin{align*}
   \sum_{k=2}^{n}a_{n-k}b_{n-k}c_k/b_n
    & = \sum_{2\leq k\leq n/2}a_{n-k}b_{n-k}c_k/b_n+\sum_{n/2< k\leq n}a_{n-k}b_{n-k}c_k/b_n            \\
    & \leq  KK_c\sum_{k=2}^\infty \Big(\frac{4}{n^2}\Big)^{\! k} + \sum_{n/2< k\leq n} K_bK_cK'n(2/n)^n \\
    & =KK_c\frac{16/n^4}{1-4/n^2}+K_bK_cK'n^2(2/n)^n                                                    \\
    & = \mathcal O(n^{-4}). \qedhere
  \end{align*}
 \end{proof}
 With the above two lemmas, we have $\frac{Z_{\mathrm B,2}}{Z}\sim (n!)^{-2}\Psi_n(1,1,1/2)(1+\mathcal O(n^{-4}))=\sqrt{\frac{e}{\pi n}}(1+\mathcal O(n^{-1}))$, which is the promised result.


 \section{Proof of Theorem \ref{thm:double:cover:complex}}\label{appx:double:cover:complex}
 At the core of evaluating \(\E[Z^2]\) and \(\E[Z^2_{\B,2}]\)
 is the following mixed moment of the entries of \(\boldsymbol{\theta}\):
 \[
  \E\Big[\prod_i\theta_{i,\sigma_1(i)}\theta_{i,\sigma_2(i)}\overline{\theta_{i,\tau_1(i)}\theta_{i,\tau_2(i)}}\Big]
  =\prod_i \E\Big[\theta_{i,\sigma_1(i)}\theta_{i,\sigma_2(i)}\overline{\theta_{i,\tau_1(i)}\theta_{i,\tau_2(i)}}\Big]
 \]
 where \(\sigma_1,\sigma_2,\tau_1,\tau_2\in\mathcal{S}_n\). The expectation
 factorizes since the entries of \(\boldsymbol{\theta}\) are independent. If
 there exist some \(i\in[n]\) such that there is a unique column index among
 \(\set{\sigma_1(i),\sigma_2(i),\tau_1(i),\tau_2(i)}\), then the expectation is
 zero due to the zero-mean assumption.

 Fixing valid configurations $\hatbfx[(1)]$ for $\widehat{\sfn}(\btheta)$ and $\hatbfx[(2)]$ 
 for $\widehat{\sfn}(\overline{\btheta})$, we classify when this pair of valid configurations give nonzero
 contribution to \(\E[Z^2(\sfnDE(\btheta))]\) and \(\E[Z^2_{\B,2}(\sfnDE(\btheta))]\).
 In the case $\mu_{2,0}=0$, a non-vanishing contribution in \(\E[Z^2]\)
 requires the following conditions to hold on \(\hatbfx[(1)]\) and
 \(\hatbfx[(2)]\):
 \begin{enumerate}[label=(\arabic*)]
  \item There exists a \((k,k)\)-bipartite complete subgraph where
  \((1,1)\)-edges are assigned in the same way for \(\hatbfx[(1)]\)
  and \(\hatbfx[(2)]\),
  \item \(\hatbfx[(1)]\) and \(\hatbfx[(2)]\) have the same cycles,
  and \((0,1)\) or \((1,0)\) can be assigned in \(\hatbfx[(1)]\) and \(\hatbfx[(2)]\) independently.
 \end{enumerate}
 For evaluating \(\E[Z^2_{\B,2}]\), condition (2) is further strengthened to the following:
 \begin{enumerate}[label=(\arabic*), start=3]
  \item \(\hatbfx[(1)]\) and \(\hatbfx[(2)]\) have the same cycles.
  For each cycle, if it is assigned as \((0,1)\)-cycle (resp. \((1,0)\)-cycle) in \(\hatbfx[(1)]\),
  then it must be assigned as \((0,1)\)-cycle (resp. \((1,0)\)-cycle) in \(\hatbfx[(2)]\), and vice versa.
 \end{enumerate}
 In the case \(\mu_{2,0}\neq 0\), condition (2) and (3) are unchanged, while condition (1) is weakened to:
 \begin{enumerate}[label=(\arabic*), start=4]
  \item There exists a \((k,k)\)-bipartite complete subgraph where
  \((1,1)\)-edges are assigned independently for \(\hatbfx[(1)]\)
  and \(\hatbfx[(2)]\).
 \end{enumerate}

 In the following, we first build the connection between \(\mathcal S_n\times
 \mathcal S_n\) and \(\validc\), and then detail each of the above cases. A
 cycle of a permutation is called as a nontrivial cycle if it has length at
 least two.

 \subsection{Case $\mu_{2,0}=0$: Evaluation of $\E [Z^2]$}
 Define the map
 \[
  h:\Bigl\{(\sigma_1,\sigma_2,\mathbf b)
  \Bigm\vert \sigma_1,\sigma_2\in\mathcal{S}_n,\mathbf b\in \mathcal X^{c(\sigma_1\sigma_2^{-1})}\Bigr\}
  \to \validc,
  \qquad
  (\sigma_1,\sigma_2,\mathbf b)\mapsto \hatbfx=(\hatx{i,j}),
 \]
 where
 \[
  \hatx{i,j}=
  \begin{cases}
  (1,1), & j=\sigma_1(i)=\sigma_2(i),                                         \\
  (0,1), & i\text{ in the $k$-th nontrivial cycle of }\sigma_1\sigma_2^{-1},\
  j\in\{\sigma_1(i),\sigma_2(i)\},\ b_k=0,                                    \\
  (1,0), & i\text{ in the $k$-th nontrivial cycle of }\sigma_1\sigma_2^{-1},\
  j\in\{\sigma_1(i),\sigma_2(i)\},\ b_k=1,                                    \\
  (0,0), & \text{otherwise},
  \end{cases}
 \]
 and the nontrivial cycles of \(\sigma_1\sigma_2^{-1}\) are sorted according to
 the minimal element in each nontrivial cycle. Thus fixed points of
 $\sigma_1\sigma_2^{-1}$ produce $(1,1)$-edges, while each nontrivial cycle
 contributes two possible assignments, encoded by the binary choice $b_k$.

 Define $h(\sigma_1,\sigma_2)\defeq\set{h(\sigma_1,\sigma_2,\mathbf b):\mathbf
 b\in \mathcal X^{c(\sigma_1\sigma_2^{-1})}}\subseteq \mathcal
 C(\widehat{\mathcal N}_2)$, and
 \[
  g_\mathrm{avg} (\sigma_1,\sigma_2)\defeq
  \frac{1}{|h(\sigma_1,\sigma_2)|}\sum_{\hatbfx\in h(\sigma_1,\sigma_2)} g(\hatbfx)
  =\prod_{i}\theta_{i,\sigma_1(i)}\theta_{i,\sigma_2(i)},
 \]
 where $g$ is the global function of the transformed trivial double-cover,
 i.e., $\widehat{\sfn}(\btheta)$ with all edge weights being $\widehat{\mathbf
 E}^{\mathrm p}_{i,j}$. Note that all valid configurations in
 $h(\sigma_1,\sigma_2)$ have the same cycle structure, and $g_{\mathrm{avg}}$
 is averaging over all possible $(0,1)$- or $(1,0)$-cycle assignments. Then we
 have
 \begin{align*}
  \mathrm{perm}^2(\boldsymbol \theta) & =
  \sum_{\sigma_1,\sigma_2\in \mathcal{S}_n}\prod_{i}\theta_{i,\sigma_1(i)}\theta_{i,\sigma_2(i)} =
  \sum_{\sigma_1,\sigma_2\in \mathcal{S}_n} g_\mathrm{avg}(\sigma_1,\sigma_2), \\
  Z^2                                 & =|\mathrm{perm}(\boldsymbol\theta)|^4=
  \Biggl\lvert\sum_{\sigma_1,\sigma_2\in \mathcal{S}_n} g_\mathrm{avg}(\sigma_1,\sigma_2)\Biggr\rvert^2.
 \end{align*}

 Since $\mu_{1,0}=\mu_{2,0}=0$, we have
 $\E\Big[g_\mathrm{avg}(\sigma_1,\sigma_2)\overline{g_\mathrm{avg}(\tau_1,\tau_2)}\Big]\neq
 0$ if and only if $h(\sigma_1,\sigma_2)=h(\tau_1,\tau_2)$. Otherwise, there
 exists some $i\in[n]$ such that
 $\E\Big[\theta_{i,\sigma_1(i)}\theta_{i,\sigma_2(i)}\overline{\theta_{i,\tau_1(i)}\theta_{i,\tau_2(i)}}\Big]=0$.
 When $h(\sigma_1,\sigma_2)=h(\tau_1,\tau_2)$, the expectation depends only on
 the cycle structure of $\sigma_1\sigma_2^{-1}$ and factorizes over its cycles:
 \[
  \E\left[
   g_{\mathrm{avg}}(\sigma_1,\sigma_2)\,
   \overline{g_{\mathrm{avg}}(\tau_1,\tau_2)}
   \right]
  =
  \mu_{2,2}^{\,c_1(\sigma_1\sigma_2^{-1})}
  \prod_{\ell=2}^n (\mu_{1,1}^2)^{\ell c_\ell(\sigma_1\sigma_2^{-1})}.
 \]

 Fixing $(\sigma_1,\sigma_2)$, there are exactly $2^{c(\sigma_1\sigma_2^{-1})}$
 pairs $(\tau_1,\tau_2)$ such that $h(\sigma_1,\sigma_2)=h(\tau_1,\tau_2)$, all
 contributing the same value. Therefore,
 \[
  \E[Z^2]
  =  \sum_{\sigma_1,\sigma_2\in \mathcal{S}_n}
  2^{c(\sigma_1\sigma_2^{-1})}
  \mu_{2,2}^{\,c_1(\sigma_1\sigma_2^{-1})}
  \prod_{\ell=2}^n
  \bigl(\mu_{1,1}^2\bigr)^{\ell c_\ell(\sigma_1\sigma_2^{-1})}
  =\Psi_n(\mu_{2,2},\mu_{1,1}^2,2).
 \]

 \subsection{Case $\mu_{2,0}=0$: Evaluation of $\E [Z_{\B,2}^2]$}
 Now we turn to $Z^2_{\B,2}=\sum_{\sigma_1,\sigma_2,\tau_1,\tau_2}
  g_\mathrm{avg}^{(1)}(\sigma_1,\sigma_2,\tau_1,\tau_2)$ where
 \begin{align*}
  g_\mathrm{avg}^{(1)}(\sigma_1,\sigma_2,\tau_1,\tau_2)   & \defeq
  \frac{1}{|h(\sigma_1,\sigma_2)||h(\tau_1,\tau_2)|}\sum_{\substack{\hatbfx[(1)]\in h(\sigma_1,\sigma_2)                               \\\hatbfx[(2)]\in h(\tau_1,\tau_2)}}
  g_{\mathrm {avg}}^{(2)}(\hatbfx[(1)],\hatbfx[(2)]),                                                                                  \\
  g_{\mathrm {avg}}^{(2)}(\hatbfx[(1)],\hatbfx[(2)])      & \defeq \prod_{ij} g_{i,j,\mathrm{avg}}(\hatx{i,j}^{(1)},\hatx{i,j}^{(2)}), \\
  g_{i,j,\mathrm{avg}}(\hatx{i,j}^{(1)},\hatx{i,j}^{(2)}) & \defeq
  \frac{1}{2}\left[\widehat{\mathbf E}_{i,j}^\mathrm{p} (\hatx{i,j}^{(1)},\hatx{i,j}^{(1)})
   \overline{\widehat{\mathbf E}_{i,j}^\mathrm{p}} (\hatx{i,j}^{(2)},\hatx{i,j}^{(2)})
   +\widehat{\mathbf E}_{i,j}^\mathrm{c} (\hatx{i,j}^{(1)},\hatx{i,j}^{(1)})
   \overline{\widehat{\mathbf E}_{i,j}^\mathrm{c}} (\hatx{i,j}^{(2)},\hatx{i,j}^{(2)})\right].
 \end{align*}
 Note that $g_{\mathrm{avg}}^{(1)}$ is averaging over $h(\sigma_1,\sigma_2)\times h(\tau_1,\tau_2)\subseteq \validc \times \validc$,
 and $g_{\mathrm {avg}}^{(2)}$ is averaging over $\widehat{\sfn}(\btheta)\sqcup\widehat{\sfn}(\overline{\btheta})$.

 Same as above, we have $\E[
 g_\mathrm{avg}^{(1)}(\sigma_1,\sigma_2,\tau_1,\tau_2)]\neq 0$ if and only if
 $h(\sigma_1,\sigma_2)=h(\tau_1,\tau_2)$. We further split the summation in $
 g_\mathrm{avg}^{(1)}(\sigma_1,\sigma_2,\tau_1,\tau_2)$ according to cycles of
 $\sigma_1\sigma_2^{-1}$:
 \begin{itemize}
  \item For $i=\sigma_1\sigma_2^{-1}(i)$, $\prod_j
  g_{i,j,\mathrm{avg}}(\hatx{i,j}^{(1)},\hatx{i,j}^{(2)})=|\theta_{i,\sigma_1(i)}|^4$
  independent of $\hatbfx[(1)],\hatbfx[(2)]$.
  \item For a nontrivial cycle $C$ of $\sigma_1\sigma_2^{-1}$, we have
  $\prod_{ij:i \in C}g_{i,j,\mathrm{avg}}(\hatx{i,j}^{(1)},\hatx{i,j}^{(2)})=
  \prod_{i:i\in C}|\theta_{i,\sigma_1(i)}\theta_{i,\sigma_2(i)}|^2$ if and only
  if $\hatbfx[(1)]$ and $\hatbfx[(2)]$ have the same assignment for
  $\hatx{i,j}$ for all $i\in C$. In other words, if a nontrivial cycle $C$ is
  assigned as a $(0,1)$-cycle in $\widehat{\sfn}(\btheta)$ and as a
  $(1,0)$-cycle in $\widehat{\sfn}(\overline\btheta)$ (or vice versa), then
  $g_{i,j,\mathrm{avg}}(\,\cdot\,,\,\cdot\,)=0$ for $i\in C$.
 \end{itemize}

 Hence, $\E[g_{\mathrm {avg}}^{(2)}(\hatbfx[(1)],\hatbfx[(2)])]\neq 0$ if and
 only if $\hatbfx[(1)]$ and $\hatbfx[(2)]$ have the same assignment on every
 nontrivial cycle $C$ in $\sigma_1\sigma_2^{-1}$. When all the constraints are
 met, we have
 \[\E[g_\mathrm{avg}^{(1)}(\sigma_1,\sigma_2,\tau_1,\tau_2)]
  =\mu_{2,2}^{c_1(\sigma_1\sigma_2^{-1})}\prod_{\ell=2}^{n}(\mu_{1,1}^{2\ell}/2)^{c_\ell(\sigma_1\sigma_2^{-1})}.\]
 Hence, we have $\E[Z^2_{\B,2}]
 =\sum_{\sigma_1,\sigma_2}\mu_{2,2}^{c_1{(\sigma_1\sigma_2^{-1})}}\prod_{\ell=2}^{n}\mu_{1,1}^{2\ell
 c_\ell(\sigma_1\sigma_2^{-1})} =\Psi_n(\mu_{2,2},\mu_{1,1}^2,1)$.

 \subsection{Case $\mu_{2,0}\neq0$}
 In the case $\mu_{2,0}\neq 0$, the main difference arises from the fixed points of $\sigma_1\sigma_2^{-1}$.
 Recall that
 \[
  g_{\mathrm{avg}}(\sigma_1,\sigma_2)
  =\prod_{i=1}^n \theta_{i,\sigma_1(i)}\theta_{i,\sigma_2(i)}.
 \]
 By independence of the matrix entries, the mixed expectation
 \[
  \E\left[
   g_{\mathrm{avg}}(\sigma_1,\sigma_2)\,
   \overline{g_{\mathrm{avg}}(\tau_1,\tau_2)}
   \right]
 \]
 factorizes over $i\in[n]$. As in the previous case, cancellation occurs on all
 nontrivial cycles of $\sigma_1\sigma_2^{-1}$. However, when $\mu_{2,0}\neq 0$,
 mismatches are allowed at fixed points of $\sigma_1\sigma_2^{-1}$.

 More precisely, the expectation is nonzero if and only if
 \[
  h(\sigma_1,\sigma_2)=h(\tau_1,\tau_2)
  \quad\text{on all nontrivial cycles of }\sigma_1\sigma_2^{-1},
 \]
 while on the fixed-point set
 \[
  \mathrm{Fix}(\sigma_1\sigma_2^{-1})\defeq\{i\in[n]:\sigma_1\sigma_2^{-1}(i)=i\}
 \]
 the values $(\tau_1(i),\tau_2(i))$ may differ from
 $(\sigma_1(i),\sigma_2(i))$. If, for some
 $i\in\mathrm{Fix}(\sigma_1\sigma_2^{-1})$,
 \[
  \sigma_1(i)=\sigma_2(i)=j,
  \qquad
  \tau_1(i)=\tau_2(i)=j'\neq j,
 \]
 then the corresponding factor contributes \(
 \E\Big[\theta_{i,j}^2\overline{\theta_{i,j'}^2}\Big] =|\mu_{2,0}|^2\neq 0. \)

 Fix $\sigma_1,\sigma_2\in \mathcal{S}_n$ and we analyze when the mixed
 expectation does not vanish. For the nontrivial cycles of
 $\sigma_1\sigma_2^{-1}$, the contribution is the same as in the previous case
 and equals \( \prod_{\ell=2}^n \mu_{1,1}^{2\ell
 c_\ell(\sigma_1\sigma_2^{-1})}. \) On the fixed-point set
 $\mathrm{Fix}(\sigma_1\sigma_2^{-1})$, admissible pairs $(\tau_1,\tau_2)$ are
 obtained by permuting the fixed points and independently choosing
 orientations, yielding
 \[
  \big(c_1(\sigma_1\sigma_2^{-1})\big)!\cdot 2^{c(\sigma_1\sigma_2^{-1})}
 \]
 such pairs. Summing over all these $(\tau_1,\tau_2)$ gives
 \begin{align*}
   & \sum_{\tau_1,\tau_2}
  \E\left[
   g_{\mathrm{avg}}(\sigma_1,\sigma_2)\,
   \overline{g_{\mathrm{avg}}(\tau_1,\tau_2)}
  \right]                 \\
   & \qquad =
  2^{c(\sigma_1\sigma_2^{-1})}
  \sum_{\rho\in \mathcal{S}_{c_1(\sigma_1\sigma_2^{-1})}}
  \mu_{2,2}^{c_1(\rho)}
  \prod_{\ell=2}^{c_1(\sigma_1\sigma_2^{-1})} |\mu_{2,0}|^{2\ell c_\ell(\rho)}
  \prod_{\ell=2}^n \mu_{1,1}^{2\ell c_\ell(\sigma_1\sigma_2^{-1})},
 \end{align*}
 where $\rho$ records the cycle structure induced on the fixed points.
 Hence, we have
 \begin{align*}
  \E[Z^2] & =\sum_{\sigma_1,\sigma_2,\tau_1,\tau_2}
  \E\Big[g_\mathrm{avg}(\sigma_1,\sigma_2)\overline{ g_\mathrm{avg}(\tau_1,\tau_2)}\Big] \\
          & =\sum_{\sigma_1,\sigma_2} 2^{c(\sigma_1\sigma_2^{-1})}
  \Big(\sum_{\rho\in  \mathcal{S}_{c_{1}(\sigma_1\sigma_2^{-1})}}\mu_{2,2}^{c_1(\rho)}
  \prod_{\ell=2}^{c_1(\rho)}|\mu_{2,0}|^{2\ell c_\ell(\rho)}\Big)
  \prod_{\ell=2}^{n}\mu_{1,1}^{2\ell c_\ell(\sigma_1\sigma_2^{-1})}                      \\
          & =n!\sum_{\sigma\in \mathcal{S}_n}
  \Big(\sum_{\rho\in  \mathcal{S}_{c_{1}(\sigma)}}\mu_{2,2}^{c_1(\rho)}\prod_{\ell=2}^{c_1(\sigma)}|\mu_{2,0}|^{2\ell c_\ell(\rho)}\Big)
  \prod_{\ell=2}^{n}\mu_{1,1}^{2\ell c_\ell(\sigma)}2^{c_\ell(\sigma)}.
 \end{align*}

 It remains to simplify this expression. View $\rho\in
 \mathcal{S}_{c_1(\sigma)}$ as acting on the fixed points of $\sigma$, thereby
 producing a permutation $\tau\in \mathcal{S}_n$. Under this composition, each
 nontrivial cycle of $\tau$ arises in exactly one of two ways:
 \begin{itemize}
  \item from a nontrivial cycle of $\sigma$, contributing a factor
  \(2(\mu_{1,1}^2)^\ell\);
  \item from a nontrivial cycle of $\rho$ acting on fixed points, contributing
  a factor \(|\mu_{2,0}|^{2\ell}\).
 \end{itemize}
 Moreover, for each resulting $\tau$ there are exactly $2^{c(\tau)}$ pairs
 $(\sigma,\rho)$ producing it.

 Consequently, the sum reorganizes into a single cycle-index expression:
 \[\E[Z^2]=n!\sum_{\tau\in \mathcal{S}_n} \mu_{2,2}^{c_1({\tau})}\prod_{\ell= 2}^n(2\mu_{1,1}^{2\ell}+|\mu_{2,0}|^{2\ell})^{c_\ell(\tau)}=\psi_n(\mu_{2,2}, \mu_{1,1}^2, 2, |\mu_{2,0}|^2).\]

 A similar argument holds for $\E \big[Z_{\B,2}^2\big]$, so we have $\E
 \big[Z_{\B,2}^2\big]=\psi_n (\mu_{2,2}, \mu_{1,1}^2, 1, |\mu_{2,0}|^2)$.


 \section{Proof of Corollary \ref{cor:double:cover:complex}}\label{appx:double:cover:complex:asym}
 The constant $C$ in Corollary \ref{cor:double:cover:complex} is given by
 \[
  C=
  -\frac{\mu_{2,2}-2\mu_{1,1}^2-|\mu_{2,0}|^2}{\mu_{1,1}^2}
  -\frac{|\mu_{2,0}|^2}{\mu_{1,1}^2-|\mu_{2,0}|^2}.
 \]
 Note that $\mu_{1,1}^2-|\mu_{2,0}|^2>0$ always hold by Assumption
 \ref{ass:iid:matrices}, whereas $\mu_{2,2}-2\mu_{1,1}^2-|\mu_{2,0}|^2$ does
 not have a definite sign. The asymptotics of
 \[\sqrt{\frac{\E \big[Z^2(\sfnDE(\btheta))\big]}{\E \big[Z_{\B,2}^2(\sfnDE(\btheta))\big]}}
  =\sqrt{\frac{\psi_n (\mu_{2,2}, \mu_{1,1}^2, 2, |\mu_{2,0}|^2)}{\psi_n (\mu_{2,2}, \mu_{1,1}^2, 1, |\mu_{2,0}|^2)}}\]
 is the consequence of the following lemma.
 \begin{lem}
  For $a,b,c\in \R_{\geq 0}, b > c \geq0$, it holds that
  \begin{align*}
   \psi_n(a,b,2,c) & \sim (n!)^2 \Big(b^n\frac{1}{1-c/b}e^{(a-2b-c)/b}\Big [(n+1)-\frac{a-2b-c}{b}-\frac{c}{c-b} \Big]+\mathcal O(c^n)\Big), \\
   \psi_n(a,b,1,c) & \sim (n!)^2 \Big(b^n\frac{1}{1-c/b}e^{(a-b-c)/b}+\mathcal O(c^n)\Big).
  \end{align*}
 \end{lem}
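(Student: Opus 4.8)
The plan is to reduce both asymptotics to the coefficient extraction of a single meromorphic generating function, and then to read off the behaviour from its two poles, exactly as in the proof of the preceding lemma. First I would compute the ordinary generating function in closed form. Using the exponential formula $\sum_{n\ge 0} Z_n x^n = \exp\bigl(\sum_{k\ge 1} z_k x^k/k\bigr)$ and substituting $z_1 = a$ together with $z_k = mb^k + c^k$ for $k\ge 2$, the two geometric--logarithmic series sum to
\[ \sum_{k\ge 1}\frac{z_k x^k}{k} = (a-mb-c)x - m\ln(1-bx) - \ln(1-cx), \]
so that
\[ \psi_n(a,b,m,c) = (n!)^2\,[x^n]\,\frac{e^{(a-mb-c)x}}{(1-bx)^m\,(1-cx)}. \]
This is the same computation as for $\Psi_n$; the extra factor $1/(1-cx)$ and the $-c$ shift in the exponent are the only new ingredients.

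Since $0\le c<b$, the function $F(x)$ is meromorphic with a pole of order $m$ at $x=1/b$ and a simple pole at $x=1/c$, and $1/b<1/c$. The same transfer-theorem/residue machinery invoked above then expresses $[x^n]F$ as the contribution of the dominant pole $1/b$ plus the contribution of $1/c$, the latter being $\Theta(c^n)$ and hence absorbed into the stated $\mathcal O(c^n)$, with a genuinely smaller remainder. For $m=1$ the pole at $1/b$ is simple, so its contribution is $g(1/b)\,b^n$ with $g(x)\defeq e^{(a-b-c)x}/(1-cx)$; evaluating $g(1/b)=e^{(a-b-c)/b}/(1-c/b)$ yields the second displayed line at once.

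For $m=2$ I would extract the double-pole contribution by a local expansion. Writing $F=g(x)/(1-bx)^2$ with $g(x)\defeq e^{(a-2b-c)x}/(1-cx)$ analytic at $1/b$, I expand $g(x)=g(1/b)+g'(1/b)(x-1/b)+O\bigl((x-1/b)^2\bigr)$ and use $x-1/b=-(1-bx)/b$ to rewrite the principal part as $g(1/b)/(1-bx)^2 - \bigl(g'(1/b)/b\bigr)/(1-bx)$. The exact identities $[x^n](1-bx)^{-2}=(n+1)b^n$ and $[x^n](1-bx)^{-1}=b^n$ then give the $b^n$-part as $b^n\bigl[(n+1)g(1/b)-g'(1/b)/b\bigr]$. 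Computing $g'(1/b)=e^{(a-2b-c)/b}\bigl[(a-2b-c)(1-c/b)+c\bigr]/(1-c/b)^2$ and factoring out $e^{(a-2b-c)/b}/(1-c/b)$ leaves the bracket $(n+1)-\tfrac{a-2b-c}{b}-\tfrac{c}{b-c}$, in which the $\tfrac{c}{b-c}$ term is exactly the $c/(1-c/b)$ piece of $g'(1/b)$.

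The main obstacle is precisely this $m=2$ bookkeeping: the finite correction inside the bracket---and in particular the sign of the $1/(b-c)$ term---is fixed by the factor $-1/b$ coming from the substitution $x-1/b=-(1-bx)/b$ together with the simplification of $g'(1/b)/g(1/b)$, and this is where a careful rather than heuristic computation is required. A convenient sanity check, which I would include, is the purely rational case $a=2b+c$: there $e^{(a-2b-c)x}\equiv 1$, and an exact partial-fraction decomposition of $1/\bigl((1-bx)^2(1-cx)\bigr)$ reproduces the same bracket, confirming the constant. The remaining ingredients---the closed-form generating function and the $\mathcal O(c^n)$ remainder from the subdominant pole at $1/c$---are routine given the tools already used in the previous lemma.
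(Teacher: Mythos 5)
Your proof is correct and takes essentially the same route as the paper's: the exponential formula gives $\psi_n(a,b,m,c)=(n!)^2\,[x^n]\,e^{(a-mb-c)x}(1-bx)^{-m}(1-cx)^{-1}$, the dominant pole at $x=1/b$ is handled by locally expanding the analytic cofactor (your $g(1/b)$ and $-g'(1/b)/b$ are exactly the paper's coefficients $f_{m,0}$ and $f_{m,1}$), the identities $[x^n](1-bx)^{-1}=b^n$ and $[x^n](1-bx)^{-2}=(n+1)b^n$ give the main terms, and the subdominant pole at $x=1/c$ supplies the $\mathcal O(c^n)$ remainder.

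One substantive discrepancy deserves mention, and it is in your favor. Your final bracket is $(n+1)-\frac{a-2b-c}{b}-\frac{c}{b-c}$, whereas the lemma as stated (and the last display of the paper's own proof) writes $-\frac{c}{c-b}$ for the last term; for $c>0$ these differ. Your sign is the correct one: the logarithmic derivative of $f_2(x)=e^{(a-2b-c)x}(1-cx)^{-1}$ at $x=1/b$ gives
\[
\frac{f_{2,1}}{f_{2,0}}=-\frac{1}{b}\,\frac{f_2'(1/b)}{f_2(1/b)}
=-\frac{a-2b-c}{b}-\frac{c}{b-c},
\]
and this is also the version consistent with the constant
\[
C=-\frac{\mu_{2,2}-2\mu_{1,1}^2-|\mu_{2,0}|^2}{\mu_{1,1}^2}-\frac{|\mu_{2,0}|^2}{\mu_{1,1}^2-|\mu_{2,0}|^2}
\]
appearing in the paper's proof of Corollary~\ref{cor:double:cover:complex}, whose last denominator is $b-c$, not $c-b$. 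A concrete check: for $a=0$, $b=1$, $c=\tfrac{1}{2}$, expanding $e^{-5x/2}(1-x/2)^{-1}$ at $x=1$ gives $[x^n]\frac{e^{-5x/2}}{(1-x)^2(1-x/2)}=2e^{-5/2}\bigl[(n+1)+\tfrac{3}{2}\bigr]+\mathcal O(2^{-n})$, which matches your bracket $(n+1)+\tfrac{5}{2}-1$; the stated bracket would instead give $(n+1)+\tfrac{7}{2}$. So the statement carries a sign typo that your careful bookkeeping --- the step you yourself flagged as the main obstacle --- silently corrects.
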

 \begin{proof}
  Setting $z_1=a, z_k=mb^k+c^k$ for $k\geq 2, m\in \R\setminus\Z_{\leq 0} $, the generating function of $Z_n$ is given by
  \[
   C(x)=\exp\Biggl(\sum_{k\geq 1}\frac{z_kx^k}{k}\Biggr)
   =\exp\bigl((a-mb-c)x-m\ln(1-bx)-\ln(1-cx)\bigr)
   =\frac{e^{(a-mb-c)x}}{(1-bx)^m(1-cx)}.
  \]
  When $b>c$, the singularity of $C(x)$ that determines the asymptotics of
  $[x^n]C(x)$ is $x=1/b$ \cite{flajoletAnalyticCombinatorics2009}.

  Define $f_m(x)\defeq e^{(a-mb-c)x}(1-cx)^{-1}$, which is analytic around
  $x=1/b$ and $f(1/b)\neq 0$. To find an asymptotics with the desired accuracy,
  we consider the Taylor series expansion of $f_m(x)$ around $x=1/b$,
  \[f_m(x)=\sum_{k\geq 0}f_{m,k}(1-bx)^k,\]
  where $f_{m,k}=\frac{(-1/b)^k}{k!}f_m^{(k)}(1/b)$.

  Then we have $C(x)=\sum_{k\geq 0}f_{m,k}(1-bx)^{k-m}$. We compute the
  asymptotics of $[x^n]C(x)$ from $[x^n](1-bx)^{k-m}$:
  \begin{itemize}
   \item For $0\leq k < m$, Thm.~VI.1 in
   \cite{flajoletAnalyticCombinatorics2009} gives more correction terms for the
   asymptotics of $[x^n](1-bx)^{k-m}$. For $m=1$ and $m=2$, the situation is
   much simpler.
   \item For $k\geq m$, $(1-bx)^{k-m}$ is analytic at $x=1/b$, then
   $[x^n](f_{m,k}(1-bx)^{k-m})=\mathcal O(c^n)$. (See
   \cite[Thm.~IV.9\&10]{flajoletAnalyticCombinatorics2009})
  \end{itemize}
  For $m=1$ and 2, we have
  \[\frac{1}{1-bx}=\sum_{n\ge 0}(bx)^n,\frac{1}{(1-bx)^2}=\sum_{n\geq 0}(n+1)(bx)^n \text{ when }|bx|<1.\]
  Hence, we have $[x^n]\frac{1}{1-bx}=b^n$ and
  $[x^n]\frac{1}{(1-bx)^2}=b^n(n+1)$. Putting everything together, we have
  \begin{align*}
   \psi_n(a,b,1,c)
    & =(n!)^2[x^n]\frac{e^{(a-b-c)x}}{(1-bx)(1-cx)}                                                                                 \\
    & \sim(n!)^2(f_{1,0}b^n+\mathcal O(c^n))                                                                                        \\
    & = (n!)^2 \Big(b^n\frac{1}{1-c/b}e^{(a-b-c)/b}+\mathcal O(c^n)\Big),                                                           \\
   \psi_n(a,b,2,c)
    & =(n!)^2[x^n]\frac{e^{(a-2b-c)x}}{(1-bx)^2(1-cx)}                                                                              \\
    & \sim(n!)^2(f_{2,0}b^n(n+1)+f_{2,1}b^n+\mathcal O(c^n))                                                                        \\
    & = (n!)^2 \Big(b^n\frac{1}{1-c/b}e^{(a-2b-c)/b}\Big [(n+1)-\frac{a-2b-c}{b}-\frac{c}{c-b} \Big]+\mathcal O(c^n)\Big). \qedhere
  \end{align*}
 \end{proof}

 \section{Complex-valued Matrix with Zero-mean and Constant-argument Entries}
 \label{appx:double:cover:complex:equal}

 By the Cauchy-Schwarz inequality, for any probability distribution $\mathcal D$
 over $\C$, we have $\mu_{1,1}\geq |\mu_{2,0}|$ when both sides are finite. In
 the case $\mu_{1,1}=|\mu_{2,0}|>0$, the distribution $\mathcal D$ is supported
 over a line through the origin in the complex plane, i.e., $e^{\iota \phi}\R$
 for some constant $\phi\in(-\pi,\pi)$. Theorem~\ref{thm:double:cover:complex}
 still holds, but the asymptotic result is different.

 If we replace the condition $\mu_{1,1}>|\mu_{2,0}|\geq0$ in
 Assumption~\ref{ass:iid:matrices} by $\mu_{1,1}=|\mu_{2,0}|>0$, we obtain the
 following results:
 \begin{align*}
  \E \big[Z^2(\sfnDE(\btheta))\big]        & = \Psi_n (\mu_{2,2}, \mu_{1,1}^2, 3), \\
  \E \big[Z_{\B,2}^2(\sfnDE(\btheta))\big] & = \Psi_n (\mu_{2,2}, \mu_{1,1}^2, 2).
 \end{align*}
 Consequently, we have
 \[
  \sqrt{\frac{\E \big[Z^2(\sfnDE(\btheta))\big]}{\E \big[Z_{\B,2}^2(\sfnDE(\btheta))\big]}}\sim
  \sqrt{\frac{1}{e}\frac{(n+2)(n+1)/2-(a-3b)(n+1)/b+(a-3b)^2/(2b^2)}{(n+1)-(a-2b)/b}},
 \]
 where $a=\mu_{2,2}$ and $b=\mu_{1,1}^2$. The asymptotics is obtained by the
 same method as in the proof of Corollary~\ref{cor:double:cover:complex}.

\else
\fi
\end{document}

\typeout{get arXiv to do 4 passes: Label(s) may have changed. Rerun}